\documentclass[a4paper,11pt]{article}
\pdfoutput=1 

\usepackage{jheppub} 
\makeatletter
\gdef\@fpheader{}
\makeatother

\usepackage[T1]{fontenc} 

\usepackage{amsthm}
\usepackage[T1]{fontenc} 
\usepackage{dsfont}
\usepackage{amsmath}
\usepackage{bbm}
\usepackage[capitalize]{cleveref}
\usepackage{physics}
\usepackage[normalem]{ulem}
\usepackage{caption}
\usepackage{subcaption}
\usepackage{adjustbox}
\usepackage{comment}
\usepackage{cleveref}
\usepackage[section]{placeins}
\usepackage{mathrsfs}

\newtheorem{theorem}{Theorem}[section]
\newtheorem{assumption}{Assumption}[section]
\newtheorem{claim}{Claim}[section]
\newtheorem{definition}{Definition}[section]
\newtheorem{lemma}{Lemma}[section]

\newtheorem{primitive}[theorem]{Primitive}

\newcommand{\defi}{:=}
\DeclareMathOperator{\supp}{supp}
\DeclareMathOperator{\spn}{span}
\newcommand{\north}{{S_{N}}}
\newcommand{\south}{{S_{S}}}
\newcommand{\east}{{S_{E}}}
\newcommand{\west}{{S_{W}}}
\newcommand{\Csys}{{\mathcal{S}}}
\newcommand{\Cnorth}{{\mathcal{S}_{N}}}
\newcommand{\Csouth}{{\mathcal{S}_{S}}}
\newcommand{\Ceast}{{\mathcal{S}_{E}}}
\newcommand{\Cwest}{{\mathcal{S}_{W}}}
\newcommand{\Id}{\mathds{1}}
\newcommand{\vecspace}{\mathcal{M}}
\newcommand{\vecspaceop}{M}

\newcommand{\Xl}{X_{\mathrm{L}}}
\newcommand{\tXl}{\tilde{X}_{\mathrm{L}}}
\newcommand{\Xb}{X_{\mathrm{blk}}}
\newcommand{\tXb}{\tilde{X}_{\mathrm{blk}}}
\newcommand{\zc}{0_{\mathrm{CFT}}}
\newcommand{\Xc}{X_{\mathrm{CFT}}}
\newcommand{\tXc}{\tilde{X}_{\mathrm{CFT}}}

\newcommand{\Xs}{X_{\mathrm{sim}}}
\newcommand{\tXs}{\tilde{X}_{\mathrm{sim}}}
\newcommand{\opnorm}[1]{||#1||}
\newcommand{\dnorm}[1]{||#1||_\diamond}

\newcommand{\angcutoff}{\theta}

\title{Holography as a resource for non-local quantum computation}

\author[a]{Kfir Dolev}
\author[a]{Sam Cree}

\affiliation[a]{Stanford Institute for Theoretical Physics, Stanford University, 382 Via Pueblo Mall, Stanford, CA 94305-4060, U.S.A.}

\emailAdd{Dolev@stanford.edu}

\abstract{
If two parties share sufficient entanglement, they are able to implement any channel on a shared bipartite state via non-local quantum computation -- a protocol consisting of local operations and a single simultaneous round of quantum communication.
Such a protocol can occur in the AdS/CFT correspondence, with the two parties represented by regions of the CFT, and the holographic state serving as a resource to provide the necessary entanglement.
This boundary non-local computation is dual to the local implementation of a channel in the bulk AdS theory.
Previous work on this phenomenon was obstructed by the divergent entanglement between adjacent CFT regions, and tried to circumvent this issue by assuming that certain regions are irrelevant.
However, the absence of these regions introduces violent phenomena that prevent the CFT from implementing the protocol.
Instead, we resolve the issue of divergent entanglement by using a finite-memory quantum simulation of the CFT.
We show that any finite-memory quantum system on a circular lattice yields a protocol for non-local quantum computation.
In the case of a quantum simulation of a holographic CFT, we carefully show that this protocol implements the channel performed by the local bulk dynamics.
Under plausible physical assumptions about quantum computation in the bulk, our results imply that non-local quantum computation can be performed for any polynomially complex unitary with a polynomial amount of entanglement.
Finally, we provide a concrete example of a holographic code whose bulk dynamics correspond to a Clifford gate, and use our results to show that this corresponds to a non-local quantum computation protocol for this gate.
}

\begin{document} 
\maketitle
\flushbottom

\section{Introduction }

In position-based cryptography \cite{Buhrman_2014,PBQC,practical-PBQC,tagging,tagging2}, individuals use their spacetime position as cryptographic credentials.
In the simplest task, position verification, a prover must prove to a verifier that they are in a specific spacetime region.
They do so by performing a local computation on signals sent by the verifier, and returning the outputs back to the verifier.
The protocol is designed so that only someone in the authorized location is capable of locally performing the computation without violating causality.
This may be used, for example, to ensure that only someone inside a trusted facility is able to read a message.
However, two dishonest provers can collaborate to non-locally simulate this local computation using \emph{non-local quantum computation} (NLQC) -- a quantum task in which two parties implement a joint channel on the systems they hold when limited to one round of communication \cite{tagging,tagging2,T-gate-protocol,QuantumPseudoTelepathy,PBQC,practical-PBQC,new-ideas,beating-classical,banach-spaces,GH,Logspace-routing,loss-tolerant,conundrum,insecurity,Code-routing,Unruh2014QuantumPV,homomorphic-encryption,Chitambar,Broadbent_2016,Yu_2012,Chakraborty_2015} -- provided they share enough resources to do so.

The question thus remains to characterize exactly how resource-intensive NLQC is.
Only sufficiently simple computations are relevant for position-based cryptography, since a computation that is too complex cannot be performed in time even by an honest prover without jeopardizing the causality constraints.
Thus if all low-complexity computations can be efficiently performed non-locally -- i.e.\ with only a polynomial amount of resources --
then practical, secure position verification is impossible.

There exist some partial results about characterizing the resource  requirements for NLQC, which focus on quantifying the number of Bell pairs required.
The tightest known resource requirement upper bound for general unitaries was derived in Ref.~\cite{beigi-koenig} by giving an explicit general purpose protocol.
This protocol consumes a number of Bell pairs exponential in the total number of qubits on which the unitary acts.
More efficient protocols can sometimes be found by exploiting the structure of the unitary \cite{T-gate-protocol,Small-lightcones} or restricting it to a particular class \cite{Code-routing,Garden-Hose}.
Linear lower bounds are known for specific tasks \cite{monogamy,Alex-complexity,Bluhm_2022}. For some unitaries, a loglog lower bound in terms of complexity is given in Ref.~\cite{Alex-complexity}.

The anti-de Sitter/conformal field theory (AdS/CFT) correspondence \cite{Maldacena1997,Witten1998} is a family of holographic dualities \cite{Hooft1993,Susskind1995} that relate a bulk theory of quantum gravity in asymptotically AdS spacetime to a CFT living on its boundary. 
A notable pattern in these dualities is that information-theoretic quantities in the boundary are related to geometric quantities in the bulk \cite{Ryu_2006,Hubeny_2007,Alex-connect-wedge-theorem,Reflected-Entropy,ER=EPR}.
It was recently argued that the boundary dynamics of certain holographic systems can be interpreted as executing an NLQC protocol that implements the bulk dynamics \cite{Alex-tasks}.
This observation has yielded a number of results and conjectures about both AdS/CFT and NLQC \cite{Alex-tasks, Alex-connect-wedge-theorem,Alex-complexity}, such as a method of placing constraints on bulk dynamics, and a tension between the existence of universal quantum computers in holography and the possibility of secure position-based cryptography. 
The tension arises from the claim that if such a computer could be placed in the bulk, the boundary could perform any unitary non-locally with an amount of entanglement scaling at most polynomially with the complexity.
This would dramatically improve upon the general purpose protocol of Ref.~\cite{beigi-koenig} for simple unitaries.
If this is demonstrated rigorously, then all simple unitaries could be efficiently implemented non-locally, and thus secure position verification would be impossible.
However, it remains to establish a precise connection between this behavior and the task of NLQC, as previous attempts have been non-rigorous, and relied on significant assumptions whose validity we question in this work.

Here we provide a more careful and detailed demonstration of this connection without relying on those assumptions.
We formally establish the connection between holography and NLQC by carefully showing that it is possible to extract a protocol from a simulation of the boundary CFT that implements the channel associated with the local bulk dynamics.
The simulation only needs to accurately capture simple correlation functions of certain operators, preserve locality of operators, and satisfy an approximate light-cone, in ways that we make precise.
Our protocol depends only on the initial CFT simulation state and its Hamiltonian, ensuring that it captures the particular mechanism that AdS/CFT uses to accomplish the task; this is in contrast to the suggestion proposed in Ref.~\cite{Alex-connect-wedge-theorem}, which we show in \cref{sec: relation-to-previous-work} requires use of operations not performed by the CFT itself.
It also explicitly uses finite-memory\footnote{i.e.\ a finite number of qubits} quantum systems, rather than the infinite-dimensional systems associated with regions of a field theory.

We start by introducing a general-purpose ``many-body'' NLQC protocol for any finite memory $(1+1)$-dimensional quantum system living on a circular lattice.
It uses an initial state of the system as the resource, and implements some computation using the local dynamics of the system.
Such systems mimic the causal structure of a $(1+1)$-dimensional CFT since the locality of the Hamiltonian gives rise to an emergent light-cone described by the Lieb-Robinson velocity \cite{Lieb-Robinson}. 
When this protocol is applied to a quantum system simulating the relevant features of a holographic CFT, the computation that is implemented is exactly the one corresponding to the bulk dynamics of interest (see \cref{fig:intro-dream-within-a-dream}). 

It is not clear how to simulate a CFT with a $(1+1)$-dimensional system suitable for NLQC protocol extraction, such that the underlying computation is the same.
A perfect simulation of a field theory with a finite-dimensional system is impossible, because subregions of quantum field theories have divergent entropy.
Fortunately, any CFT is renormalizable, meaning that only a finite subset of its degrees of freedom are ever relevant to a particular phenomena. Thus in order to preserve the NLQC performed by a holographic CFT, a simulation of it only needs to capture the subset of degrees of freedom relevant to the bulk computation, and their relevant dynamics.
Thus we look for a minimal set of CFT features the simulation must reproduce in order for the associated protocol to implement the bulk dynamics.
We find these boil down to 1.\ preserving the locality structure of the CFT, 2.\ reproducing certain low-order correlation functions, and either 3a.\ an approximate light cone is satisfied even for operators without counterparts in the CFT (e.g.\ due to a Lieb-Robinson velocity \cite{Lieb-Robinson}), or 3b.\ Cauchy slices\footnote{A Cauchy slice is a surface which every time like curve without end points crosses exactly once.} can be locally deformed. 
The correlation functions are generally captured only inexactly, up to some error quantified by a dimensionless parameter $\delta$.
The approximate light-cone means that the commutator of two operators is bounded proportional to $a \exp(-b(d-t))$, where $d$ and $t$ are the space and time separations of the operators, and $a$ and $b$ are error parameters.
As the resolution of the simulation is increased, $b$ should diverge to infinity so that the light-cone becomes exact, and $a$ should not increase so quickly that the bound becomes trivial in that limit.

We argue for the existence of such a simulation of a holographic CFT by looking at known examples of other CFT simulations.
In particular, the simulation method of Ref.~\cite{Osborne-Stottmeister} makes use of a rigorous definition of a CFT using a technique known as operator-algebraic renormalization \cite{Morinelli_2021,Brothier_2019,Stottmeister-Vincenzo-etal}, and captures at least features 1, 2 and 3b.
On the other hand, Ref.~\cite{Hamma_2009} considers a lattice system whose low energy limit reproduces a continuum $\mathrm U(1)$ gauge theory in a way that captures features $1,2$, and 3a sufficiently well. 
No rigorous results are known about simulating strongly coupled QFTs (such as holographic CFTs) due to technical challenges, but it is widely believed that similar techniques should apply to them.

Our main result is the following.
Let $\mathcal{U}_{L}$ be a channel representing the bulk dynamics that we construct an NLQC protocol for.
Alice and Bob will use an encoding channel $\mathcal{E}_0$ to input their respective systems into the holographic simulation.
We show that they can then apply an approximation $\mathcal{V}_{sim}$ of the local time evolution in the delocalized form required for NLQC, i.e.\ using local operations and one round of communication.
Finally, they use a recovery channel $\mathcal{R}_\tau$ to decode their output systems from the simulation.
Aside from the simulation's error parameters $\delta$, $a$ and $b$, there will also be error arising from the inexactness of the semiclassical description of the bulk.
The emergence of the bulk is only exact in the limit of weak gravity, that is, as Newton's constant $G_N\to 0$.
We show that the error in the application of $\mathcal{U}_L$ goes as
\begin{align*}
    \dnorm{\mathcal{R}_\tau\circ\mathcal{V}_{sim}\circ\mathcal{E}_0-\mathcal{U}_{L}}\leq c_{CFT}\sqrt{G_N}+c_{sim}\sqrt{\delta}+c_{spread} a \exp(-b \Delta \tau),
\end{align*}
where $c_{CFT}$, $c_{spread}$, $c_{sim}$ and $\Delta \tau$ are $O(1)$ positive parameters, and $G_N$ is the Newton's constant of the holographic theory.
All these errors can be made arbitrarily small by increasing the resolution scale of the simulation, and decreasing Newton's constant; however these will both be associated with increases in the amount of entanglement required.
Loosely speaking, increasing the resolution scale means more physical sites, and decreasing $G_N$ means more qudits per site.
This gives some intuition for why the necessary entanglement increases, as the Hilbert space is simply larger.
\begin{figure}
\centering
\includegraphics[width=0.7\textwidth]{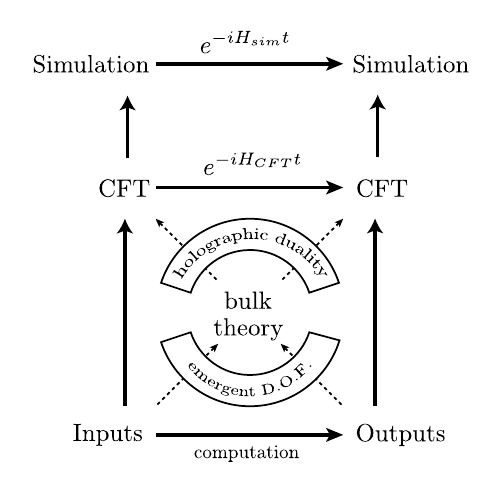}
\caption{Relationship between the various systems we consider. Systems connected by solid lines are related by CPTP maps.  Systems connected by dotted lines are related by equality of correlation functions, which is necessary because the bulk does not have a clear Hilbert space structure. The input/output as well as the simulation systems have finite memory and are thus easily manipulated. The computation on the input/output systems describes the bulk dynamics of interest, while the simulation describes the degrees of freedom in the CFT that encode these dynamics non-locally.}
\label{fig:intro-dream-within-a-dream}
\end{figure}

The entanglement associated with a holographic resource state depends on the parameters of the CFT theory, and the geometry of the bulk region in which the local computation is performed.
These parameters can be tuned to allow increasingly complex unitaries in the bulk, at the expense of an increase in entanglement.
We present a non-rigorous physical argument for an assumption about the ability to perform quantum computations in the bulk of AdS.
Under this assumption, as well as the assumption that a suitable simulation method for holographic CFTs exists, the entanglement cost of performing NLQC scales polynomially with the time and space complexity of the unitary.
Such a result would imply that position verification is fundamentally insecure, as any polynomial-time verification protocol can be efficiently spoofed by a pair of dishonest collaborators (i.e.\ with polynomial entanglement).

As an explicit example of an NLQC protocol extracted from holography, we describe a holographic code with bulk dynamics that implements an arbitrary $n$-qubit Clifford gate.
The many-body NLQC protocol then shows how to use the code as a resource to perform that Clifford gate non-locally.
The dynamics are achieved by a combination of discrete translations such as in Ref.~\cite{Osborne-Stiegemann} and the fact that the holographic code introduced in Ref.~\cite{CSS-holographic-codes} admits transversal Clifford gates.
Interestingly, the code space in which the incoming and outgoing systems live must be different for the construction to work.

Previous attempts at deriving an upper bound on entanglement for holographic NLQC protocols \cite{Alex-complexity} have assumed that entanglement in certain subregions of the CFT do not contribute to the protocol.
However, we argue that the existence of these regions prevents violent phenomena in the bulk and thus they are essential for a protocol to arise from the CFT dynamics. 

The paper is organized as follows.
In \cref{sec: background} we give background on NLQC, AdS/CFT and the connection between them.
In \cref{sec: many-body-nlqc} we derive a ``many-body'' NLQC protocol for generic $(1+1)$-dimensional local quantum systems.
In \cref{sec: simulating-holographic-cfts} we introduce a notion of a simulation of the CFT that replicates a minimal set of features to guarantee that the simulation ``captures'' the NLQC.
In \cref{sec: NLQC-via-holographic-states} we apply the many-body NLQC protocol to the CFT simulation and show that the computation it implements is equal to the bulk dynamics of interest.
In \cref{sec:computations}, we discuss which computations can be realized as bulk dynamics, and under an assumption about feasability of quantum computation in AdS$_3$, we argue that any polynomially complex unitary can be implemented with polynomial entanglement.
In \cref{sec: toy-model} we give the toy model, and in \cref{sec: relation-to-previous-work} we discuss the connections to previous work.
Finally in \cref{sec: discussion} we discuss implications for holography, NLQC, and future directions.

\section{Background } \label{sec: background}
\subsection{Non-local quantum computation }\label{subsec: NLQC}

A non-local quantum computation (NLQC) task involves two parties, Alice and Bob.
At the start, Alice holds quantum systems $A$ and $R_A$ while Bob holds $B$ and $R_B$.
These start out in a state $\ket{\psi}_{ABE}\otimes\ket{\phi}_{R_AR_B}$.
The state $\ket{\psi}_{ABE}$ is unknown to Alice and Bob, but they are free to choose the ``resource'' state $\ket{\phi}_{R_AR_B}$.
$E$ is a reference system which is unavailable to either party.
Alice and Bob are given an isometry $V_{AB\rightarrow\tilde{A}\tilde{B}}$\footnote{This can be generalized to any CPTP map, but we work with isometries for notational convenience.} acting on the system $AB$, which they are to perform in the following restricted matter. First, Alice performs a quantum channel of her choice $\mathcal{N}_{AR_A\rightarrow K_AM_A}$ on the systems she holds, producing a message system $M_A$ to be sent to Bob and a system $K_A$ that she keeps.
Similarly Bob performs $\mathcal{N}_{BR_B\rightarrow K_BM_B}$. Alice and Bob then exchange $M_A$ and $M_B$, but critically, these two systems do not interact during the exchange, which is why the computation is ``non-local''. Alice then performs a channel $\mathcal{N}_{K_AM_B\rightarrow \tilde{A}}$ and Bob the channel $\mathcal{N}_{K_BM_A\rightarrow \tilde{B}}$. They succeed in the task if the final state of $\tilde{A}\tilde{B}E$ is $V_{AB\rightarrow\tilde{A}\tilde{B}}\ket{\psi}_{ABE}$. This procedure is illustrated in \cref{fig:NLQC}.

\begin{figure}[t]
\centering
\includegraphics[width=0.6\textwidth]{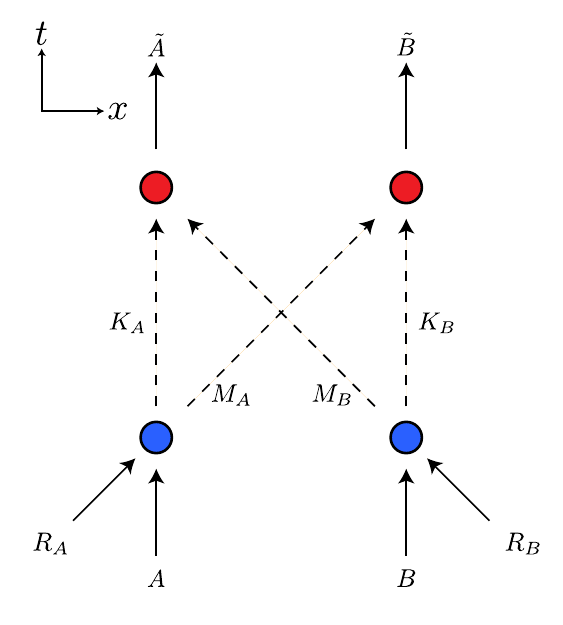}
\caption{A non-local quantum computation (NLQC) task. The circles on the left and right denote points at which Alice and Bob act respectively.}
\label{fig:NLQC}
\end{figure}

\subsection{AdS/CFT}
\label{subsec: AdS-CFT}

AdS/CFT is a general family of holographic dualities which relate a bulk asymptotically AdS theory of quantum gravity in $D+1$ dimensions to a CFT in $D$ dimensions living on its boundary. The gravity theory is referred to as the ``bulk'' while the CFT is referred to as the ``boundary''. A pedagogical introduction to the subject can be found in Ref.~\cite{Harlow-TASI}, and previous work on its connection with non-local computation is also generally accessible \cite{Alex-tasks,Alex-connect-wedge-theorem,Alex-complexity}. The aspects of AdS/CFT we need in order to make the connection with NLQC are 1) entanglement wedge reconstruction, which says how bulk subregions are encoded into boundary subregions, and 2) the causal structure discrepancy between bulk and boundary mentioned above, which motivates the connection. 

\subsubsection{Entanglement wedge reconstruction} \label{sss:EWR}

The AdS/CFT correspondence states that expectation values of bulk observables can be mapped to those of boundary observables.
For any boundary state with a well-defined dual bulk geometry, subregions of that bulk geometry are dual to subregions of the fixed boundary geometry.
Specifically, any spatial boundary region $R$ can reproduce
any correlation functions of at most an $O(1)$ number of field excitation operators supported in a corresponding bulk region known as its \emph{entanglement wedge}, $E(R)$. 

The fact that the bulk geometry is state-dependent means that this boundary operator is only guaranteed to have a local bulk interpretation for states that are similar to the initial one, i.e.\ if they have the same or similar bulk geometry.
The region $E(R)$ is best understood using the maximin prescription, see Ref.~\cite{Wall_2014,geoff} for more details.
Technically, it is a spacetime region that can be defined as the domain of dependence of a particular spatial bulk region -- that is, a $(2+1)$-dimensional spacetime region; however, we just refer to the spatial region as the entanglement wedge for simplicity. 

For our purposes we just note the following features (see also \cref{fig:entanglement-wedge-reconstruction}):
\begin{itemize}
\item The component of the boundary of the entanglement wedge $E(R)$ that lives on the boundary of the theory is precisely $R$.
\item For states near vacuum AdS, the entanglement wedge of any contiguous half of the boundary is the half of the bulk given by its convex hull, i.e.\ the half disk determined by the half circle.
\end{itemize}

Using entanglement wedge reconstruction, Ref.~\cite{Almheiri-Harlow} showed that the CFT acts as an error-correcting code from bulk to boundary with information living in $E(R)$ robust against erasures of $R^c$.
Rather than one fixed code space, however, a code space is determined by which bulk operators are probed.
The code space is created by acting on a semiclassical CFT state $\ket{\Omega}_{CFT}$, e.g.\ the ground state, with the boundary operators that reconstruct bulk operators of interest. 

As an example, consider a scenario in which a low-energy particle capable of storing a qubit follows a localized trajectory in the bulk as in \cref{fig:entanglement-wedge-example}.
Suppose we would like to give a boundary description of the particle along a particular point in its trajectory for any state of the qubit.
We may pick a Cauchy slice that intercepts this point, and consider the state of the boundary $\ket{\psi}$ at that time, when the qubit is set to some state.
Suppose the point is in the entanglement wedge of a boundary region $R$.
For each operator $X^i\in L(\mathbb{C}^{2})$ acting on the qubit, there is a bulk operator $X^i_{blk}$ localized at this point that implements it.
This in turn may be realized as a boundary operator $X^i_{CFT}$ supported on $R$.
The codespace is then given by $\text{span}\{X^i_{CFT}\ket{\psi}\}$, and there exists an encoding isometry $V:\mathbb{C}^2\rightarrow\mathcal{H}_{CFT}$ whose image is this code space and such that $VX^i=X^i_{CFT} V$.

\begin{figure}
\centering
\includegraphics[width=0.5\linewidth]{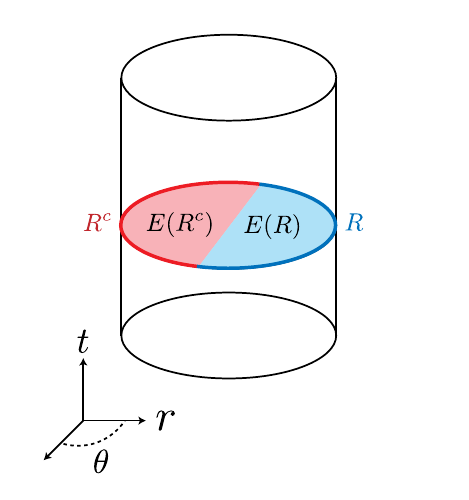}
\caption{Entanglement wedge reconstruction for vacuum AdS. A boundary spatial subregion $R$ (blue line) is dual to its entanglement wedge, a bulk spatial subregion (blue region) $E(R)$. Since $R$ is half of the boundary, $E(R)$ is half of the disk. Because we have specified a bulk foliation (constant $t$ in cylindrical coordinates), we can sensibly speak of a single time coordinate whose evolution corresponds simultaneously to both the bulk and boundary, which we henceforth do.}
\label{fig:entanglement-wedge-reconstruction}
\end{figure}

\begin{figure}
\centering
\includegraphics[width=0.3\linewidth]{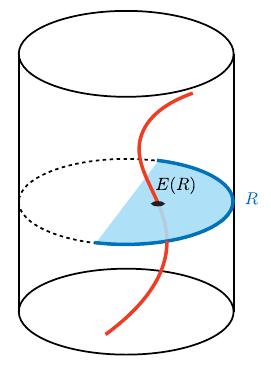}
\caption{Bulk trajectory of a system capable of storing quantum information. We may consider the depicted Cauchy slice as an error-correcting code protecting against the erasure of $R^c$ which encodes the state of the particle at the spacetime point depicted by the black dot.}
\label{fig:entanglement-wedge-example}
\end{figure}

\subsubsection{Causal structure discrepancy}\label{subsubsec:causal-structure-discrepancy}

One striking feature of this duality is a discrepancy between the causal structures of the bulk and boundary spacetime. Let $p,q$ be points in a spacetime. Define $p\prec q$ to mean that there exists a causal curve from $p$ to $q$ within that spacetime. 
Let $J^+(p)\equiv\{q| p\prec q\}$ and $J^-(p)\equiv\{q| q\prec p\}$ be the future and past of point $p$ respectively.
These are the set of all bulk points that can be affected by or that affect the point $p$, respectively.
Let $J^\pm$ refer to these sets in the bulk spacetime and $\hat{J}^\pm$ in the boundary spacetime.
The simplest example of this discrepancy occurs for the case with a $(1+1)$-dimensional boundary.
One may choose four spacetime points on the boundary, $c_0,c_1,r_0,r_1$, such that \cite{Heemskerk_2009,Gary_2009,Maldacena-Duffin}
\begin{align}
P&\equiv J^+(c_0)\cap J^+(c_1)\cap J^-(r_0)\cap J^-(r_1)\neq \varnothing \label{eq: bulk-interaction-region}\\
\hat{P}&\equiv \hat{J}^+(c_0)\cap \hat{J}^+(c_1)\cap \hat{J}^-(r_0)\cap \hat{J}^-(r_1)= \varnothing.\label{eq: boundary-interaction-region}
\end{align}
This is illustrated in \cref{fig:causal-structure-discrepancy}. 

\begin{figure}
\centering
\begin{subfigure}[t]{0.30\textwidth}
\centering
\includegraphics[width=\textwidth]{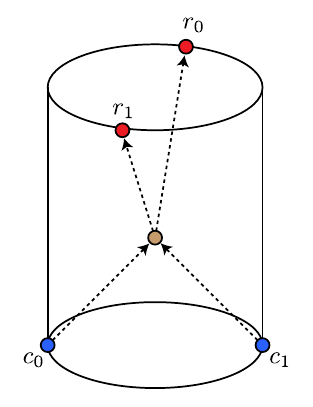}
\caption{The set of possible local interaction points $P \neq \varnothing$, i.e.\ two observers starting at $c_0$ and $c_1$ can meet up in the bulk, then return to $r_0$ and $r_1$.
}
\label{fig: bulk-task}
\end{subfigure}
\hfill
\begin{subfigure}[t]{0.57\textwidth}
\centering
\includegraphics[width=\textwidth]{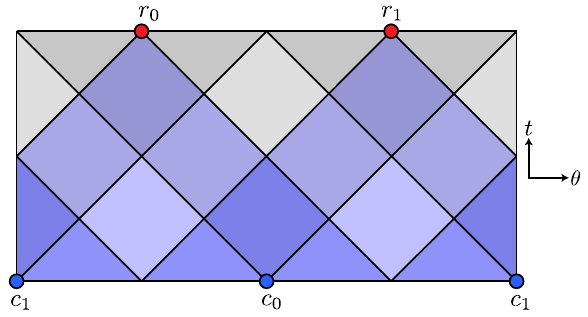}
\caption{The region of possible local interaction points $\hat{P}=\varnothing$, i.e.\ when constrained to the boundary these same observers cannot meet up before reaching $r_0$ and $r_1$.}
\label{fig: boundary-task}
\end{subfigure}
\hfill
\caption{Difference between bulk and boundary causal structure. Information starting at $c_0$ and $c_1$ and ending at $r_0$ and $r_1$ can interact locally in the bulk, but the dual boundary description must be non-local in some sense.}

\label{fig:causal-structure-discrepancy}
\end{figure}

\section{Non-local computation in generic local $(1+1)$-dimensional systems }\label{sec: many-body-nlqc}

In this section we introduce a class of NLQC protocols inspired by AdS/CFT, but which applies more broadly to general quantum systems that live on circular lattices.
Given such a system, the protocol performs some computation non-locally using the initial state as a resource. 
For any given computation there exists a system such that this protocol implements it\footnote{This is done by embedding the general purpose NLQC protocol of Ref.~\cite{beigi-koenig} directly into the Hamiltonian.}. 
Another interesting question is for a family of systems, such as those with a given Hilbert space dimension, what corresponding family of computations can be non-locally performed in this way.
Later we consider the family of systems that are \emph{holographic}, where we have more to say about this class.

\subsection{Many-body NLQC}

We present the precise claim here before elaborating on the relevant objects in the remainder of the section.
Essentially, we extract an NLQC protocol for any combination of initial state and dynamics for a many-body system on a circle, and any choice of how to encode/decode the input/output NLQC registers into/out of that many-body system.

\begin{figure}
\centering
\includegraphics[width=0.7\textwidth]{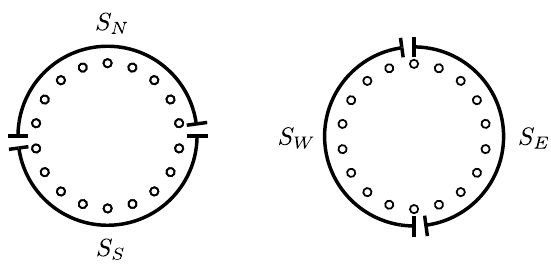}
\caption{A quantum many body system living on a circle. Each site (small circle) is associated with a register $S_\phi$ with $\phi$ an angular coordinate, and these are grouped into two different partitions. The labels $N,S,W,E$ are named after the directions on a compass.}
\label{fig:circle-regularization}
\end{figure}

\begin{definition}[Pseudo-bulk dynamics] \label{def:pbd}
Given the following objects,
\begin{itemize}
\item A quantum many-body system $S$ on a circle with $d$-dimensional subsystems labelled by angular coordinates as $S_\angcutoff,S_{2\angcutoff},...,S_{2\pi}$,
\item Two partitions of $S$, $\west \east$ and $\north \south$, where (see also \cref{fig:circle-regularization}) 
\begin{align*}
\west &= S_{[-\pi,0)}\\
\east &= S_{[0,\pi)}\\
\north &= S_{[-\pi/2,\pi/2)}\\
\south &= S_{[\pi/2,-\pi/2)}
\end{align*}
and by $S_{[\phi_1,\phi_2)}$ we mean the collection of $S_\phi$ systems such that $\phi\in [\phi_1,\phi_2)$ and all angles are defined modulo $2\pi$,
\item A unitary $U$ that acts on the full system $S$,
\item An initial state $\ket\psi$ of the system $S$,
\item A pair of $n/2$-qubit systems $A$ and $B$,
\item Encoding maps $\mathcal{N}_{A\west \to \west}$ and $\mathcal{N}_{B\east \to \east}$ (both CPTP),
\item Decoding maps $\mathcal{D}_{\north \to \tilde{A}}$ and $\mathcal{D}_{\south \to \tilde{B}}$ (both CPTP),
\end{itemize}
we define the \emph{pseudo-bulk dynamics} map as
\begin{equation}
\mathcal{B}(\rho_{AB}) = \mathcal{D}_{\north \to \tilde{A}}\otimes \mathcal{D}_{\south \to \tilde{B}} \left[ U\bigg(\mathcal{N}_{A\west\rightarrow\west}\otimes\mathcal{N}_{B\east\rightarrow\east}(\rho_{AB} \otimes \ketbra{\psi}_S)\bigg)U^\dagger\right].
\end{equation}
\end{definition}
This map is clearly CPTP. We return shortly to the discussion of appropriate choices for encoding and decoding maps.
First, one additional constraint on $U$ is needed before we can extract an NLQC protocol that implements these pseudo-bulk dynamics -- a notion of locality we refer to as the \emph{approximate spread} of a unitary, which measures the spatial extent of its causal influence.
Let the \emph{spread} of a unitary $U$ be the smallest distance $s$ such that for any local operator $A$ at site $S_\phi$, the support of $UAU^\dagger$ is contained in the interval $(\phi-s,\phi+s)$.
Typically we think of the dynamics $U$ as arising from some one-parameter family of unitaries $U(t)$, specifically generated by a time-independent Hamiltonian.
Assuming a non-trivial Hamiltonian, the spread immediately saturates to its maximum value of $\pi$ for any $t>0$; however the results of Lieb and Robinson \cite{Lieb-Robinson} provide an approximate notion of spread that grows much more steadily.
An equivalent notion to an operator $O$ being supported within an interval is that it commutes with any operator $O'$ supported \emph{outside} this interval, i.e.\ $[U(t) O U^\dagger(t),O']=0$.
\emph{Approximate spread} just amounts to relaxing the support requirement to only demand that all such commutators are small in the operator norm\footnote{Specifically, it would be parameterized by some $\epsilon$ that upper bounds these norms.}. 
The approximate spread increases linearly with $t$, and for the remainder of this section, we normalize the time co-ordinate such that this linear coefficient (the Lieb-Robinson velocity) is one.

\begin{claim}
Consider a set of objects for which the pseudo-bulk dynamics is defined.
Then as long as the approximate spread of $U$ is at most $2\pi/8$, there is an NLQC protocol implementing the pseudo-bulk dynamics using the initial state $\ket\psi_S$ as the resource state, with Alice holding $\west$ and Bob holding $\east$.
\end{claim}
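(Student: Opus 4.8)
The plan is to exhibit an explicit protocol and check that it fits the NLQC template of \cref{subsec: NLQC} with $\ket\psi_S$ (split as $\west$ to Alice, $\east$ to Bob) as the resource. The engine is a light-cone localization lemma read off from the spread bound: if $U$ has spread at most $s=2\pi/8$ (and $U$, $U^\dagger$ have the same spread), then for any region $X$ there is a unitary $U_X$ supported on the $s$-enlargement $X^{+s}$ with $\operatorname{Tr}_{X^c}(U\rho U^\dagger)=\operatorname{Tr}_{X^c}(U_X\rho U_X^\dagger)$ for every input $\rho$, up to an error set by the approximate-spread parameter and vanishing in the exact case. This holds because $U^\dagger O U$ is supported in $X^{+s}$ for every operator $O$ on $X$, so the reduced output on $X$ depends only on the input restricted to $X^{+s}$ and on a unitary localized there. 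First I would apply this with $X=\north$ and $X=\south$, obtaining localized unitaries $U_\north,U_\south$ supported on the light cones $L_\north\defi \north^{+s}=S_{(-3\pi/4,3\pi/4)}$ and $L_\south\defi\south^{+s}=S_{(\pi/4,7\pi/4)}$ that reproduce exactly the reduced evolved states on which the decoders $\mathcal{D}_{\north\to\tilde A}$ and $\mathcal{D}_{\south\to\tilde B}$ act.

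Second, I would read off the communication pattern from this geometry. The two cones cover the circle and meet only in the strips $O_1\defi S_{(\pi/4,3\pi/4)}\subset\east$ and $O_2\defi S_{(-3\pi/4,-\pi/4)}\subset\west$, each of angular width $\pi/2$; the bound $s\le 2\pi/8$ is precisely what keeps these strips disjoint and confined to neighbourhoods of the $\north$–$\south$ boundaries at $\pm\pi/2$. To evaluate $U_\north$ and decode $\tilde A$, Alice needs all of $L_\north$: she already owns $L_\north\cap\west=S_{[-3\pi/4,0)}$, Bob can simply send her the non-overlap piece $S_{[0,\pi/4)}$ of $\east$, but the overlap $O_1$ cannot be sent outright because Bob needs it himself for $U_\south$. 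The situation for $O_2$ is the mirror image, with Alice sending Bob $S_{[3\pi/4,\pi)}$-side data directly and $O_2$ obstructed. The parts of each cone lying strictly inside one party's region are thus supplied locally, and only the two shared strips require genuinely non-local treatment.

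The obstacle is exactly this no-cloning at the overlaps: $O_1$ is a single quantum register held by Bob that is required both by Alice's $U_\north$ and by Bob's $U_\south$, and symmetrically $O_2$ is held by Alice and needed by both. Compounding this, the target does not factorise across $\north$ and $\south$ — the map $\mathcal{B}=(\mathcal{D}_{\north\to\tilde A}\otimes\mathcal{D}_{\south\to\tilde B})\circ\operatorname{Ad}_U\circ(\mathcal{N}_{A\west\to\west}\otimes\mathcal{N}_{B\east\to\east})$ generally leaves $\tilde A$ and $\tilde B$ entangled — so the protocol must reproduce the correct joint state, not merely the two marginals. I would resolve both issues by isolating the genuinely non-local work to the two bounded strips and implementing it with a single-round instantaneous-nonlocal-computation subroutine (the general-purpose protocol of Ref.~\cite{beigi-koenig}, or Vaidman's scheme), which realises an arbitrary bipartite operation in one simultaneous round at the cost of additional shared entanglement. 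The classical teleportation outcomes generated inside the subroutine are bundled into the same one-round messages and consumed by the final local channels, so all corrections are applied during decoding rather than requiring a second round.

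Finally I would package everything into the NLQC form and verify correctness. Alice's first channel $\mathcal{N}_{A\west\to K_AM_A}$ applies $\mathcal{N}_{A\west\to\west}$ and then outputs as $M_A$ the directly-sent $\west$ data together with her share of the subroutine resources and outcomes, keeping the rest as $K_A$; Bob's channel is the mirror image with $M_B$ carrying the $S_{[0,\pi/4)}$ data and his subroutine share. After the single exchange, Alice's final channel assembles $L_\north$ in her lab (applying the overlap corrections), applies $U_\north$, and decodes with $\mathcal{D}_{\north\to\tilde A}$; Bob's does the same with $U_\south$ and $\mathcal{D}_{\south\to\tilde B}$. Correctness then follows by combining the localization lemma (which certifies the correct reduced evolved states on $\north$ and $\south$) with correctness of the subroutine (which certifies the correct joint correlations routed through the shared strips), giving $\mathcal{B}$ up to the approximate-spread error. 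I expect the main obstacle to be exactly this last point: certifying that a \emph{single} round reproduces the entangled joint output on $\tilde A\tilde B$ and not only its marginals. It is this requirement that forces the overlap strips through the non-local subroutine, and it is the $s\le 2\pi/8$ spread bound — confining the overlaps to two disjoint, bounded strips held one-per-party — that makes such a one-round implementation possible while using $\ket\psi_S$ itself as the physical resource.
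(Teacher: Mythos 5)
There is a genuine gap, and it sits exactly where you flagged it: stitching the two localized unitaries into the global evolution. Your localization lemma (granting it in channel form; the ancilla-free unitary form you state is already not guaranteed for causal unitaries) only certifies that $U_\north$ and $U_\south$ reproduce the correct \emph{marginals} on $\north$ and $\south$. But $U_\north$ and $U_\south$ cannot even be composed --- each consumes the overlap strips $O_1,O_2$ as input --- and no combination of them equals $U$; indeed, for unitaries such as a lattice translation (small spread but nonzero index \cite{gross}) no product of operators localized to the two light cones equals $U$ at all, which is precisely why the paper does not attempt this. Since the decoders act on the joint state of the whole circle and $\tilde A\tilde B$ is generically entangled, matching marginals is insufficient, and your proposed repair --- routing the strips through a Beigi--K\"onig subroutine --- does not close the gap: you never identify \emph{which} bipartite map on the strips, composed with $U_\north$ and $U_\south$, yields $\mathrm{Ad}_U$, and no such map supported only on the strips exists because the obstruction is global. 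Worse, the subroutine consumes additional shared entanglement, so the resource state is no longer $\ket\psi_S$ but $\ket\psi_S$ plus extra Bell pairs; this contradicts the claim being proved, whose entire content (and on which the paper's later entanglement accounting rests) is that the many-body state itself is the resource. Pushed to its conclusion, your protocol could simply run Beigi--K\"onig \cite{beigi-koenig} on all of $\mathrm{Ad}_U$ and ignore the spread bound entirely, which trivializes the statement.

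The missing idea, which is how the paper's proof (\cref{subsubsec: decomp-via-limited-spread}, approximate case in \cref{app:spread}) evades both problems, is to adjoin an identical ancilla chain $S'$ in the product state $\ket{0}_{S'}$ --- locally preparable, hence a free resource --- and to localize $U_S\otimes U_{S'}^\dagger$ rather than $U$ itself, following Ref.~\cite{unitarity-plus-causality}. The operators $K_\phi = U_{S'}\Sigma_\phi U_{S'}^\dagger$ mutually commute, satisfy $\prod_\phi \Sigma_\phi \prod_\phi K_\phi = U_S\otimes U_{S'}^\dagger$, and each has support of width about $2s\leq \pi/2$, so each fits inside one of the four halves $\north,\south,\east,\west$ (with its primed partner). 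Grouping them and absorbing the swaps yields the exact four-factor decomposition \cref{eq: unitary-chopping}, whose factors Alice and Bob apply locally around the single exchange of quarters; this reproduces the full global evolved state, not merely two marginals, with $\ket\psi_S$ as the sole entangled resource. Note also that the $2\pi/8$ bound enters there to make each $K_\phi$ fit inside a half-circle, not (as in your geometry) to keep the two light-cone overlaps disjoint.
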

The bound on approximate spread is crucial because, as we soon show, it allows us to decompose the unitary as 
\begin{align}\label{eq: unitary-chopping}
U\approx U_\north U_\south U_\west U_\east
\end{align}
with the subscripts denoting the support of the unitaries.
With this decomposition, the protocol demonstrating the claim is rather straightforward.
Alice and Bob each use their respective encoding maps, $\mathcal{N}_{A\west\to \west}$ and $\mathcal{N}_{B\east\to \east}$.
Alice then acts on $\west$ with $U_\west$ while Bob acts on $\east$ with $U_\east$.
They then use their one round of communication to exchange systems so that Alice holds $\north$ and Bob holds $\south$.
Alice acts on $\north$ with $U_\north$ while Bob acts on $\south$ with $U_\south$.
Thus the unitary $U$ has approximately been applied to $S$.
They then apply their respective decoding maps $\mathcal{D}_{\north\to \tilde{A}}$ and $\mathcal{D}_{\south\to \tilde{B}}$.

Let us now return to the subject of choosing encoding and decoding maps. When we apply this construction to AdS/CFT, the constraint that the computation equal the bulk dynamics determines the maps. However, even for non-holographic systems a similar choice of maps has a natural interpretation. Suppose that that each $S_\phi$ has a subsystem $I_\phi$ whose dimension matches that of $A$ and $B$, and a (possibly overlapping) subsystem $F_\phi$ whose dimension matches that of $\tilde{A}$ and $\tilde{B}$. As an encoding map Alice can apply $\text{SWAP}_{AI_{-\pi/2}}$ and then trace out the $A$ system.
Bob can apply a similar operation at $\pi/2$.
To decode, they swap the $F$ systems at $0$ and $\pi$ with $\tilde{A}$ and $\tilde{B}$ respectively.
The pseudo-bulk dynamics then tell us how probes at the points $\pi/2$ and $-\pi/2$ (analogous to $c_0$ and $c_1$ in the previous section) affect the points $0$ and $\pi$ (analogous to $r_0$ and $r_1$), much like a four point function.
Alternatively, one could choose the decoding map to be whichever one yields the pseudo-bulk dynamics with the highest capacity, i.e.\ define the systems $\tilde{A}$ and $\tilde{B}$ such that they maximize information from $A$ and $B$.

Alice and Bob in fact have a bit more flexibility. If $U(t)$ has an approximate light-cone, they can effectively increase the time for which the system can evolve by applying some of the dynamics locally. For example, Alice can apply $U(t') \text{SWAP}_{AI_{-\pi/2}} U^\dagger(t')$ with $t'<\pi/2$ which by the definition of the light-cone property of $U(t')$, is still supported in Alice's region $\west$\footnote{If the light-cone is approximate Alice can truncate appropriately as we discuss more in \cref{subsec: chopping-boundary}.}.
This encoding map is the one we focus on in \cref{sec: NLQC-via-holographic-states}, as it also has a nice holographic interpretation.
Consider an initial state with a system near the centre of the bulk.
The above encoding channel amounts to rewinding time so that the system is at the boundary, swapping it with the input system $A$, and then letting it fall back towards the centre again. Alice can similarly apply $U^\dagger(t') \text{SWAP}_{AF_{0}} U(t')$ to gain additional time at the end.

\subsection{Decomposing the boundary unitary }\label{subsec: chopping-boundary}

A key step in showing that the CFT performs NLQC is in showing that the boundary time evolution can be decomposed into four components as in \cref{eq: unitary-chopping} -- in fact, this was also implicitly required in the protocol extraction method described in Ref.~\cite{Alex-complexity}.
Shortly, we give an argument for such a decomposition directly in terms of the properties of the CFT (or in fact, any QFT).
Specifically, one can find local operations that deform Cauchy slices within their domain of dependence.
While some QFT simulations can directly capture this feature \cite{Osborne-Stottmeister}, it appears to be quite a stringent requirement due to the number of possible deformations; we believe that a simulation can accurately capture the basic physics relevant for NLQC without preserving this more intricate feature. 
We then introduce an alternative decomposition using a property that should be easier for a simulation to capture -- namely that global time evolution does not allow superluminal signalling. 
Whether the former property implies the latter is an interesting question that we revisit in the discussion.

\subsubsection{Decomposition via Cauchy slices}
\begin{figure}[h]
\centering
\includegraphics[width=1.0\textwidth]{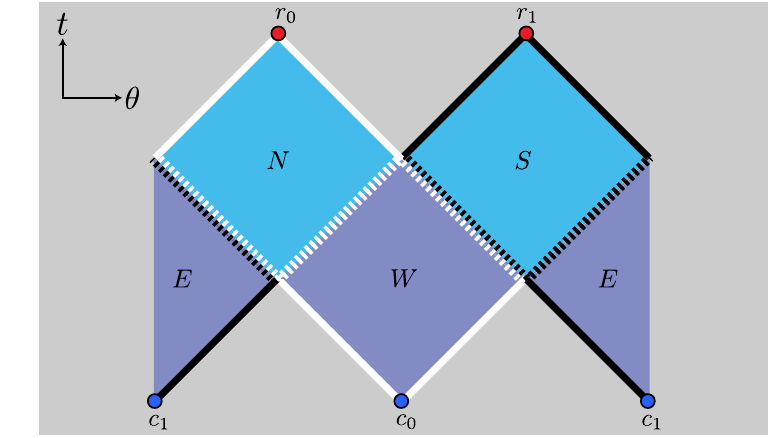}
\caption{Sketch of a many-body NLQC protocol using Cauchy slice deformation. White lines denote systems that Alice holds, while black lines denote systems that Bob holes. Dotted lines denote message systems, and change color to indicated they have been exchanged.}
\label{fig:our-setup}
\end{figure}
If we are dealing with a QFT, we can obtain a decomposition of the form \cref{eq: unitary-chopping} directly by deforming Cauchy slices. 
Consider the Cauchy slice depicted in \cref{fig:our-setup} collectively by the lower solid zigzag line.
As we make more precise in \cref{sec: simulating-holographic-cfts}, the white and black segments of this line be though of as a finite-dimensional quantum systems, which we call $W$ and $E$ respectively.
As the resources, give Alice $R_A=W$ and Bob $R_B=E$. Alice and Bob apply some encoding, such as swapping $A$ and $B$ into a subsystem on $c_0$ and $c_1$ respectively. The unitary $U$ decomposes as $U_NU_SU_WU_E$, where $U_W$ and $U_E$ evolve the lower white and black line to the dotted white and black lines respectively. Similarly, $U_N$ and $U_S$ evolve the dotted white and black lines into the upper white and black lines respectively. Alice and Bob then apply some encoding, such as swapping $\tilde{A}$ and $\tilde{B}$ out of subsystems on $r_0$ and $r_1$ respectively.

\subsubsection{Decomposition via limited spread}\label{subsubsec: decomp-via-limited-spread}
We now prove \cref{eq: unitary-chopping} without using local Cauchy deformations, and instead by just using the fact that the spread of the boundary dynamics $U$ is limited.
We find that the maximum spread such that $U$ can be decomposed in this way is $s=2\pi/8$. 
 
We also explain how to extend the result to approximate spread in \cref{appendix: err-from-lightcone}.

\begin{figure}[h]
    \centering
    \includegraphics[width=1\linewidth]{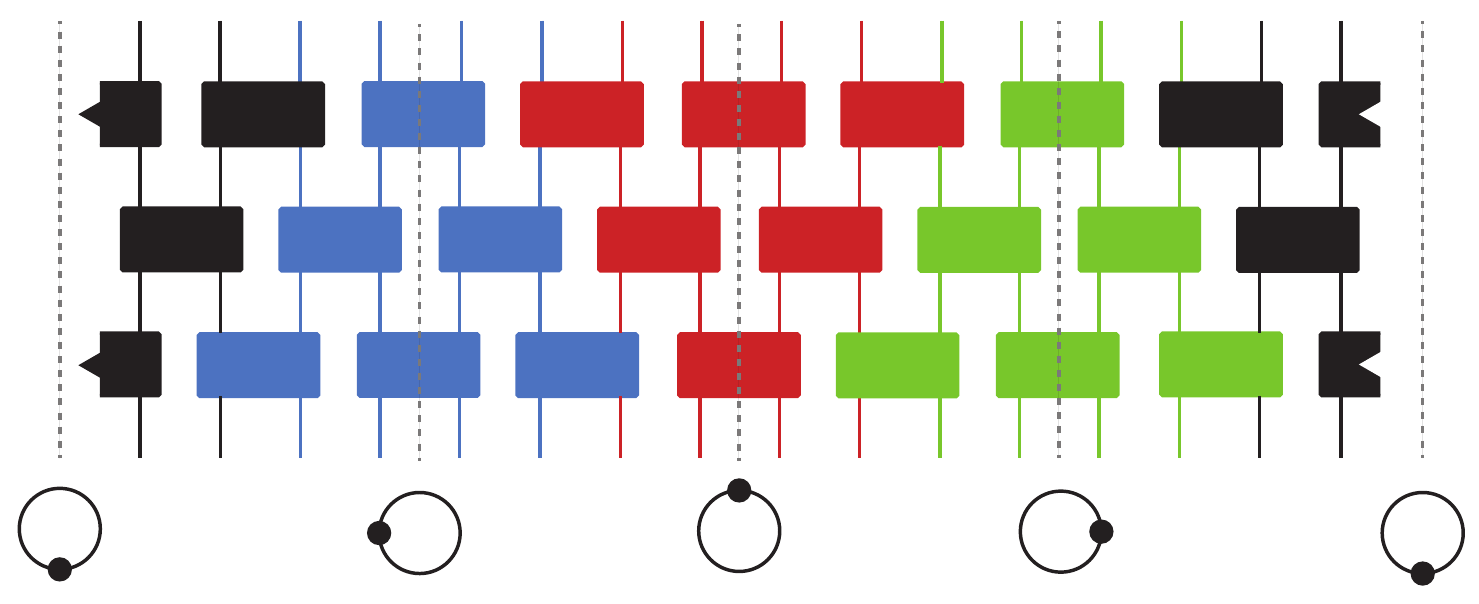}
    \caption{Decomposition of any finite depth circuit whose structure imposes that information spreads moves by less than an eight of the circle into the form of \cref{eq: unitary-chopping}. The blue and green sub-circuits make up $U_W$ and $U_E$ respectively, while the red and black sub-circuits make $U_N$ and $U_S$ respectively. However not all unitaries with this limited information spread can be decomposed in such a manner.}
    \label{fig:circuit-method}
\end{figure}

Consider a unitary that comes from a family parametrized by time, such as a Hamiltonian evolution or a discrete quantum circuit.
Ideally $U(t)$ could be decomposed as a local circuit whose structure makes its spread inherently clear.
In that case we could easily decompose it as desired, as shown in \cref{fig:circuit-method}, so long as $s\leq 2\pi/4$.

However, it is known that not all unitaries admit circuits that make their causal properties clear in this manner \cite{gross}.
Using just the spread is enough, provided $s\leq 2\pi/8$, since we can use the technique of Ref.~\cite{unitarity-plus-causality} to decompose the unitary in the following manner.
Introduce an identical system $S'$ in some fixed state $\ket{0}_{S'}$, which serves as an auxiliary system to help with implementing the decomposition of the time evolution $U$.
As long as a decomposition of the form in \cref{eq: unitary-chopping} implements $U_S \otimes U'_{S'}$ for some unitary $U'$, then the additional system $S'$ does not affect the validity of the NLQC protocol.
The primed system has corresponding subsystems such as  $S'_W$ or $S'_{\phi}$ for any $\phi \in [0,2\pi)$, and the components of our decomposition now have support on e.g.\ $\west S'_W$.

We can implement such a unitary -- specifically, $U_S \otimes U^\dagger_{S'}$ -- as follows.
Let $\Sigma _{\phi}$ be the operator that swaps $S_\phi$ and $S'_\phi$.
Define the operators
$
    K_\phi=U_{S'}\Sigma_\phi U^\dagger_{S'}
$, which mutually commute because the $\Sigma_\phi$ do.
The product of all of these simply evolves $S'$ backwards with  $U^\dagger$, swaps the systems $S$ and $S'$, and then evolves $S'$ forwards with $U$.
Thus, we find
\begin{align*}
    \prod_{\phi\in [0,2\pi)}\Sigma_\phi\prod_{\phi\in [0,2\pi)}K_\phi =U_{S} \otimes U^\dagger_{S'},
\end{align*}
and so it is sufficient to implement the operator on the left hand side.
Note that the $K_\phi$ have support of about $2s$ radians. If $s\leq 2\pi/8$, then this is at most a quarter of the circle.
But this means it is always contained in at least one of $\north,\south,\east,$ or $\west$ and its primed counterpart. Thus we can rewrite 
\begin{align*}
     \prod_{\phi\in [0,2\pi)}K_\phi=K_NK_SK_WK_E.
\end{align*}

We can then absorb the part of $\prod_{\phi\in [0,2\pi)}\Sigma_\phi$ that acts on $\north S'_N$ into $K_N$, and similarly for $K_S$, thus giving the desired decomposition.

When replacing spread with approximate spread, these $K$ operators may have some small components outside of their regions of support (e.g.\ exponential tails in the case of Hamiltonian evolution).
In that case, we simply truncate the tails as an approximation, such that the above product only approximately implements\footnote{Even if the truncated operator is not unitary, it is possible to construct a CPTP map that approximately implements it.} $U_S \otimes U'_{S'}$
Thus, evolution of any local Hamiltonian for time $t \leq \pi/4$ -- or any unitary with approximate spread $s \leq \pi/4$ -- can implement a unitary of the form in \cref{eq: unitary-chopping}.
This is done in \cref{app:spread}.


\section{Simulating holographic CFTs } \label{sec: simulating-holographic-cfts}
We would like to analyze the NLQC protocol introduced in \cref{sec: many-body-nlqc} for the particular case of a holographic CFT.
However there are two issues.
First, the techniques we used to derive the protocol are applicable only to finite-dimensional systems.
Second, the entanglement present between two halves of the CFT is infinite, which would naively indicate that such a protocol would be prohibitively costly regardless. 
We can solve both of these problems by considering finite-dimensional approximations of the CFT which preserve its relevant features.
This clearly solves the first issue, and solves the second by giving a finite upper bound on the entanglement used. 

To be precise about what we mean by an approximation of AdS/CFT we phrase the approximation in terms of a simulation running on a quantum computer.
There are three core features the simulation must capture in order to facilitate a meaningful NLQC protocol: geometric structure, dynamics, and no superluminal signalling.
By geometric structure we mean that operators representing observables of interest have counterparts in the simulation with similar geometric support.
By dynamics we mean that correlation functions of said operators are accurately captured.
Finally, by no superluminal signalling we mean that the support of a time-evolved operator grows no faster than the speed of light, even for simulation operators with no counterpart in the CFT\footnote{%
A simple simulation would have every operator correspond to some counterpart in the CFT.
However, to be more general, we allow for the possibility that only some subspace of the simulation Hilbert space represent the relevant CFT subspace.
Otherwise, the lack of superluminal signalling would be implied by the accuracy of correlation functions.
}.

In this section we give physical intuition for these features, and why we need them.
We then give a formal definition, and finally we argue why such a simulation should exist. 

\subsection{Intuition \& necessity of simulation features }

A quantum field theory is an idealized quantum system that assigns an infinite-dimensional degree of freedom to each point in space.
In reality, however, quantum field theories only model phenomena that occur above a given length scale, and when they are sufficiently well behaved, this makes it possible to obtain the same results by replacing the QFT with a more tractable \emph{regularized} version with a minimum spatial resolution or ``cutoff''.
A common method is lattice regularization, in which the space (in our case a circle) is modeled by a lattice with at most countably infinite-dimensional systems living on its cells and a local Hamiltonian acting on those degrees of freedom.
One strategy to simulate the QFT is thus to truncate the dimension of each of these degrees of freedom to a subset of states that is plausibly reached in a particular scenario of interest, e.g.\ Ref.~\cite{Preskill}.
Another is to look for a sequence of local finite memory systems of increasing resolution for which the correlation functions of course grained observables converge to those calculated directly from the QFT, e.g.\ Ref.~\cite{Osborne-Stottmeister}.

The most basic way to simulate a quantum system is to map its states of interest to states of the computer memory, and evolve by a Hamiltonian whose spectrum agrees for those states.
The simulation strategies above do this (with the states of interest being low-energy states), but they additionally preserve the geometric structure of the theory.
In particular, a field theory operator acting in a certain region of space is represented by a simulation operator acting on lattice sites in that region of space.

Why do we need this feature? In the many-body NLQC protocol described in \cref{sec: many-body-nlqc}, we needed a notion of locality (the various quantum systems were associated with points on a circle) and a time evolution with limited spread.
The time evolution of a quantum field theory already has an exact light-cone because it obeys special relativity.
The exact light-cone results in the spread of a local operator growing linearly with time.
Thus if we preserve the geometric structure of each operator's support, we can expect the simulation's time evolution to have limited spread for short enough times.

We must be careful, however, about the scope of this inherited light-cone.
To use the methods of \cref{sec: many-body-nlqc}, we needed either that Cauchy slices can be deformed with local operations, or that the spread restriction apply to a product of swap operators.
Given that some simulation operators may not have CFT counter parts, and of those which do  only a subset accurately capture correlation functions, the swap operators do not neccesarily inherit this restriction from the CFT.
Thus in order to use the limited-spread method, we require the additional condition that the light-cone applies to all operators.



Finally we require the simulation to capture the dynamics of the theory.
These are completely encoded into its correlation functions, so we insist that these are accurately captured.
However, once again it is too much to ask that any arbitrary correlation function be captured; any finite simulation has a limit to its resolution, so we require only that for any particular finite set of observables of interest, a simulation exists that can accurately capture reasonably simple correlation functions involving those observables.



\subsection{Formal definition }\label{subsec: formal-definition-simulation}

We now give a more formal definition of what we require from a simulation.
For the following specifications,
\begin{itemize}
    \item $\mathcal{H}$ and $H$, the CFT Hilbert space and Hamiltonian respectively,
    \item a state $\ket{\psi}\in \mathcal{H}$,
    \item a finite-dimensional vector space\footnote{That this is a vector space and not an algebra is due to to the interpretation of these operators as course grained observables, as we will soon discuss in more detail in the next subsection. In particular the adjoint of a coarse graining map preserves a vector space structure but not an algebraic one.} of operators $\vecspace$ acting on $\mathcal{H}$ e.g.\ the ones corresponding to observables at a given scale,
    \item an error tolerance $\delta$,
    \item a time $T$ representing the duration of time for which the simulation is accurate,
    \item a positive integer $m$ specifying how many operators may appear in a correlation function, and
    \item a finite set of times $\{t_k\}_{k=1}^{m}$ with $0\leq t_k<T$ for correlation functions to be evaluated at.
\end{itemize}
we define a simulation as follows.
A simulation consists of
\begin{itemize}
    \item $\mathcal{H}'$ and $H'$, the simulation Hilbert space and Hamiltonian respectively, with the Hilbert space constructed on a lattice regularization of the space on which the QFT is defined (in our case -- a circle -- there will be some small cutoff angle $\angcutoff$, with the simulation memory consisting of $\frac{2\pi}{\theta}$ qudit registers $S_\angcutoff,S_{2\angcutoff},...,S_{2\pi}$),
    \item a simulation state $\ket{\psi'}$,
    \item a finite-dimensional vector space of operators $\vecspace'$ acting on $\mathcal{H}'$, and
    \item a CPTP map $\mathcal{E}$ from density matrices on $\mathcal{H}$ to ones on $\mathcal{H}'$ whose adjoint $\alpha \defi \mathcal{E}^\dagger$ satisfies $\alpha(\vecspace')=\vecspace$%
    \footnote{Note that this is equivalent to requiring that for each $A\in\mathcal{M}$ there is a corresponding element $B\in \mathcal{M}'$ such that for any state $\rho$ of the CFT, $\ev{A}_\rho = \ev{B}_{\Lambda(\rho)}$; in other words the simulation accurately captures expectation values of this limited set of observables.}%
    ,
\end{itemize}
such that the following hold:
\begin{itemize}
    
    \item for any $m$ operators $\{M'_i\} \subset \mathcal{M}'$, and $m$ times $\{s_i\}$ with $s_i\in\{t_i\}$, we have that
    \begin{align}
       \label{eq: CFT-sim-correlation-functions-relationship} |\ev{M_{1}(s_1) \cdots M_{m}(s_m) }-\ev{M'_{1}(s_1) \cdots  M'_{m}(s_m) }|\leq \delta ||M'_{1}(s_1) \cdots  M'_{m}(s_m) ||
    \end{align}
    where $M_i:=\alpha(M'_i)$, and $M_i$ operators are time evolved using $H$ while $M'_i$ are time evolved with $H'$, and the correlation functions are taken with respect to $\ket{\psi}$ and $\ket{\psi'}$,
    
    
    \item  for any operator $\vecspaceop'\in \vecspace'$, the lattice points on which $\vecspaceop'$ is supported are contained within the region of space in which $\alpha(M')$ is supported, and
    \item for any local simulation operators $O$ and $O'$ and any $t<T$, 
    the approximate light-cone condition holds, i.e.\
    \begin{align}
        \| [O(t),O'] \| \leq  \|O(t)\| \|O'\| a \exp(-b ( d(\supp(O),\supp(O')) - c t) ),
    \end{align}
    where $a$ and $b$ are constants, $d(R,S) = \min_{x\in R, y\in S} d(x,y)$ is the distance between regions, and $c$ is the speed of light in the CFT (which we now set to one).
\end{itemize}
In general, smaller error parameters $\delta$ and $a$ (as well as larger decay constant $b$) will require an increasingly large simulation.
We define the resolution scale as $N_S = \log(\text{\# of sites})$, i.e.\ in our case $N_S = \log(\frac{2\pi}{\theta})$.
For some of the results in this paper to hold, we will additionally require that the simulation be \emph{efficient}, in that this resolution scale does not need to increase too quickly to suppress the errors:
\begin{itemize}
    \item the error scales as $\delta\sim \text{poly}(N_S^{-1})$, where  is the resolution scale of the simulation, and
    \item $a$ increases at most exponentially in $N_S$,
    \item $b$ increases superlinearly with $N_S$.
\end{itemize}

\subsection{Existence of the simulation}

A variety of quantum simulation techniques for QFTs have been studied (see for example \cite{Preskill,Buser:2020cvn,Osborne-Stottmeister,Byrnes:2005qx} among others).
None of these have been shown to work for strongly-coupled CFTs such as those from holography. 
Nevertheless, they display many or all of our required features for a simulation.
If a suitable simulation technique is devised for holographic CFTs, we thus find it reasonable that it might satisfy our requirements.

The feature of preserving geometric locality is satisfied by many simulation techniques, usually via real space regularization.
The accuracy of estimating correlation functions via these techniques has been rigorously demonstrated for a number of QFTs, though not for strongly-coupled ones such as holographic CFTs.
However, under the assumption of the legitimacy of lattice regularization such as the widely used formulation for QCD introduced by Wilson \cite{Wilson,kogut-susskind}, there exist simulation techniques for strongly-coupled systems that preserve these features as well \cite{Byrnes:2005qx}.
This assumption is supported by simulations on classical computers, which have used the lattice formulation successfully to predict experimentally verifiable results such as the mass of the proton \cite{Durr:2008zz}.

The light-cone requirement enforces the causal structure of the CFT in the simulation.
Such a light-cone behaviour emerges in lattice systems via the Lieb-Robinson velocity \cite{Lieb-Robinson,hastingsLR,Hamma_2009}.
If this velocity approaches the speed of light, then the condition will be satisfied.
For example, Ref.~\cite{Hamma_2009} shows a lattice system that approaches an electromagnetic theory in the continuum limit, and finds that the Lieb-Robinson velocity is upper bounded by $O(c)$, slightly greater than the speed of light.
The upper bound is loose and can likely be improved with better techniques.
This example also meets our criteria for the scaling of error parameters associated with the approximate light-cone.
The coefficient $a$ from the Lieb-Robinson bound scales exponentially in $N_S$, as it is proportional to the number of sites in a fixed geometry.
However, the decay parameter is proportional to the inverse lattice scale, meaning that it increases exponentially in $N_S$.
So long as it increases superlinearly, this offsets the increase in $a$ as one improves the simulation resolution.
We expect such behaviour to be typical for simulations of relativistic theories.

The error scaling of $\delta$ just means that we only need a number of simulation qubits that is at most exponential in $\frac{1}{\delta}$ to attain error $\delta$.

Finally, we believe these requirements to be reasonable as we already know of simulations that satisfy them, albeit for non-holographic CFTs.
The quantum simulation technique of Ref.~\cite{Osborne-Stottmeister} simulates a particular non-holographic CFT, obeying the requirements of an efficient simulation described in \cref{subsec: formal-definition-simulation}\footnote{
This is with the exception of the light-cone condition, which remains unclear; however, it does satisfy the local Cauchy slice deformation condition.
Although we do not require the local Cauchy slice deformation condition in the formal definition above, we believe similar results to those presented here could be obtained using this condition in place of the no-superluminal-signalling condition.
}.

\section{Non-local computation via holographic states }\label{sec: NLQC-via-holographic-states}
We have now developed sufficient machinery to give a rigorous notion for how AdS/CFT performs non-local computation.
Given an initial state and encoding/decoding maps, the techniques of \cref{sec: many-body-nlqc} give an NLQC protocol for any $(1+1)$-dimensional many-body system, such as a quantum simulation of a $(1+1)$-dimensional QFT as described in \cref{sec: simulating-holographic-cfts}.
For the choices of initial state and encoding/decoding maps we describe in \cref{subsec:setup}, the NLQC protocol extracted in this way implements an isometry that is enacted holographically in the bulk as a local computation.
We prove this by first connecting the bulk computation to the boundary CFT time evolution in \cref{subsec:bulk-boundary}, then connecting to the CFT simulation in \cref{subsec:bulk-simulation}.
We more formally present the resultant many-body NLQC protocol and demonstrate that it implements the bulk computation in \cref{subsec:computation-validity}.
In \cref{subsec:error}, we demonstrate the robustness of this result to the presence of error from (i) the approximate nature of Alice's and Bob's local encodings, (ii) the approximate nature of their local recovery of information from the outputs, (iii) the imperfection of the dynamical duality, coming from both the approximate nature of holographic duality and from the imperfect accuracy of the simulation, and (iv) the approximate nature of the simulation light-cone.
Finally, in \cref{subsec:entanglement} we estimate the amount of entanglement consumed by the protocol, and find that it scales as $\frac{N_S}{G_N}$, with $N_S$ the resolution scale of the simulation (dependent only on the choices of error parameters, not on the number of input qubits $n$) and $G_N$ the Newton's constant of the holographic theory (which is decreased relative to a fixed reference length scale, associated with an increase in the number of qubits per site).

\subsection{Applying many-body NLQC to a holographic simulation} \label{subsec:setup}
Recall from \cref{sec: many-body-nlqc} that for a many-body system (such as the finite-dimensional CFT simulations described in \cref{sec: simulating-holographic-cfts}), we required a list of ingredients in order to extract an NLQC protocol.
This included a choice of initial state, time-evolution unitary, and encoding maps for specific systems $A$ and $B$ and decoding maps for systems $\tilde{A}$ and $\tilde{B}$.
As we show in the rest of the section, the pseudo-bulk dynamics arising from this procedure can be interpreted as the actual bulk computation when these parameters are chosen as follows.
The initial state at $t=0$ should be a simulation of a CFT state with a bulk dual in which two systems carrying quantum information, $A_L$ and $B_L$, are moving towards the center of the disk from opposite directions, say from the left and right (see \cref{fig:bulk-computation-and-cross-sections}).
The encoding map is one that encodes $A$ and $B$ into the simulation's analogue of these respective bulk systems.
After evolving with the unitary $U$ corresponding to simulation Hamiltonian time evolution with time $\tau$, 
we require that in the bulk, two other systems $\tilde{A}_L$ and $\tilde{B}_L$ are shot out from the center along the perpendicular direction, i.e.\ up and down respectively (see \cref{fig: bulk-dynamics}).
This could be, for example, due to a simple two particle scattering process, or as discussed in Ref.~\cite{Alex-connect-wedge-theorem}, a quantum computer living in the center that performs operations on the two systems before ejecting them.
The decoding maps are the simulation's entanglement wedge recovery maps (see \cref{sss:EWR}) for these registers at time $\tau$.

\begin{figure}
     \centering
     \begin{subfigure}[t]{0.46\textwidth}
         \centering
\includegraphics[width=\textwidth]{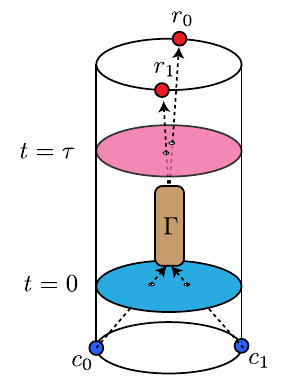}
         \caption{Quantum systems come in from points $c_0$ and $c_1$ and interact in the center via some computation (isometry) $\Gamma$, whose outputs are ejected to points $r_0$ and $r_1$. 
}
         \label{fig: bulk-dynamics}
     \end{subfigure}
     \hfill
     \begin{subfigure}[t]{0.42\textwidth}
         \centering
         \includegraphics[width=0.8\textwidth]{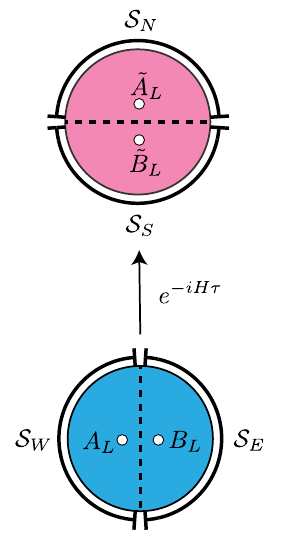}
         \caption{Bulk constant $t$ Cauchy slices right before and after the computation. $A_L$ and $B_L$ are reconstructable on $\mathcal{S}_W$ and $\mathcal{S}_E$ respectively. Similarly $\tilde{A}_L$ and $\tilde{B}_L$ are reconstructable on $\mathcal{S}_N$ and $\mathcal{S}_S$ respectively.}
         \label{fig: emergent-system-dynamics}
     \end{subfigure}
     \hfill
        \caption{A bulk process whose dynamics can be implemented non-locally using the boundary as a resource.}
        
        \label{fig:bulk-computation-and-cross-sections}
\end{figure}

The bulk information contained in $A_L$ and $B_L$ described above can be mapped isometrically to the boundary CFT system, which we refer to as $\Csys$.
Specifically, they can be mapped to regions that we denote $\Cwest$ and $\Ceast$, which are the subregions of the CFT corresponding to the simulation subregions $\west$ and $\east$ respectively.
This arises from the error-correction properties of holography, specifically sub-region duality (see \cref{subsec: AdS-CFT}), which provides a mechanism for recovering bulk local information on an appropriate boundary region.
In particular, we can treat the $t=0$ Cauchy slice as an error-correcting code with the following features.
The encoding isometry $V_0: \mathcal{H}_{A_LB_L}\to \mathcal{H}_{CFT}$ maps from the logical Hilbert space to states in the CFT, such that the system $A_L$ is recoverable on the left half of the CFT $\Cwest$ and the system $B_L$ is recoverable on the right half $\Ceast$\footnote{These error correction properties are only approximate but become exact in the limit $1/G_N\rightarrow\infty$.}.
This recoverability of the system $A_L$ is equivalent to the fact that for any operator acting on it, $O_{A_L}\in\mathcal{L}(\mathcal{H}_{A_L})$\footnote{Here we use $\mathcal{L}(\mathcal{H})$ to denote the set of linear operators acting on $\mathcal{H}$.}, there exists a CFT operator $O_{W}$ acting on the left half $\Cwest$ that commutes with the code space projector $V_0V_0^\dagger$ and also implements $O_{A_L}$ in the sense that $V_0O_{A_L}=O_{W}V_0$; similarly for $B_L$.
Similarly at time $t=\tau$, there will be an isometry $V_\tau: \mathcal{H}_{\tilde{A}_L\tilde{B}_L} \to \mathcal{H}_{CFT} $ such that $\tilde{A}_L$ is recoverable on the top half $\Cnorth$ and $\tilde{B}_L$ on the bottom half $\Csouth$ (see \cref{fig: emergent-system-dynamics}).

We introduce a bulk isometry\footnote{Again, we use an isometry here for notational convenience. Because there will always be some noise, the dynamics are better described by a channel, and our methods easily generalize to this case. } (or transition matrix) $\Gamma: \mathcal{H}_{A_LB_L}\rightarrow\mathcal{H}_{\tilde{A}_L\tilde{B}_L}$ to describe the bulk dynamics of these systems from time $0$ to time $\tau$ in the following sense.
Consider a basis $\ket{i}$ for $\mathcal{H}_{A_LB_L}$ and a basis $\ket{\tilde{j}}$ for $\mathcal{H}_{\tilde{A}_L\tilde{B}_L}$.
We assume these basis states are chosen such that they are product states with respect to the bipartite structure.
The vector $\ket{0_L}\in \mathcal{H}_{A_LB_L}$ represents the fixed initial state of the protocol, dual to $\ket{{\zc}}$ in the CFT, and we can define $\ket{\tilde{0}_L} = \Gamma \ket{0_L}$ as its time-evolved counterpart in the Schrodinger picture.
We can also introduce excitation operators $\Xl$ and $\tXl$ such that $\Xl^i\ket{0_L} = \ket{i}$ and $\tXl^j \ket{\tilde{0}_L} = \ket{\tilde{j}}$, and again note that these can be chosen to take the form of a tensor product of operators on the $A_L$ and $B_L$ systems (resp.\ $\tilde{A}_L$ and $\tilde{B}_L$).
Then the matrix elements of $\Gamma$ are given by
\begin{align}
    \bra{\tilde{j}} \Gamma \ket{i} &= 
    \bra{0_L} \Gamma^\dagger  \tXl^{\dagger j} \Gamma  \Xl^i \ket{0_L} .
\end{align}
Now, these finite-dimensional Hilbert spaces and their respective operators are really a simplified, emergent description of some more complicated bulk physics described by a QFT with semiclassical gravity -- much as any qubit in a lab emerges from the Standard Model.
In the full bulk theory, switching now to the Heisenberg picture, there are some operators $\tXb(\tau)$ and $\Xb$ such that:
\begin{align}
    \bra{0_L} \Gamma^\dagger \tXl^{\dagger j} \Gamma \Xl^i\ket{0_L} 
    &= \ev{ \tXb^{\dagger j}(\tau) \Xb^i } ,
\end{align}
where the correlation function is now to be evaluated according to the appropriate path integral. 
These operators $\Xb$ and $\tXb(\tau)$ are each tensor products of local operators in the bulk, acting in the corresponding regions of spacetime associated with emergent systems $\mathcal{H}_{A_LB_L}$ and $\mathcal{H}_{\tilde{A}_L\tilde{B}_L}$ respectively.
They retain a product structure across the regions associated with $A_L$ and $B_L$ (resp.\ $\tilde{A}_L$ and $\tilde{B}_L$). 

Such a correlation function is dual to one featuring some boundary operators $\Xc$ and $\tXc$, specifically
\begin{align}
    \ev{ \tXb^{\dagger j}(\tau) \Xb^i } 
    &= \bra{\zc} \tXc^{\dagger j}(\tau) \Xc^i \ket{\zc},
\end{align}
where these operators have the additional properties arising from the error-correction results of Ref.~\cite{Almheiri-Harlow} reviewed above%
, namely
\begin{align}
    \Xc V_0 &=  V_0 \Xl \label{eq:phi_i_reconstruct}\\
    \tXc V_\tau &=  V_\tau \tXl \label{eq:phi_j_reconstruct},
\end{align}
and $\tXc(\tau)$ is the Heisenberg picture version of $\tXc$, i.e.\ $\tXc(\tau) \defi U_\tau^\dagger \tXc U_\tau$, where $U_\tau$ is the boundary operator that implements local time-evolution by time $\tau$.
Furthermore, 
the $\Xc$ operator has a product structure between the boundary regions $S_W$ and $S_E$ \emph{at time zero},
according to the sub-region duality described above (see \cref{fig: emergent-system-dynamics}).
Similarly, the Schrodinger picture $\tXc$ operator has a product structure between $S_N$ and $S_S$, but this does not necessarily hold for its Heisenberg counterpart.

The above discussion justifies that the logical correlation functions match their CFT counterparts.
Of course, due to the imprecision of the holographic correspondence, this will only be up to some error.

For any set of $N$ logical operators $O_L^i$ acting on $\mathcal{H}_{A_LB_L}$ and any $t_1,\cdots,t_N \in \{0,\tau\}$, there exist CFT operators $O_{CFT}^i(t_i)$ such that the following holds,
\begin{align}\label{eq:logical-to-CFT-corr-err}
    | \bra{0_L} O_{L,t_1}^1 \cdots O_{L,t_N}^N \ket{0_L} -  \bra{0_{CFT}} O_{{CFT}}^1(t_1) \cdots O_{{CFT}}^N(t_N) \ket{0_{CFT}} | \leq O(G_N) \| O_L^1 \| \cdots \|O_L^N \|,
\end{align}
where $O_{L,0}^i \defi O_L^i$ and $O_{L,\tau}^i \defi \Gamma O_L \Gamma^\dagger$.

\subsection{Mapping the NLQC inputs/outputs to the boundary } \label{subsec:bulk-boundary}

Now, if we express the map $\Gamma$ as $\Gamma = \sum_{i,j} \gamma_{ij} \ketbra{\tilde{j}}{i}$, then the above implies that
\begin{align}
    \gamma_{ij} &= \bra{0_L} \Gamma^\dagger X^{\dagger j}_L  \Gamma X_L^i \ket{0_L} \\
    &=\bra{\zc} \tXc^{\dagger j}(\tau) \Xc^i \ket{\zc} \\
    &= \bra{\zc} U^\dagger _\tau \tXc^{\dagger j} U_\tau \Xc^i \ket{\zc} .
\end{align}
This tells us that the state $U_\tau \Xc^i \ket{\zc}$ has overlap $\gamma_{ij}$ with $ \tXc^j U_\tau \ket{\zc}$.
But since $\gamma_{ij}$ are the coefficients of an isometry $\Gamma$, and we can choose the operators $\Xc$ and $\tXc$ to be unitary (see \cref{lemma: unitary-reconstruction}), all other components of $U_\tau\Xc^i \ket{\zc}$ must vanish, i.e.
\begin{align}
    U_\tau \Xc^i \ket{\zc} = \sum_j \gamma_{ij}  \tXc^j U_\tau \ket{\zc}.
    \label{eq:decompose_tau}
\end{align}
Finally, we have
\begin{align}
    V_\tau \Gamma \ket{i} &=\sum_j \gamma_{ij} V_\tau  \tXl ^j\ket{\tilde{0}_L}\text{, by our decomposition of }\Gamma \\
    &=\sum_j \gamma_{ij} \tXc^j U_\tau\ket{{\zc}}\text{, by the reconstruction property of }\tXc,\text{ \cref{eq:phi_j_reconstruct}} \\
    &= U_\tau \Xc^i \ket{{\zc}}\text{, by \cref{eq:decompose_tau}} \\
    &= U_\tau V_0 \ket{i}\text{, by reconstruction property of }\Xc,\text{ \cref{eq:phi_i_reconstruct}}, 
\end{align}
and thus we have shown that 
\begin{align}
    V_\tau \Gamma = U_\tau V_0.\label{eq: CFT-dynamical-duality}
\end{align}
This equation tells us that encoding $A_LB_L$ into the boundary and then time evolving to time $\tau$ is the same as applying $\Gamma$ to time evolve in the bulk, and then encoding the resulting $\tilde{A}_L\tilde{B}_L$ into the CFT in the appropriate way.

The above reasoning assumes that CFT correlation functions exactly capture their counterparts in the bulk and emergent systems.
However, these properties are only approximately true. Thankfully the error remains controlled, as we discuss in \cref{subsec:error}.

\subsection{Mapping the NLQC inputs/outputs to the simulation }\label{subsec:bulk-simulation}

To have a sensible NLQC protocol, we must consider a finite memory simulation of the CFT such as described in \cref{subsec: formal-definition-simulation}.
In order to prove in the next subsection that the simulation acts as a resource to non-locally implement the bulk dynamics $\Gamma$, we need to show that it mimics the CFT feature that bulk dynamics are logically implemented by a local physical time evolution.
Mathematically, this is captured by the following equation, 
\begin{align}
    U'_\tau V_0'&= V_\tau' \Gamma \label{eq: simulated-dynamics-implement-computation},
\end{align}
where $V_0'$ and $V_\tau'$ are isometries mapping to $\mathcal{H}_{sim}$ from $\mathcal{H}_{A_LB_L}$ and $\mathcal{H}_{\tilde{A}_L\tilde{B}_L}$ respectively, and $U'_\tau$ is time evolution in the simulation.
This is essentially a version of \cref{eq: CFT-dynamical-duality} in which the isometries map to the simulation Hilbert space rather than the CFT.
Our strategy for demonstrating it will be very similar to the proof of \cref{eq: CFT-dynamical-duality}, and make use of analogous versions of \cref{eq:phi_i_reconstruct,eq:phi_j_reconstruct}.

We now specify the simulation parameters needed to obtain this result, which will highlight that the aspects of the CFT that we required the simulation to capture in \cref{sec: simulating-holographic-cfts} are sufficient for non-local computation.
Recall from \cref{sec: simulating-holographic-cfts} that the part of a QFT captured by a simulation is specified by $\mathcal{M},\delta,T,m,$ and $\{t_k\}$.
The duration of time $T$ for which the simulation is accurate will need to be from slightly before $t=0$ to slightly after $t=\tau$, and we will only need to probe these two times, i.e.\ $\{t_k\}=\{0,\tau\}$.
We let the error tolerance $\delta$ remain a free parameter.
The vector space of CFT operators we will need to simulate is 
\begin{align*}
    \mathcal{M}=\underset{i,j}{\spn}\{ \Xc^i,\tXc^j \}.
\end{align*}

Let $\Xs$ and $\tXs$ be the corresponding simulation operators and choose $m=4$.
We can use exactly the same arguments as in \cref{subsec:bulk-boundary} and \cref{apx: ecc-from-correlation-functions} by replacing $\Xc$ and $\tXc$ with $\Xs$ and $\tXs$ wherever they appear in correlation functions which we can do using \cref{eq: CFT-sim-correlation-functions-relationship} and using the fact that in those arguments at most four operators appear in correlation functions.
This gives that there exist isometries $V_0'$ and $V_\tau'$ mapping to $\mathcal{H}_{sim}$ from $\mathcal{H}_{A_LB_L}$ and $\mathcal{H}_{\tilde{A}_L\tilde{B}_L}$ respectively such that
\begin{align}
    \Xs V_0'&= V_0' \Xl\label{eq: sim-init-ecc}
    \\
    \tXs V_\tau'&= V_\tau'\tXl ,\label{eq: sim-fin-ecc}
\end{align}
and such that \cref{eq: simulated-dynamics-implement-computation} holds.

Recall that by our simulation requirements the simulation operators $\Xs,\tXs$ have support on the lattice sites in the region on which their original counterparts $\Xc$ and $\tXc$ had support.
This together with the above conditions imply that these maps are error-correcting codes and inherit the property that any logical operator localized to $A_L$ ($B_L$) can be reconstructed on $S_W$ ($S_E$), and likewise for $\tilde{A}_L$ ($\tilde{B}_L$) and $S_N$ ($S_S$).
The relationship between the various systems we consider is summarized in \cref{fig:dream-within-a-dream}.

\begin{figure}
    \centering
    \includegraphics[width=0.7\textwidth]{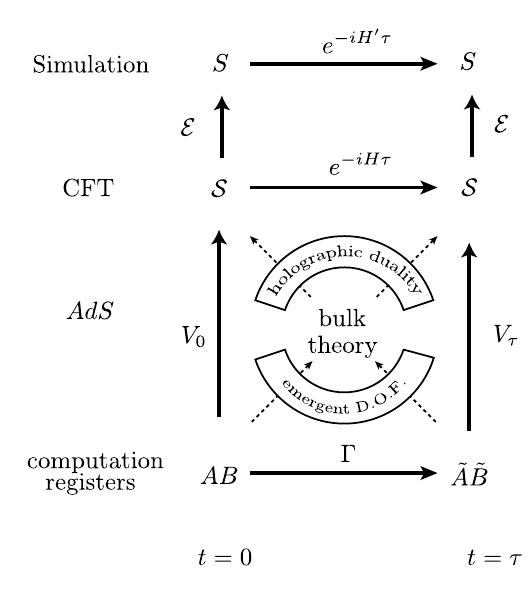}
    \caption{Relationship between the various systems we consider. }
    \label{fig:dream-within-a-dream}
\end{figure}

\subsection{Extracted protocol implements bulk dynamics}
\label{subsec:computation-validity}

Suppose Alice and Bob prepare the resource state $V'_0\ket{0}$ with $\west$ given to Alice and $\east$ to Bob, and are given an initial state $\ket{\psi}_{AB}$.
Since Alice can recover $A_L$ from $\west$, she can swap the information in $A$ into the encoded system $A_L$, and then throw away $A$.
This will constitute her encoding map $\mathcal{N}_{A\west\rightarrow\west}(\cdot)$.
Bob will do similarly with $\east$, $B$, and $B_L$.
They then hold 
\begin{align*}
    V_0'\ket{\psi}_{AB}.
\end{align*}
Because the simulation satisfies the no-superluminal signalling requirement, the simulated boundary time evolution operator will have a light-cone linear in $t$ with speed bounded above by $1$. So long as $\tau\leq 2\pi/8$ we can use the methods of \cref{sec: many-body-nlqc} to rewrite $U_\tau'=U_WU
_E
U_NU_S$. Alice applies $U_W$, Bob applies $U_E$, they exchange quarters appropriately, and apply $U_N$ and $U_S$. 
By \cref{eq: simulated-dynamics-implement-computation}, they then hold 
\begin{align*}
    V_\tau'\Gamma\ket{\psi}_{AB}.
\end{align*}
But $V_\tau'$ allows $\tilde{A}_L$ and $\tilde{B}_L$ to be decoded from $S_N$ and $S_S$ respectively. Thus Alice and Bob can locally reconstruct them, completing the protocol.

\subsection{Error tracking} \label{subsec:error}
We would like to upper bound the value of 
\begin{align*}
    \dnorm{\mathcal{R}_\tau\circ\mathcal{V}_{sim}\circ\mathcal{E}_0-\mathcal{U}_{L}},
\end{align*}
where
\begin{itemize}
    \item $\mathcal{U}_{L}(\rho_{AB}):=\Gamma\rho_{AB}\Gamma^\dagger$ is the channel implementing the bulk dynamics,
    \item $\mathcal{E}_0$ is the encoding action of Alice and Bob into the $t=0$ error-correcting code (a generalization of the \emph{isometric} encoding $ V'_0 $ that occurs in the exact setting described above),
    \item $\mathcal{V}_{sim}$ is the attempt of Alice and Bob to implement the simulation time evolution, and
    \item $\mathcal{R}_\tau$ is the decoding action of Alice and Bob from the $t=\tau$ error-correcting code,
\end{itemize}
Letting $\mathcal{C}_t$ be the encoding map at time $t$, the sources of error we must consider are
\begin{itemize}
    \item $ \epsilon_{enc}:=\dnorm{\mathcal{E}_0-\mathcal{C}_0}$, because Alice and Bob are restricted to encoding locally into $W$ and $E$, and the local reconstructions are only approximate,
    \item $ \epsilon_{rec}:=\dnorm{\mathcal{R}_\tau\circ\mathcal{C}_\tau - \mathds{1}_{AB}}$, because Alice and Bob must reconstruct locally fron $N$ and $S$,
    \item $\epsilon_{dyn}:=\dnorm{\mathcal{U}_{sim}\circ\mathcal{C}_0-\mathcal{C}_\tau\circ\mathcal{U}_L}$ with $\mathcal{U}_{sim}:=e^{-iH_{sim}\tau}\cdot e^{iH_{sim}\tau}$, because the dynamical duality is only approximate, and
    \item $\epsilon_{spread}:=\dnorm{\mathcal{V}_{sim}- \mathcal{U}_{sim}}$ because the simulation time evolution has only an approximate light-cone.
\end{itemize}

where $\mathcal{C}_t$ is an ideal encoding. 
With these definitions we can derive the bound

\begin{align}
    ||\mathcal{R}_\tau\circ\mathcal{V}_{sim}\circ\mathcal{E}_0-\mathcal{U}_{L}||_\diamond\leq \epsilon_{code}+\epsilon_{rec}+\epsilon_{spread}+\epsilon_{dyn}.
\end{align}

We calculate these errors in appendices \ref{appendix: err-from-aprox-corr-func} and \ref{appendix: err-from-lightcone}, and find that

\begin{align*}
    \dnorm{\mathcal{R}_\tau\circ\mathcal{V}_{sim}\circ\mathcal{E}_0-\mathcal{U}_{L}}\leq c_{CFT}\sqrt{G_N}+c_{sim}\sqrt{\delta}+c_{spread} a \exp(-b ( 2 \pi /8 - \tau))
\end{align*}
for some $O(1)$ numbers $c_{CFT}$, $c_{sim}$, and $c_{spread}$.
The idea is that the right hand side should be bounded above by some small constant $\epsilon$ as $n$ varies.
The first term decreases with $n$ and can easily be made smaller than e.g.\ $\epsilon/3$ by adjusting the starting value of $G_N$.
The simulation can be chosen with sufficiently high resolution so that $\delta<(\frac{\epsilon}{3a_{sim}})^2$, according to \cref{subsec: formal-definition-simulation}.
We found there that this requires a choice of $N_S$ that scales polynomially in $\frac{1}{\delta}$.
Finally, we fix $(\frac{2\pi}{8}-\tau)$ to be some small constant $\Delta \tau$, which means we require $b \geq \frac{1}{\Delta \tau} \log(\frac{3a_{spread}a}{\epsilon})$.
Because $a$ grows at most exponentially in $N_S$, this means that the error will be bounded if $b$ grows faster than polynomial in $N_S$, which is precisely what was required for the simulation to be efficient.

\subsection{Entanglement consumption } \label{subsec:entanglement}

We now give a rough estimate of the amount of entanglement between $\west$ and $\east$ that the simulation uses to perform this non-local computation.
From this we can learn about constraints on the bulk computation in the full AdS/CFT, as we discuss in \ref{subsec: implications-for-holography}. 

We expect that the entropies of simulation subregions scale similarly to the entropy of the corresponding regions in the CFT with the UV cutoff methods used in Refs.~\cite{Ryu_2006,Calabrese_2004,Holzhey_1994}, since the scaling of the entropies with $G_N$ generally does not depend on regularization, see for example \cite{Sorce-cutoff-covariant}.

The relevant entanglement is that between $\west$ and $\east$, since these are the resource systems given to Alice and Bob respectively. 
Suppose $\rho_{\west \east}$ is in a pure state. Then we can measure the entanglement via the von Neumman entropy $H(\west)$. 
In full AdS/CFT this is given by the Ryu-Takayanagi formula \cite{Ryu_2006,Hubeny_2007}, which in the specific context of considering one half of the hyperbolic disk, gives
\begin{align*}
    H(\west)= \frac{\mathrm{Area}(\gamma)}{4G_N}+O(1),
\end{align*}
with $\gamma$ a bisection of the disk. 
If we were using the complete geometry this quantity would be infinite.
However, we are working with an approximation of the CFT that amounts to introducing an IR cutoff in the bulk \cite{Ryu_2006}.
Suppose we decide to place the cutoff a distance $L$ from the bulk center.
Then, the area (length) of $\gamma$ is $2L$, and the entanglement between $\west$ and $\east$ is $\frac{2L}{4G_N}+O(1)$.

Such a bulk cutoff implies a boundary UV cutoff at a distance scaling like $e^{-L}$ \cite{Ryu_2006}.
Because the physics that we require the simulation to capture is concentrated in the centre of the bulk, our bulk cutoff $L$ need only be an $O(1)$ constant.
Increasing the cutoff further only allows the simulation to model smaller-scale properties of the CFT, corresponding to larger regions of the bulk space, which is irrelevant to the computation occurring in the center.

For at least some choices of unitary, NLQC protocols are known to require an increasing amount of entanglement to implement increasingly large input system sizes \cite{beigi-koenig}.
It may seem at first that our construction violates this lower bound, as the length of the cutoff is fixed at $O(1)$, independent of the number of input qubits $n$.
However, an increase in the number of qubits results in gravitational backreaction, which alters the geometry and ruins the validity of the argument that an NLQC protocol can be extracted.
The additional entropy then comes from increasing the size of the registers at each lattice site rather than increasing the number of sites.
This would make the entanglement used by AdS/CFT for the non-local computation at most $O\left(1/G_N\right)$.
This result differs only by an $O(1)$ factor with the one derived in Ref.~\cite{Alex-connect-wedge-theorem}, and thus many of the same conclusions can be reached.
While a more careful treatment will be necessary to determine the precise scaling of the entanglement with $1/G_N$, any polynomial scaling would yield stringent constraints on bulk computation, and imply that any unitary that can occur in the bulk is efficiently implementable in an NLQC protocol.
An exponential scaling would be surprising, since it would mean that a holographic CFT could effectively never be realized on a lattice, as it would require doubly exponentially many sites.

\section{Which computations are possible?} \label{sec:computations}
Our construction in this work describes an NLQC protocol implementing an isometry that uses the simulation of a holographic theory as a resource, with the isometry matching the bulk dynamics of that theory.
The entanglement cost $E$ of the protocol depends, according to the Ryu-Takayanagi formula \cite{Ryu_2006}, on the area of a certain region in units of Newton's constant, $G_N$.
In our setup we vary parameters of the holographic theory, such as $G_N$, while keeping the geometry fixed, meaning that the entanglement cost scales as $E \sim \frac{1}{G_N}$.
To assess the efficiency of our construction, then, we need to understand the constraints on $G_N$.
This also serves to help clarify which unitaries may be realized as the bulk dynamics of a holographic theory.

The specific setup is similar to that of previous work \cite{Alex-complexity,Alex-tasks}.
Consider the scenario depicted in \cref{fig: bulk-dynamics}. 
We refer to the domain of dependence of the smallest spacetime region containing the process $\Gamma$ as the \emph{interaction region}.
We consider a family of such processes, parameterized by the total number of input qubits $n$, and implementing some family of isometries $U(n)$.
We fix the interaction region geometry in proportion with the AdS radius, which is held constant as we vary $n$.
However, we vary $G_N$, which changes the Planck scale, and we also independently change the characteristic energy $E_c$\footnote{
Here we use $E_c$ to distinguish the characteristic energy of a component from $E$, the entanglement cost in terms of EPR pairs.
} of a component of the system (which bounds its size via the Compton wavelength).
In particular, as $n$ increases, we take these scales to be smaller relative to the AdS radius, i.e.\ $G_N\to 0$ and $E_c \to \infty$.

Any holographic bulk dynamics within the interaction region can be implemented via the NLQC protocol described in this work.
We argue that even in a rather constrained setting, the bulk dynamics can be quite general.
Specifically, our construction uses a circuit model to perform the computation, and we enforce fairly restrictive constraints on how such a computer fits within the interaction region. 
We do not even assume the existence of a universal quantum computer in the bulk; the entire construction can be tailored to the specific family of unitaries being considered.

In \cref{sss:gravity}, we review four constraints on what can occur in the interaction region.
First, the covariant entropy bound -- a bound on the density of information in the presence of gravity -- must be satisfied.
Second, the components (whether that be just the $n$ qubits themselves, or also some additional infrastructure) must fit within the volume of the interaction region.
Third, there is a time constraint that arises according to the time-complexity of $U(n)$.
Finally, the gravitational backreaction must not be too strong, or the geometry of the setup will be affected which may jeapordize the recoverability of outgoing information.

Previous work in similar contexts \cite{Alex-complexity} argued that these constraints imply $E\sim \frac{1}{G_N} \gtrsim \max(n,t_c(n))$ with $t_c(n)$ the time-complexity of $U(n)$.
Note that here we use $a \gtrsim b$ to denote $a = \Omega(b)$ in big-Omega notation.
We argue that appropriately accounting for the scaling of the characteristic component energy results in the stronger (although less rigorous) bound $E\sim \frac{1}{G_N} \gtrsim \max(s_c(n)^{\frac{D}{D-1}},s_c(n) t_c(n))$, with $D$ the spacetime dimension and $s_c(n)$ the spatial cost of performing a circuit for $U(n)$ (which must be at least $n$).

In \cref{subsec:assumption}, we consider the consequences of a specific assumption: that the above constraints are not only necessary but sufficient for the existence of a family of quantum computer states that can implement the family of circuits.
That is, provided a hypothetical circuit would not warp the geometry, satisfies the covariant entropy bound, and fits in both time and space, then it should be possible to build that circuit in the bulk.
This amounts to a statement about the computational power of bulk holographic theories.
With this assumption, we show that any polynomially complex family of unitaries can be performed non-locally with the entanglement cost scaling only polynomially in $n$.

\subsection{Constraints on bulk quantum computation}\label{sss:gravity}
As argued in \cite{Alex-complexity}, the covariant entropy bound can be used to argue that $n \leq \frac{\mathrm{Area}}{G_N}$.
Because the area of the interaction region is fixed, this immediately implies that $\frac{1}{G_N} \gtrsim n$.
This bound is essential for consistency with the known lower bounds on entanglement for NLQC from \cite{beigi-koenig}.
However, we can strengthen our results by enforcing even stricter (but less rigorous) bounds as follows.

The second constraint is that the qubits, and any additional components, must fit within the volume of the interaction region.
We denote the number of components required $s_c(n)$, which is effectively the space or memory cost of the computation $U(n)$.
For a typical quantum computer this would scale like the number of physical qubits, which is usually quasilinear in $n$, i.e.\ $s_c(n) \sim n \log^k n $ -- regardless, it must scale as at least $n$.
In general, the characteristic length $\ell$ of a component is lower bounded by its Compton wavelength, which scales inversely with its energy, i.e.\ $\ell \gtrsim \frac{1}{E_c}$.
For all $s_c(n)$ components to fit into the $n$-independent volume of the interaction region, we need $ \frac{s_c(n)}{E_c^{D-1}} \lesssim s_c(n) \ell^{D-1} \lesssim \mathrm{Vol} \sim O(1)$, i.e.\ $E_c \gtrsim s_c(n)^{\frac{1}{D-1}}$.

Thirdly, we have constraints arising from the computation time required for the family of unitaries $U(n)$ since, as argued in \cite{Jordan_2017,Lloyd_2000}, a limit on the information density and the speed of information propagation places a limit on computational speed.
If the time taken for a gate is $t_g$, the fixed physical time available in the interaction region is $t_p$, and the number of layers of gates needed is $t_c$, then we have $t_c t_g \leq t_p$.
For qubits that are spaced $\ell$ apart, the time required for a two-qubit gate must be at least proportional to $\ell$ \footnote{
The Margolus-Levitin theorem \cite{MargolusLevitin} shows that the time taken to evolve a component to an orthogonal state is at least $\frac{\pi}{2E_{avg}}$ with $E_{avg}$ the average energy of the component; however this assumes a Hamiltonian with zero ground state energy, and only applies to gates that orthogonally transform a state, both of which make it a weaker bound than the one we impose here.
}.
Thus $\frac{1}{E_c} \lesssim \ell \lesssim t_g \lesssim \frac{1}{t_c}$, so $t_c \lesssim E_c$.
Thus, if the time-complexity scales as $t_c(n)$, then we have $E_c \gtrsim t_c(n) $, which may be a stronger condition than the one above; generally the space and time bounds combine to give $E_c \gtrsim \max(s_c(n)^{\frac{1}{D-1}}, t_c(n) )$.

On its own, it appears one can simply use arbitrarily small (and thus high-energy) components to make the bounds from space and time restrictions easy to satisfy.
However, the higher the energy of the components, the greater the risk of gravitational backreaction that alters the geometry.
Consider the Einstein field equations,
\begin{align}
    G_{\mu\nu} = 8 \pi G_N T_{\mu \nu}.
\end{align}
The energy density tensor $T_{\mu \nu}$ scales linearly with the number of components $s_c(n)$, and with the characteristic energy $E_c$ of each component.
It also scales inversely with the volume of the interaction region, but this is held constant.
Thus we require the scaling $\frac{1}{G_N} \gtrsim s_c(n) E_c $ so that the gravitational backreaction is weak (relative to the AdS scale).

In summary, we have the following four bounds,
\begin{itemize}
    \item $1/G_N \gtrsim n$
    \item $E_c \gtrsim s_c(n)^{\frac{1}{(D-1)}}$
    \item $E_c \gtrsim t_c(n) $
    \item $1/G_N \gtrsim s_c(n) E_c$.
\end{itemize}
All of these can be satisfied by choosing $E_c \sim \max(s_c(n)^{\frac{1}{D-1}}, t_c(n) )$ and $\frac{1}{G_N} \sim E_c s_c(n)$.
Many of these bounds are perhaps overly restrictive -- for example, it may be possible to circumvent our spatial restrictions by having qubits occupy the same space e.g.\ in a bosonic field.

\subsection{Polynomial entanglement for polynomially complex unitaries} \label{subsec:assumption}
We claim that a family of NLQC protocols exists for $U(n)$ with entanglement scaling polynomially in its time and space complexities $t_c(n)$ and $s_c(n)$.
This follows from (i) the results of \cref{sec: NLQC-via-holographic-states}, that an NLQC protocol exists implementing the bulk dynamics of a state in an $AdS_3$/$CFT_2$ theory with entanglement scaling as $E\sim \frac{1}{G_N}$, and (ii) the following assumption.
\begin{assumption} \label{ass:bulkQC}
For a given family of unitaries $U(n)$, there is an interaction region in AdS$_3$ so that for each $n$, there is a holographic theory within which one can construct a circuit that implements $U(n)$ within that region, with $\frac{1}{G_N}$ scaling polynomially with the space- and time-complexity of $U(n)$.
\end{assumption}
Why should this assumption be valid?
Firstly, it is evidently compatible with all of the constraints described in \cref{sss:gravity}.
One way to show it would be to argue that the second through fourth bounds above can all be saturated.

One possible concern arises when saturating the space and time bounds, which is that it requires that components can be built at the length scale of the Compton wavelength; i.e.\ not only $\ell \gtrsim \frac{1}{E_c}$ but $\ell \sim \frac{1}{E_c}$.
For example, if the bulk physics included the Standard Model, and components are built out of atoms, this relationship would break once $\frac{1}{E_c}$ is smaller than the Bohr radius.
However, we are free to change parameters of the theory; by simply scaling all dimensionful parameters of the theory in proportion to $ E_c$ -- other than the AdS radius and $G_N$\footnote{Typically, this just means varying the string scale while keeping $G_N$ and the AdS radius fixed.} -- we can decrease the Bohr radius itself, and keep the internal structure of the components identical.

A similar potential obstacle is the bound on the gate time, $\ell \lesssim t_g$.
Aside from this restriction arising from causality, there could be some other larger time scale controlling the time needed to apply a gate, making it hard to saturate this bound. 
However, this gate time should still be controlled by the dimensionful parameters of the theory -- again, other than $G_N$ and the AdS radius.
Under the scaling of those dimensionful parameters with $\frac{1}{E_c}$, the gate time would still scale appropriately.
Essentially, the space and time bounds can be saturated so long as all time and distance scales relevant to the functioning of the quantum computer are independent of $G_N$ and the AdS radius.

The only missing piece is the existence of such components, that are capable of implementing arbitrary gates.
First, note that the AdS theory can be quite general; Ref.~\cite{HPPS} shows that essentially any bulk Lagrangian for a scalar field can be realized by choosing an appropriate boundary theory.
This provides evidence that the bulk theory is sufficiently expressive to have matter fields that can interact complexly enough to make arbitrary quantum circuits viable.
It is also known that certain theories are complex enough to facilitate arbitrary quantum computations -- including scalar $\phi^4$ theory with external potentials \cite{phi4}, the Bose-Hubbard model \cite{Bose-Hubbard}, and the Fermi-Hubbard model \cite{Fermi-Hubbard}.
Thus, although we cannot rigorously prove \cref{ass:bulkQC}, we hold this assumption to be physically well-motivated.

\section{A toy model}\label{sec: toy-model}

We now provide a toy model for a holographic code with dynamics that can perform Clifford gates non-locally. 

Consider the $7$-$3$ CSS code of Ref.~\cite{CSS-holographic-codes}, shown in \cref{fig:7-3 code}.
Because it is built from the Steane 7 qubit code, it has transversal Clifford gates.
Our toy model consists of two of these codes stacked on top of each other.
One begins slightly translated to the left, the other to the right, so that when the stack is cut in to two pieces along the vertical line, the person holding $\west$ is able to access the qubit marked in red on one of the codes, while the person holding $\east$ is able to access the red marked qubit on the other code, see \cref{fig:translated 7-3 code}.
Our plan will be to (i) apply a unitary operator that translates both codes in the stack so that the red qubits align, (ii) perform a Clifford operation between them, and (iii) translate one to be recoverable on the top half, and one to be recoverable on the bottom.
The existence of unitaries that perform these translations requires the presence of some auxiliary systems, as shown in \cref{fig:traslated_ancillas}.
The translations spread information along the boundary by less than an eighth of its circumference, as shown in \cref{fig:boundary-change-from-translation}.
The Clifford operation does not spread information at all since it is transversal.

Thus we have all the ingredients neccesary to apply many-body NLQC to use this toy model as a resource state for performing Cliffords non-locally.
By adding more codes to the initial stack that are also translated to the left or the right we can generalized this to arbitrary $n$ qubit Cliffords, thus this protocol is efficient in the sense that the entanglement required scales linearly with the number of qubits.
Note that is already known how to perform $n-$qubit Clifford gates non-locally and efficiently, see for example \cite{Code-routing}.
The point here is to show that this can be done holographically.

We note also that this translation technique is essentially the same one used in Ref.~\cite{Osborne-Stiegemann}, but the addition of transversal Cliffords now allows for bulk interactions.

\begin{figure}
    \centering
    \includegraphics[width=0.6\textwidth]{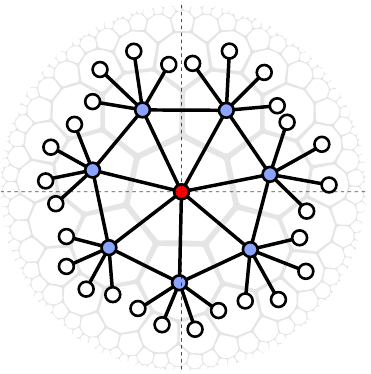}
    \caption{Holographic code constructed from Steane 7-3 code. The fundamental tensor it is built from admits transversal Cliffords, and thus so does the entire code.}
    \label{fig:7-3 code}
\end{figure}

\begin{figure}
    \centering
    \includegraphics[width=1.0\textwidth]{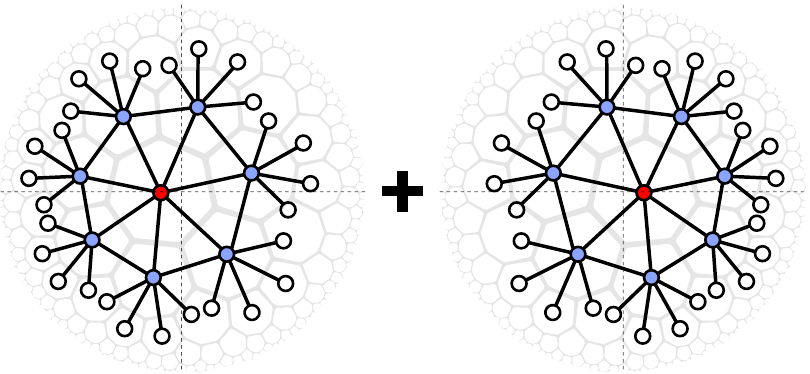}
    \caption{Our model consists of two such codes stacked on top of each other, one slightly translated to the left and one to the right. In this way when the stack is cut along the vertical the logical qubit indicated by the red dot can be reconstructed either on the west or east half depending on which code in the stack it's in.}
    \label{fig:translated 7-3 code}
\end{figure}

\begin{figure}
    \centering
    \includegraphics[width=0.5\textwidth]{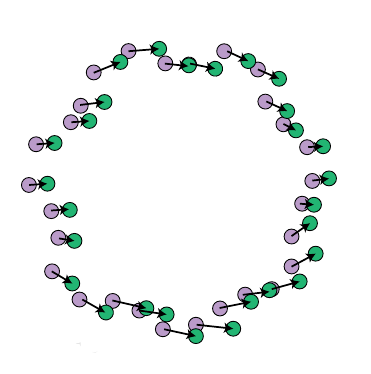}
    \caption{Code translation. To translate the code to the right, add ancillas (green dots) at the sites where the legs need to move, then swap the original physical registers (pink dots) appropriately. To translate back apply the exact same swap operation. Note that when the two codes in the stack are translated back to the center, their registers align, and thus a transversal gate can be applied. Another set of ancillas will need to be added to each of the codes to allow up and down translations. As more layers are added, the location of all dots converges towards the boundary.}
    \label{fig:traslated_ancillas}
\end{figure}

\begin{figure}
    \centering
    \includegraphics[width=1.0\textwidth]{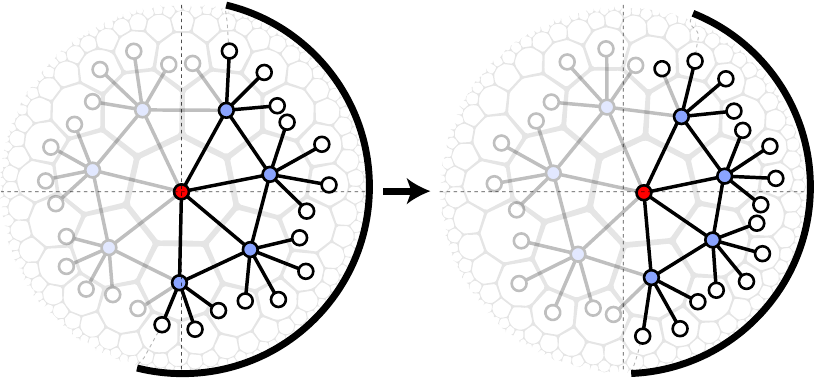}
    \caption{
    A right translation spreads information by less than $1/16$ of the boundary circumference. The spread of any vertical translation followed by a horizontal one will be less than $1/8$ as the many-body NLQC protocol requires. }
    \label{fig:boundary-change-from-translation}
\end{figure}


\section{Relation to previous work}
\label{sec: relation-to-previous-work}

Another demonstration that the CFT implements NLQC was given in Ref.~\cite{Alex-tasks}, as well as upper bounds on how much entanglement it consumes. However, derivation of these upper bounds relied on the assumption that certain boundary regions do not contribute. We repeat their demonstration below, and then argue that the contributions from these regions are in fact crucial.

Suppose the CFT is at Cauchy slice $\Sigma_c$ in a state $\ket{\psi({\Sigma_c})}$ in \cref{fig:alex-setup}, and that Alice and Bob%
\footnote{The language of Ref.~\cite{Alex-complexity} slightly differs from ours in that rather than Alice and Bob being two parties working together to implement the joint unitary, Alice has multiple agents trying to accomplish this. Bob instead is the one who gives Alice the original input systems at the input points and collects them at the output points.} 
can swap the information in finite memory systems they hold into CFT subsystems a points $c_0$ and $c_1$ respectively.
There should exist choices of $\ket{\psi({\Sigma_c})}$ such that as the state is evolved to Cauchy slice $\Sigma_r$, the bulk trajectory of these CFT subsystems is the one depicted in \cref{fig: bulk-task}, and that they undergo some joint unitary transformation in the center.
In the CFT picture, the information in $c_0$ ($c_1$) is processed in the $V_0$ ($V_1$) region, and then some information is sent to $W_0$ and some to $W_1$.
In $W_0$ ($W_1$), the information received from $V_0$ and $V_1$ is then processed so that at $r_0$ ($r_1$) the outgoing system is available.
This appears at first glance to have the form of a non-local computation protocol in which the resource systems $R_A$, $R_B$ contain entanglement drawn from $V_0$ and $V_1$ respectively.

\begin{figure}[t]
  \centering
    \includegraphics[width=0.6\textwidth]{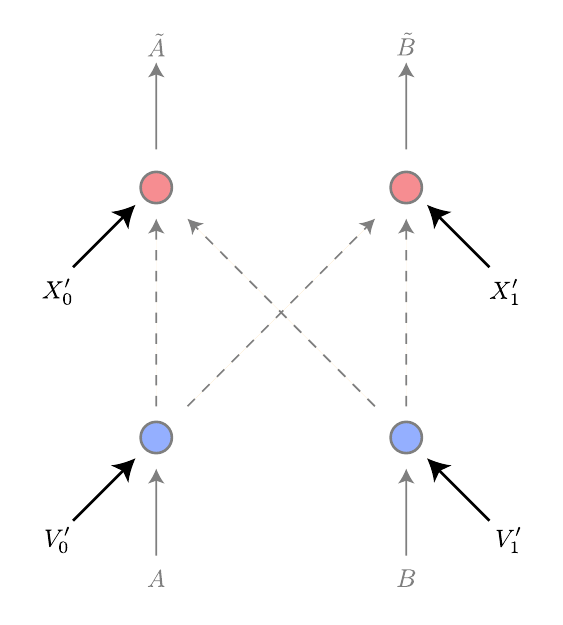}
    \caption{A non-local quantum computation variant in which some of the resource systems are not available until after the one round of communication.}
  \label{fig:NLQC-variant}
\end{figure}

However, additional resources in $X_0$ ($X_1$) may be involved in the post-processing stage that occurs in $W_0$ ($W_1$). 
Effectively, this is a variant model of NLQC in which some resources are initially unavailable until after the communication round, as depicted in \cref{fig:NLQC-variant}.
The assumption of Ref.~\cite{Alex-complexity} is that even in the variant NLQC protocol with post-communication resources, the mutual information between $V_0$ and $V_1$ still accurately represents the usefulness of the resource state for NLQC.
This setup has the nice property that the mutual information in the resource systems, i.e.\ $I(V_0:V_1)$ is then a finite quantity.
This then provides a limit to the amount of entanglement available to the protocol.
This may be used, for example, to place constraints on bulk dynamics from results known about NLQC.

\begin{figure}
    \centering
    \includegraphics[width=0.7\textwidth]{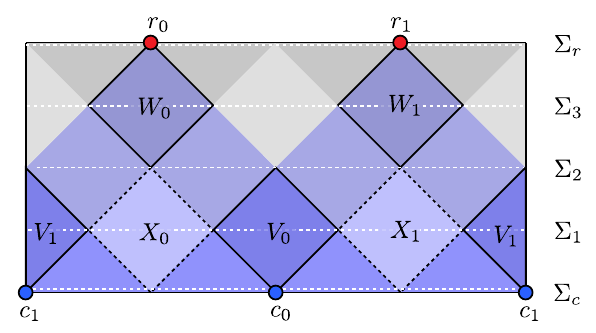}
    \caption{Aspects of the boundary relevant to the argument of Ref.~\cite{Alex-tasks} that AdS/CFT performs NLQC.}
    \label{fig:alex-setup}
\end{figure}

Three arguments are provided for this assumption:
The first is to view the assumption as a scientific hypothesis that has withstood several tests. In particular, this assumption has been used to arrive at several results relating mutual information and geometry in AdS/CFT which can be confirmed independently by gravity calculations \cite{Alex-connect-wedge-theorem,Alex-Branes,Alex-private-curves,AlexBeniJon}.

The second argument is that the region $P$ lies in the domain of dependence of the entanglement wedge\footnote{for this subsection only we use ``entanglement wedge'' to mean a spacetime region rather than a spacial region.} of $V_0V_1$. Thus all the data about the computation is contained in $V_0V_1$. 
However, direct extraction of an NLQC protocol requires making use of the locality of the boundary dynamics.
While the dynamics that implement the bulk computation can be reconstructed on the boundary in the region $V_0V_1$, this is generally a highly non-local reconstruction, in contrast to the local reconstruction with support on the full boundary.
When we argue below about the importance of the $X$ regions it will become clear that the data about the entanglement wedge of $W_0W_1$ is necessary in order to naturally produce an NLQC protocol in terms of $\ket{\Sigma_c}$ and the CFT Hamiltonian. 

The third argument provided in Ref.~\cite{Alex-complexity} relates to the general usefulness of post-communication resources in NLQC.
Ref.~\cite{Alex-connect-wedge-theorem} presents an example in which the variant NLQC protocol can use a resource state with no mutual information between $V_0$ and $V_1$ to accomplish an NLQC task for which the standard NLQC protocol is known to require such mutual information.
This appears to call into question the irrelevance of the additional post-communication resource systems $X_0$ and $X_1$, but the author points out that the state used in their example uses GHZ-type entanglement that is believed to be forbidden in holography, suggesting that in this restricted context the quantity $I(V_0:V_1)$ is still a good indicator of a resource state's usefulness.
However, we now argue that their example protocol (and in fact any protocol of this form) can be modified to remove the GHZ entanglement without adding mutual information between $V_0$ and $V_1$, once again calling the validity of the original assumption into question.

The resource state used in their example consists of four subsystems $V_0'V_1'X_0'X_1'$%
\footnote{the primed systems are finite-dimensional and represent the resources used by CFT across the regions $V_0,V_1,X_0,X_1$.},
and has no correlation between $V'_0$ and $V'_1$, i.e.\ $I(V'_0:V'_1)=0$.
Their specific protocol uses GHZ entanglement between $V'_0V'_1, X'_0,$ and $X'_1$.
However, an equivalent protocol arises by encrypting $X'_0$ with a one-time pad and adding a system containing the key to $V'_0$, so that Alice is able to decrypt $X'_0$ as soon as it is needed. 
No mutual information is added between $V'_0$ and $V'_1$, but the state is no longer equivalent to a GHZ state; this can be seen by tracing out $V'_0$ and noting that $X'_0$ is now maximally mixed, whereas a GHZ state would preserve classical correlations with $X'_1$.
Thus, this serves as an explicit counterexample to the assumption of Ref.~\cite{Alex-complexity} without relying on types of entanglement that are believed to not occur in holography.

Another way to see the importance of the $X$ regions is by considering explicit ways of evolving the system without them, and finding that an NLQC protocol can no longer be naturally extracted.
More explicitly, consider a channel which erases the $X$ regions from $\ket{\Sigma_1}$ and replaces them with some fixed state $\rho_X$, leaving the state
\begin{align*}
    \Tr_X{\ket{\Sigma_1}\bra{\Sigma_1}}\otimes \rho_X.
\end{align*}
We consider two possible ways to time evolve this system.
The first is to apply the original CFT Hamiltonian time evolution on all $V_0V_1X$.
The second is to remove terms which couple $V_0$ and $V_1$ to the $X$ regions, and evolve using this new Hamiltonian.

If we evolve in the first way, we must take into consideration the $X$ systems.
In QFT, if two coupled spatial regions are in a product state (i.e.\ unentangled), the energy-momentum tensor diverges at their boundary.
As we evolve the state, this ``firewall'' radically changes the course of events.
It is now unclear if a semiclassical geometry is even preserved outside the domain of dependence of $V_0V_1$.
If it is, the energy divergence manifests as a shock wave at the boundary of this region that causes pathological behaviour.
Although the computation can still occur, shortly afterwards the outward trajectories will be in the causal future of these pathologies, which will almost certainly cause sufficient noise to prevent the outgoing systems from being recovered from the $W$ regions.
Thus the system no longer shows signs of performing the NLQC.

If we evolve in the second way, then the $V_0$ and $V_1$ systems never interact and thus no bulk computation even occurs.
It could be argued that this is analogous to a non-traversable wormhole, and that the computation still occurs within the wormhole.
Even in this case, getting the systems out of the wormhole requires a direct interaction between $V_0$ and $V_1$ which itself needs to be implemented non-locally somehow.
Even if that is done, the absence of the $X$ regions means that when the particles come out of the wormhole they go back to near $c_0$ and $c_1$, and it does not appear that these new trajectories have a bulk/boundary causal structure discrepancy.
Thus again the system shows no signs of non-local computation.

In contrast, as we have seen, when the $X$ regions are included, an NLQC protocol naturally arises in terms of just the boundary state and Hamiltonian because of the guarantee that the outgoing systems reach the $W$ regions.

Finally, one might wonder whether if it is possible to show that for some unitary there must be entanglement present between $V_0'$ and $V_1'$ systems even in the modified NLQC task, as this would make possible the NLQC based proof technique suggested in \cite{Alex-connect-wedge-theorem} of certain relationships between bulk and boundary quantities.
However, in \cref{apx: entanglement-confined-to-X-regions} we give a proof that this is not so, i.e.\ for any unitary it is possible to perform the modified task with no entanglement between $V_0'$ and $V_1'$.

\section{Discussion }\label{sec: discussion}

AdS/CFT displays a discrepancy between the causal structure of the bulk and boundary. Specifically, it is possible for two particles coming in from the boundary to meet in the bulk, and then return to the boundary, while the same is impossible if they remain localized on the boundary (\cref{fig:causal-structure-discrepancy}). This suggests that the boundary performs non-local computation in some sense. However to rigorously prove that AdS/CFT performs the task of non-local quantum computation it is necessary to demonstrate operationally how the boundary state can be used as a resource for the task. Previous work has only done this non-constructively, and by assuming that large portions of the boundary can be ignored. 

In this paper we closed this gap by giving an explicit protocol in terms of a finite memory simulation of the CFT state. We started by showing how any one-dimensional local quantum system on a circular lattice implements some type of NLQC via the mechanism of many-body NLQC. We then argued that a finite-dimensional approximation of the CFT should exist in the form of a quantum simulation preserving the geometric properties of the CFT. Finally, applying many-body NLQC to the simulation gave a protocol which non-locally implements the local bulk computation. To make this concrete, we gave a toy model which can non-locally implement Clifford gates.



\subsection{Implications for Quantum Gravity }\label{subsec: implications-for-holography}

One application of this work to quantum gravity is to give an avenue for upper bounding the complexity attainable in the bulk for a specific geometric configuration, as discussed in \cite{Alex-connect-wedge-theorem,Alex-complexity}.
It has been conjectured in \cite{Alex-complexity} that there is a general lower bound on the entanglement required for a generic unitary family in terms of its complexity.
If this were the case, then we would be able to use arguments similar to those in Ref.~\cite{Alex-connect-wedge-theorem,Alex-complexity} to show an upper bound on complexity in the bulk.
That argument proceeds as follows.
Suppose the bound takes the form $E>f(C_U)$, with $E$ the entanglement required and $C_U$ the complexity of the bulk unitary $U$.
Take a finite memory simulation of the CFT with entanglement $E$, which can be used as a resource to perform $U$ non-locally.
This imposes an upper bound on complexity given by $C_U<f^{-1}(E)$ for implemented unitaries according to the hypothesized bound.
Of course, this only applies to those unitaries $U$ that are implemented in the bulk in an interaction region as in \cref{sec:computations}, but using plausible physical arguments we can extend such a bound to arbitrary regions.

Consider any unitary implemented in some sub-AdS scale spacetime region $R'$ in the holographic bulk.
Then we can construct a larger sub-AdS spacetime region $R\supset R'$ that has the properties required for the results of \cref{sec: NLQC-via-holographic-states} to apply.
We place a machine in $R\setminus R'$ that receives the input systems, feeds them into the computation in $R'$, then receives and routes them so that they leave $R$ in a direction perpendicular to their incoming trajectory.
Thus, $R$ is now an interaction region and so the results of \cref{sec: NLQC-via-holographic-states} apply, meaning that the NLQC protocol can be performed.
If this entanglement $E$ is not greater than the lower bound $f(C_U)$, we would arrive at a contradiction -- demonstrating that the unitary was too complex.
We can in fact obtain multiple constraints on the unitary through various partitions on the $R'$ input into the $A$ and $B$ inputs of $R$.  

\subsection{Implications for Quantum Information}

\subsubsection{Better NLQC protocols}

We argued in \cref{subsec:assumption} that if one accepts \cref{ass:bulkQC}, then all polynomially-complex unitaries can be performed non-locally with a polynomial amount of entanglement.
This would have drastic consequences for position-based cryptography.

The verifier would like to choose a computation which is too resource intensive for a dishonest prover using NLQC to perform. 
However, if the above argument holds true, then only high complexity unitaries would fulfill this requirement.
But such unitaries are also unfeasible for an honest prover to implement.
However, it may be possible to work around this by making assumptions on the limit of computational speed available to the prover. 

\subsubsection{Practical applications for holographic codes}
A number of toy models of holography have been studied that are built out of tensor networks, known as holographic codes \cite{HaPPY,CSS-holographic-codes,Gauging-the-bulk,bidirectional,spaghetti}.
Our method for converting approximate holographic states to resources for non-local computation works equally well for holographic codes if they exhibit dynamics in the following sense.
Let $V:\mathcal{H}_{bulk}\rightarrow \mathcal{H}_{boundary}$ be the encoding isometry of a holographic code.
We will say the code has dynamics if there exists local boundary time evolution that implements local bulk time evolution $U_{boundary}(t)V= V U_{bulk}(t)$ (we assume the code space does not change with time for simplicity).
In holographic codes both the physical and logical Hilbert spaces have an associated locality structure, in this case a closed 1-d lattice for the boundary and a 2-d hyperbolic lattice for the bulk.
Degrees of freedom live on the vertices and edges of the lattices, and time evolution is local with respect to them in the sense that the support of local operators conjugated by the time evolution can increase by a graph distance increasing with $t$\footnote{%
Note that the ``graph speed'' with which information can spread is constant on the boundary, i.e.\ the graph distance is proportional to $t$, but increases radially outwards in the bulk.
This is so because the speed with which the operators grow is the speed of light, and light travels at constant speed with respect to the cylindrical coordinates.}.
 A toy holographic code with these properties can have its bulk computation extracted via many-body NLQC. 

Thus far models have either lacked one of these features, e.g.\ superluminal signalling in Ref.~\cite{Kohler_2019}, or have trivial or non-interactive dynamics, e.g.\ \cite{Osborne-Stiegemann} and \cite{Gauging-the-bulk} respectively.
The toy model we introduced allows only for Clifford gates, and thus it remains to find codes with more intricate dynamics. 

This gives a potential practical application for holographic codes, since if one can be found which captures these features of AdS/CFT and has sufficiently intricate bulk dynamics, then by similar arguments we still expect to find a toy bulk state with a programmable quantum computer that can implement any unitary of bounded complexity. Even without such a state, the bulk dynamics may still yield a computation that current NLQC protocols cannot perform efficiently. 

\subsubsection{Suboptimality of Bell pairs}
It is interesting to note that the entropy $S$ between the two halves of the resource state, that is between two halves of a holographic CFT with a cutoff, does not agree with its (smooth) max Renyi entropy, implying non-maximal entanglement between the halves.
The latter goes like $S+O(\sqrt{S}$) \cite{Bao:2018pvs}.
On its own this fact is not remarkable, but there are several instances in quantum information in which such non-maximally entangled states perform better than maximally entangled states.
This cannot happen in a scenario in which both local operations and classical communication is allowed, as these operations are sufficient to convert $n$ maximally entangled Bell states into any bipartite state in which both sides hold $n$ qubits \cite{Nielsen-majorization}.
However, when communication is restricted, such as in non-local computation, then the possibility arises that the ideal resource is non-maximally entangled.
Two relevant examples are port-based teleportation \cite{port-teleportation} and teleportation by size \cite{quantum-gravity-in-the-lab}.
The former is the foundation of the best known general NLQC protocols, and its optimal resource state is not maximally entangled.
The latter is a toy model of information travelling through a wormhole.
There maximum entanglement corresponds to infinite temperature, where far less information can travel through.
Another, slightly more removed example is the embezzling state \cite{embezzling}, which allows one to coherently change the number of Bell pairs shared using only local operations. This again is a state with a non flat spectrum. 

Finally, it turns out that the set of local observables in quantum field theory is described not by the set of operators acting on a Hilbert space of finite or countably infinite dimension, but by what is known as a type $III_1$ algebra \cite{witten-entanglement-in-QFT}. These can be understood via the Araki-Woods construction, by considering sequences of operators that converge when acting on a state built from an infinite chain of non-maximally entangled qubit pairs. Simulating the CFT locally via the method of Ref.~\cite{Osborne-Stottmeister} amounts to truncating this chain at some value and computing only expectation values which are insensitive to the truncation. It seems possible that the dynamical duality present in AdS/CFT is reliant on this particular emergent algebraic structure, and since this requires non-maximally entangled states to simulate, and the simulation is thought to allow for new efficient NLQC protocols, perhaps we can find new protocols by studying such states more generally.
Thus a more general understanding of additional power granted by non-maximally entangled states in the context of local operations with limited classical communication may be useful in finding more efficient NLQC protocols or possibly holographic codes with dynamics.

\subsection{Future directions}

There are a number of natural future directions proceeding from this program.
The most promising is further study into what non-Clifford gates can be implemented transversally using multiple code blocks as in \cref{sec: toy-model}.
Codes with this property, as well as a threshold secret sharing scheme property, would immediately yield new classes of non-local computation protocols.

It would be interesting to explore whether there are tasks other than NLQC which are naturally associated with holography in higher dimensions, determine whether they have cryptographic interpretations, and to construct a general theory of such tasks. 

It would be good to know also whether in QFT simulations with no superluminal signalling it is necessarily possible to locally change the Cauchy slice, in which case the latter requirement would be strictly weaker, and should be used instead in the definition of simulation in order to determine the minimal properties of a simulation to extract the protocol.

It would be interesting also to find an expression for the implemented (pseudo) bulk dynamics $V_{AB\rightarrow\tilde{A}\tilde{B}}$
derived from many-body NLQC in terms of correlation functions of various local $(1+1)$-dimensional systems. Then if we can find a system whose correlation functions match those of a holographic system \cite{Maldacena-Duffin}, we can check to see if we can place a quantum computer in the bulk. This could be done, for example if some of the bulk dynamics could be described as a $\phi^4$ theory, by the methods of Ref.~\cite{Jordan-preskill}, where they describe how to build a universal quantum computer in such a theory given access to a potential one could control. It would remain then to figure out how to implement this potential using states of the theory itself.

Although we focused on a $(2+1)$-dimensional bulk, we note that a similar causal discrepancy exists in $3+1$ bulk dimensions.
In this case, a spatial slice of the boundary geometry is a $2$-sphere, and the four points ($c_0$, $c_1$, $r_0$ and $r_1$) can be placed maximally far apart in a tetrahedral configuration.
One can show that there is no point on the sphere $x$ with the property $\max(d(x,c_0),d(x,c_1)) + \max(d(x,r_0),d(x,r_1)) \leq \pi$, with $d(\cdot,\cdot)$ the arc distance along the sphere; this implies an analogous causal discrepancy as above.
However, it is unclear whether the non-local boundary process takes the specific form of an NLQC protocol.

It may also be possible to write down a direct simulation of a holographic CFT.
A candidate simulation ground state would be a tensor network state which mimics the entanglement structure of a holographic CFT, such as those proposed by Ref.~\cite{Bao:2018pvs}.
It would be necessary then to search for an appropriate discrete Hamiltonian, and check whether bulk correlation functions match what we expect from full AdS/CFT. 

It is interesting to note that although holography has allowed the application of information-theoretic tools and results to the study of gravity, it is uncommon that the connection is applied to obtain new results in quantum information theory.
For this reason, we find especially intriguing the idea that holographic theories perform novel kinds of information processing that can be applied to the study of NLQC.

\acknowledgments
We thank Alex May and Patrick Hayden for many discussions on these ideas over several years, and Florian Speelman for helping us prove that entanglement can be relegated purely to the $X$ regions.
We also thank Jon Sorce, Alexander Stottmeister, Aditya Cowsik and Raghu Mahajan for useful discussions. 

\appendix

\section{Simulation error correction properties from correlation functions}\label{apx: ecc-from-correlation-functions}

We will now show that the simulation inherits the error correcting properties of the CFT.
In particular we prove equations \cref{eq: sim-init-ecc} and \cref{eq: sim-fin-ecc}.
 In this appendix we treat the exact case, while in later appendices we take error into account. 

Suppose that at some fixed time $t$ there is an $N$-dimensional emergent bulk logical system $L$ contained in the entanglement wedge of a boundary physical region $R$ of the CFT. Suppose the system is of sufficiently low energy to avoid back reaction (see \cref{sec:computations}), and in particular, we take this to mean that the JLMS entropy condition \cite{Jafferis_2016} holds up to order $G_N$. The results of \cite{Jordan_2017} then tell us that for the bulk to boundary encoding map\footnote{$\mathcal{D}(\mathcal{H})$ denotes the set of density matrices on a Hilbert space $\mathcal{H}$.}
\begin{align*}
    \mathcal{N}:\mathcal{D}(\mathcal{H}_L)\rightarrow \mathcal{D}(\mathcal{H}_{CFT})
\end{align*}
there is a decoding map $\mathcal{R}$ that acts on the region $R$, which can approximately recover $L$ in the sense that for all $\rho_L\in\mathcal{D}(\mathcal{H}_L)$ we have that
\begin{align*}
    ||\rho_L-\mathcal{R}\circ\mathcal{N}(\rho_L)||_1\leq O(\sqrt{G_N}).
\end{align*}
Furthermore, letting $\mathcal{R}^\dagger$ be the adjoint of $\mathcal{R}$, we have that correlation functions are approximately preserved.
Specifically, for any set of logical operators $\{O^i_L\}_{i\in\{1,...,m\}}$, and letting $U_t=e^{-iH_Pt}$ , we have that
\begin{align}\label{eq: approx-corr-functions}
    | \ev{O_{L}^1 \cdots O_{L}^m } - \bra{\zc} U_t^\dagger \mathcal{R}^\dagger(O^1_{L}) \cdots \mathcal{R}^\dagger(O^m_{L}) U_t \ket{\zc}| \leq \sqrt{G_N}\times \mathrm{poly}(m)\times\opnorm{O_{L}^1} \cdots \opnorm{O_{L}^m},
\end{align}
where the first correlation function is taken with respect to some fixed logical state $\ket{0}_L$.
A very similar equation also arises when connecting the logical system to the simulation.
For this reason, we put it into a more abstract form.
For any $O_P^i$ in the image of $\mathcal{R}^\dagger$, i.e.\ $O_P^i = \mathcal{R}^\dagger(O_L^i)$, we define the map $\alpha_t'(O_P^i) = O_L^i$, which can in general be replaced by a linear and hermiticity-preserving map from a subspace of physical operators to logical operators.
Similarly $U_t$ is replaced by some general time evolution on the physical space.
Finally, we express the coefficient  on the right hand side as $\eta = \sqrt{G_N}\times\mathrm{poly}(m)$.

Then for any $\{O^i_P\}$ in the domain of $\alpha_t'$, defining the Heisenberg operators $O^i_P(t):=U_t^\dagger O_P^i U_t$, we have that
\begin{align}\label{eq: approx-corr-functions}
    | \ev{\alpha'_t(O_{P}^1) \cdots \alpha'_t(O_{P}^m) } - \ev{O_{P}^1(t) \cdots O_{P}^m(t) }| \leq \eta||\alpha'_t(O_{P}^1)\| \cdots \| \alpha'_t(O_{P}^m) ||.
\end{align}
We can now describe how an equivalent result arises in relating the simulation to the logical system directly.
We use a simulation so that the set of simulated operators is $\mathcal{M} = \mathcal{R}^\dagger(L(\mathcal{H}_L))$.
In that case, the condition from \cref{sec: simulating-holographic-cfts} that there exist a map $\alpha:\mathcal{H}_{sim} \to \mathcal{H}_{CFT}$ and a set of operators $\mathcal{M}'$ such that $\alpha(\mathcal{M}') = \mathcal{M}$ means that for every operator $M'\in \mathcal{M}'$, we can assign some corresponding operator $O_L^i \in L(\mathcal{H}_L)$ such that $\alpha(M') = \mathcal{R}^\dagger (O_L^i)$.
We define this assignment as $\alpha'_t(M') = O_L$.
Here $\alpha'_t$ can be chosen to be linear and hermiticity-preserving.
Then in \cref{eq: approx-corr-functions} we obtain $O^i_P(t) \defi U_t^{\prime \dagger}  M' U'_t $, with $U'_t$ the simulation time evolution.
This equation is valid using \cref{eq: CFT-sim-correlation-functions-relationship} by setting $\eta$ to now scale as $\sqrt{G_N}\mathrm{poly}(m) + \delta$.

\subsection{Exact case}
For the exact case, we set $\eta = 0$.
Now we wish to prove the following.
\begin{theorem}\label{thm: exact-isometry}
There is an isometry $V:\mathcal{H}_{L} \to \mathcal{H}_P$ such that for any logical operators $O_L$ there is a corresponding physical operator $O_P$ such that $O_P V = V O_{L}$ and $[O_P, V V^\dagger] = 0$.
\end{theorem}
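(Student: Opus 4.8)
The plan is to read the $\eta = 0$ case of \cref{eq: approx-corr-functions} as an \emph{exact} matching of logical and physical correlation functions, and to reconstruct $V$ from it by a GNS-type construction. Fix the reference logical vector $\ket{0}_L$ together with a basis $\{\ket{i}\}$ of $\mathcal{H}_L$ with $\ket{0} = \ket{0}_L$, and let $E_{i0} = \ketbra{i}{0}_L$ be the logical operators linking them. Since $\alpha'_t$ is linear, hermiticity-preserving, and surjective onto $L(\mathcal{H}_L)$, I would first fix a linear, hermiticity-preserving section $O_L \mapsto O_P$ of it, assigning to each logical operator a physical reconstruction with $\alpha'_t(O_P) = O_L$ and $(O_L^\dagger)_P = O_P^\dagger$. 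I then define the candidate isometry by $V\ket{i} := (E_{i0})_P\ket{0}_P$ extended linearly, where $\ket{0}_P$ is the physical reference state. Throughout, the exact identity in use is $\bra{0}_L O_L^1\cdots O_L^m\ket{0}_L = \bra{0}_P O_P^1\cdots O_P^m\ket{0}_P$ for logical operators $O_L^k$ and their reconstructions $O_P^k$.

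First I would check that $V$ is an isometry: using the hermiticity-preserving section, $\bra{i}V^\dagger V\ket{j} = \bra{0}_P (E_{0i})_P (E_{j0})_P \ket{0}_P$, and the two-operator ($m=2$) instance of the exact identity rewrites this as $\bra{0}_L E_{0i} E_{j0}\ket{0}_L = \delta_{ij}$, so $V^\dagger V = \Id_L$. Next, for a general $O_L$ with reconstruction $O_P$, the three-operator identity gives $\bra{k}V^\dagger O_P V\ket{i} = \bra{0}_L E_{0k}O_L E_{i0}\ket{0}_L = (O_L)_{ki}$, i.e.\ $V^\dagger O_P V = O_L$; thus $O_P$ already implements $O_L$ \emph{on} the code space.

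The crux is to upgrade this to the full operator identity $O_P V = V O_L$, i.e.\ to show that $O_P V\ket{i}$ leaks nothing outside the code space. I would establish this by a norm computation,
\begin{align*}
    \| (\Id - VV^\dagger)O_P V\ket{i}\|^2 = \bra{i}V^\dagger O_P^\dagger O_P V\ket{i} - \| VV^\dagger O_P V\ket{i}\|^2 .
\end{align*}
Here the $m=4$ hypothesis is essential: the first term is a four-operator correlator $\bra{0}_P (E_{0i})_P O_P^\dagger O_P (E_{i0})_P\ket{0}_P$, which the exact identity converts to $\bra{0}_L E_{0i}O_L^\dagger O_L E_{i0}\ket{0}_L = \|O_L\ket{i}\|^2$, while the second term equals $\|V O_L\ket{i}\|^2 = \|O_L\ket{i}\|^2$ by the isometry property and $V^\dagger O_P V = O_L$. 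The two cancel, so $O_P V = V O_L$. Finally this gives $O_P (VV^\dagger) = V O_L V^\dagger = (VV^\dagger) O_P (VV^\dagger)$; running the same argument for $O_L^\dagger$ (whose reconstruction is $O_P^\dagger$ by the choice of section) and taking adjoints yields $(VV^\dagger)O_P = (VV^\dagger)O_P(VV^\dagger)$, and combining the two gives $[O_P, VV^\dagger] = 0$.

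I expect the main obstacle to be this no-leakage step: the naive reconstruction only controls the code-space block $V^\dagger O_P V$, and ruling out the off-diagonal blocks genuinely requires the four-point matching, which is precisely why $m=4$ is imposed in the application. A secondary point to handle carefully is the existence of a linear, hermiticity-preserving section of $\alpha'_t$ — needed so that the adjoint of a reconstruction reconstructs the adjoint — together with the bookkeeping that every correlator invoked has at most four factors, each lying in the simulated operator space $\vecspace'$ so that the hypothesis applies.
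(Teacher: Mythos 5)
Your proposal is correct, and it proves the theorem by the same overall strategy as the paper -- a GNS-type construction of $V$ from the exact matching of correlators with at most four factors -- but the two key steps are executed genuinely differently. The paper works with the generalized Pauli basis: it sets $\ket{i} = X_P(t)^i\ket{0}$, proves the intertwining relation only for the unitaries $X_P^aZ_P^b$, and gets it by noting that $X_P^aZ_P^bV\ket{i_L}$ and $VX_L^aZ_L^b\ket{i_L}$ are both \emph{unit} vectors whose inner product equals one (a four-point correlator computation); general $O_L$ are then covered by linearity. This argument leans on the reconstructions being unitary, which the paper separately justifies (cf.\ \cref{lemma: unitary-reconstruction}). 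Your argument instead uses matrix units $E_{i0}$ together with an arbitrary linear, hermiticity-preserving section of $\alpha'_t$, and replaces the unit-overlap trick by the Pythagorean cancellation $\|(\Id - VV^\dagger)O_PV\ket{i}\|^2 = \|O_PV\ket{i}\|^2 - \|VV^\dagger O_PV\ket{i}\|^2 = \|O_L\ket{i}\|^2 - \|O_L\ket{i}\|^2 = 0$. What this buys you is that no unitarity of the physical reconstruction is ever needed, and arbitrary logical operators are handled directly rather than via a spanning set; the price is the (mild) extra step of constructing the hermiticity-preserving section, which exists by symmetrizing any linear section, $O_L \mapsto \tfrac{1}{2}\bigl(s(O_L) + s(O_L^\dagger)^\dagger\bigr)$, using that $\alpha'_t$ is hermiticity-preserving. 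Both proofs consume the same correlator budget ($m=4$, exactly as the paper's simulation requirements anticipate), your isometry check matches the paper's two-point computation, and your final step for $[O_P,VV^\dagger]=0$ -- running the intertwining argument for $O_L^\dagger$ and taking adjoints -- is the same manipulation the paper performs implicitly when it commutes $X_L^aZ_L^b$ past the logical identity.
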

\begin{proof}
Consider a basis $\{\ket{i_L}\}_{i=1}^{N}$ for the system $\mathcal{H}_{L}$.
Define the generalized Pauli operators $X_L = \sum_j \ket{(j+1)_L}\bra{j_L}$ and $Z_L = \sum_j \exp(2 \pi i j/N) \ketbra{j_L}$.
Let some choice of corresponding boundary operators be denoted $X_P(t)$ and $Z_P(t)$ respectively, noting that their Schrodinger picture versions $X_P\defi U_t X_P(t) U_t^\dagger $ and $Z_P\defi U_t Z_P(t) U_t^\dagger $ are localized to the region $P$.
Define $\ket{i} = X_P(t)^i \ket{0}$, and the isometry $V = U_t \sum_i \ketbra{i}{i_L}$.
This is indeed an isometry, because it maps an orthonormal basis to an orthonormal basis:
\begin{align}
    \braket{i}{j} &= \bra{0} X_P(t)^{i\dagger} X_P(t)^j \ket{0} \\
    &= \bra{0_L} X_L^{i\dagger} X_L^j \ket{0_L} \\
    &= \delta_{ij}
\end{align}
We now show that for any $i_L$, $a$, and $b$,
\begin{align}
    X_P^a Z_P^b V \ket{i_L} = V X_L^a Z_L^b \ket{i_L}, \label{eq:physicalpaulis}
\end{align}
from which the first claim follows by linearity.
Note that both sides of this equation are normalized, as the operators are each isometric; thus it suffices to show that the inner product of one side with the other gives one.
\begin{align}
     \bra{i_L}Z_L^{\dagger b} X_L^{\dagger a}  V^\dagger X_P^a Z_P^b V \ket{i_L} &=\sum_j \bra{i_L}Z_L^{\dagger b} X_L^{\dagger a}  \ketbra{j_L}{j} U_t^\dagger X_P^a Z_P^b U_t \ket{i} \\
     &=\sum_j \bra{i_L}Z_L^{\dagger b} X_L^{\dagger a}  \ketbra{j_L}{0}X_P(t)^{\dagger j} X_P(t)^a Z_P(t)^b X_P(t)^i \ket{0} \\
     &=\sum_j \bra{i_L}Z_L^{\dagger b} X_L^{\dagger a}  \ketbra{j_L}{0_L}X_L^{\dagger j} X_L^a Z_L^b X_L^i \ket{0_L} \\
     &=\sum_j |\bra{i_L}Z_L^{\dagger b} X_L^{\dagger a}  \ket{j_L} |^2 \\
     &= \sum_j \delta_{i+a,j} \\
     &= 1.
\end{align}
The codespace-preserving property $[VV^\dagger, X_P^a Z_P^b] = 0$ is now easy to show, using
\begin{align}
    X_P^a Z_P^b VV^\dagger &= X_P^a Z_P^b V(\sum_i \ketbra{i_L} ) V^\dagger \\
    &= V X_L^a Z_L^b (\sum_i \ketbra{i_L} ) V^\dagger\text{, using \cref{eq:physicalpaulis}} \\
    &= V (\sum_i \ketbra{i_L} ) X_L^a Z_L^b  V^\dagger \text{, commuting past the identity}\\
    &= V V^\dagger X_P^a Z_P^b  
\end{align}
\end{proof}

The theorem above can be straightforwardly generalized to the case in which there are a number $N$ of emergent systems, $L_i$, each contained in the entanglement wedge of a separate boundary region $P_i$.
In this case the isometry $V:\otimes_i \mathcal{H}_{L_i}\rightarrow \otimes_i \mathcal{H}_{P_i}$ will have the property that for any $i$ and and operator $O_{L_i}$ acting on $L_i$ will have a corresponding operator $O_{P_i}$ localized to $P_i$ such that $VO_{L_i}=O_{P_i}V$ and $[VV^\dagger,O_{P_i}]=0$.
This makes it a subsystem code, with the logical information in $L_i$ robust against the erasure of $\bar{P}_i$.

\section{Error from approximate correlation functions}\label{appendix: err-from-aprox-corr-func}

We now consider \cref{eq: approx-corr-functions} for the case of small but non-zero $\eta$. 

\subsection{Approximate code}
Let $f(\ketbra{i_L}{0_L})$ be an operator acting on the $P$ region of the physical system such that $\alpha'_t(f(\ketbra{i_L}{0_L}))=\ketbra{i_L}{0_L}$.
Then we define $\ket{i}:=f(\ketbra{i_L}{0_L})\ket{0}$. We can then define $V_t$ in the exact case of \cref{thm: exact-isometry}, i.e.\ $V_t=\sum_i U_t\ket{i}\bra{i_L}$.
Now $V_t$ will no longer be an exact isometry, but we now show it approximates one. First we will need the following simple lemma.

\begin{lemma}\label{lemma: sum-in-corr-func}
 Consider a set of logical operators $\{O^{i,\mu_i}_L\}$ with $i\in\{1,...,n\}$ and $\mu_i\in\{1,...,m_i\}$ for some integer $m_i$, and boundary operators confined to $P$ $\{O^{i,\mu_i}_P\}$ which map to them respectively under $\alpha'_t$. Then
 \begin{align*}
      | \sum_{\mu_i}(\ev{O_{L}^{1,\mu_1} \cdots O_{L}^{n,\mu_n} } - \ev{O_{P}^{1,\mu_1}(t) \cdots O_{P}^{n,\mu_n}(t) })| \leq \eta||\sum_{\mu_i}O_{L}^{1,\mu_1} \cdots O_{L}^{n,\mu_n} ||
 \end{align*}
\end{lemma}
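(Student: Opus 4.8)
The plan is to reduce the summed statement to a single $n$-point correlation-function difference by exploiting multilinearity, and then to invoke the approximate correlation-function estimate of \cref{eq: approx-corr-functions} in the sharper form where the right-hand side carries the norm of the full operator product. First I would collect the summed operators. Since an operator product is multilinear in its factors and the sum $\sum_{\mu_i}$ runs independently over each index $\mu_i\in\{1,\dots,m_i\}$ (i.e.\ it is shorthand for $\sum_{\mu_1,\dots,\mu_n}$), I can set $P_L^i\defi\sum_{\mu_i}O_L^{i,\mu_i}$ and $P_P^i\defi\sum_{\mu_i}O_P^{i,\mu_i}$, so that
\begin{align*}
    \sum_{\mu_i} O_L^{1,\mu_1}\cdots O_L^{n,\mu_n} = P_L^1 \cdots P_L^n,
    \qquad
    \sum_{\mu_i} O_P^{1,\mu_1}(t)\cdots O_P^{n,\mu_n}(t) = P_P^1(t)\cdots P_P^n(t).
\end{align*}
Because $\alpha'_t$ is linear, $\alpha'_t(P_P^i)=P_L^i$, so the $P_P^i$ are exactly the physical operators corresponding to the collected logical operators $P_L^i$; and using linearity of the expectation and of the Heisenberg map $O\mapsto U_t^\dagger O U_t$, the left-hand side of the lemma collapses to $|\ev{P_L^1\cdots P_L^n}-\ev{P_P^1(t)\cdots P_P^n(t)}|$.

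Next I would apply the approximate correlation-function estimate to these $n$ collected operators. The delicate point is that I need the bound with the single operator norm $\|P_L^1\cdots P_L^n\|=\|\sum_{\mu_i}O_L^{1,\mu_1}\cdots O_L^{n,\mu_n}\|$ on the right, rather than the looser product of individual norms $\|P_L^1\|\cdots\|P_L^n\|$ appearing in \cref{eq: approx-corr-functions}. Fortunately the underlying simulation estimate \cref{eq: CFT-sim-correlation-functions-relationship} is already phrased with the norm of the \emph{entire} product, so I would re-run the derivation of \cref{eq: approx-corr-functions} keeping the product intact: the $\delta$-contribution feeds through directly, and the $\sqrt{G_N}$ error-correction contribution from \cite{Jordan_2017} can be organized the same way, yielding $|\ev{P_L^1\cdots P_L^n}-\ev{P_P^1(t)\cdots P_P^n(t)}|\le\eta\|P_L^1\cdots P_L^n\|$ with $\eta=\sqrt{G_N}\,\mathrm{poly}(m)+\delta$. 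Combining this with the multilinearity identity gives the claim.

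The main obstacle is precisely this norm-of-product refinement. Because $\alpha'_t$ preserves only the vector-space structure and not products (as emphasized in the definition of the simulation), I cannot collapse the physical side $P_P^1(t)\cdots P_P^n(t)$ into a single physical operator and appeal to expectation-value preservation; the full $n$-point structure must be retained throughout. The real work therefore lies in verifying that the original derivation of \cref{eq: approx-corr-functions} never needed to split the product into separate factors — that each estimate can be bounded against $\|P_L^1\cdots P_L^n\|$ rather than the individual $\|P_L^i\|$ — which is exactly where the fact that \cref{eq: CFT-sim-correlation-functions-relationship} is stated in terms of the norm of the whole product becomes essential.
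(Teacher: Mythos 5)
Your proposal is essentially the paper's own proof: the paper likewise sets $\tilde{O}^i_L := \sum_{\mu_i} O^{i,\mu_i}_L$ and $\tilde{O}^i_P := \sum_{\mu_i} O^{i,\mu_i}_P$, uses linearity of $\alpha'_t$ to conclude $\alpha'_t(\tilde{O}^i_P) = \tilde{O}^i_L$, and then applies \cref{eq: approx-corr-functions} to these collected operators --- exactly your multilinearity reduction. The ``norm-of-product'' refinement you single out as the real work is not carried out in the paper at all: its proof simply cites \cref{eq: approx-corr-functions}, whose right-hand side is the product of individual norms $\opnorm{\tilde{O}^1_L}\cdots\opnorm{\tilde{O}^n_L}$ and hence, by submultiplicativity, only delivers a weaker inequality than the stated bound $\eta\,\opnorm{\tilde{O}^1_L\cdots\tilde{O}^n_L}$ --- so you have flagged a genuine looseness in the paper's own argument rather than a defect of yours, though note that your sketched repair is itself incomplete, since \cref{eq: CFT-sim-correlation-functions-relationship} controls errors by the norm of the product of \emph{simulation} operators while the lemma's bound involves the product of \emph{logical} operators, and the merely linear (non-multiplicative) map $\alpha'_t$ does not identify these two norms.
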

\begin{proof}
Let $\tilde{O}^i_L:=\sum_{\mu_i}O^{i,\mu_i}_L$ and $\tilde{O}^i_P:=\sum_{\mu_i}O^{i,\mu_i}_P$. Because $\alpha'_t$ is linear, we have that $\alpha'_t(\tilde{O}^i_P)=\tilde{O}^i_L$. The result then follows by the assumption \cref{eq: approx-corr-functions}.
\end{proof}

With this we are ready to prove

\begin{lemma}
 $\opnorm{V_t^\dagger V_t-\mathds{1}}=O(\eta)$
\end{lemma}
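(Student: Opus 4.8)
The plan is to bound $\opnorm{V_t^\dagger V_t - \mathds{1}}$ through the quadratic form of this Hermitian operator, which is what keeps the estimate independent of the logical dimension $N$; a crude entry-by-entry argument on the $N^2$ matrix elements would only yield the far weaker $O(N\eta)$. First I would record that the $U_t$ factors in $V_t = \sum_i U_t\ket{i}\bra{i_L}$ cancel in $V_t^\dagger V_t$, so that in the logical basis its matrix elements are exactly the overlaps $\braket{i}{j} = \bra{0}f(\ketbra{i_L}{0_L})^\dagger f(\ketbra{j_L}{0_L})\ket{0}$ -- that is, physical two-point correlation functions of operators whose images under $\alpha'_t$ are $\ketbra{i_L}{0_L}$ and $\ketbra{j_L}{0_L}$. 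The matching logical correlator is $\bra{0_L}\ketbra{0_L}{i_L}\ketbra{j_L}{0_L}\ket{0_L} = \delta_{ij}$, so $V_t^\dagger V_t$ is close to $\mathds{1}$ entrywise; the work is to upgrade this to a norm statement.

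Since $V_t^\dagger V_t - \mathds{1}$ is Hermitian I would use
\begin{equation*}
\opnorm{V_t^\dagger V_t - \mathds{1}} = \sup_{\|\vec c\| = 1}\Big|\sum_{i,j}\bar c_i c_j\big(\braket{i}{j} - \delta_{ij}\big)\Big|,
\end{equation*}
and for each unit vector $\vec c$ absorb the coefficients into the operators, setting $O_L^{1,i} := \bar c_i\ketbra{0_L}{i_L}$ and $O_L^{2,j} := c_j\ketbra{j_L}{0_L}$, with physical preimages under $\alpha'_t$ (available because $\alpha'_t$ is linear and hermiticity-preserving). The logical sum telescopes to a rank-one operator, $\sum_{i,j}O_L^{1,i}O_L^{2,j} = \ketbra{0_L}{0_L}$, whose expectation in $\ket{0_L}$ is exactly $1 = \|\vec c\|^2 = \sum_{i,j}\bar c_i c_j\delta_{ij}$, while the corresponding physical correlator reproduces $\sum_{i,j}\bar c_i c_j\braket{i}{j}$.

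I would then apply \cref{lemma: sum-in-corr-func} to this pair of summed operators, which bounds the difference of the two correlators by $\eta\,\opnorm{\sum_{i,j}O_L^{1,i}O_L^{2,j}} = \eta\,\opnorm{\ketbra{0_L}{0_L}} = \eta$, uniformly in $\vec c$. Taking the supremum then gives $\opnorm{V_t^\dagger V_t - \mathds{1}} \le \eta = O(\eta)$. The crux of the argument -- and the reason \cref{lemma: sum-in-corr-func} is the natural tool rather than applying \cref{eq: approx-corr-functions} to individual matrix elements -- is precisely that, after absorbing the coefficients, the summed logical operator collapses to the projector $\ketbra{0_L}{0_L}$ of norm $1$, independent of $N$; this is exactly what prevents a dimension-dependent factor from appearing in the final bound.
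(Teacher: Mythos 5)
Your proof is correct and takes essentially the same route as the paper: both reduce $\opnorm{V_t^\dagger V_t-\mathds{1}}$ to a difference between logical and physical correlation functions and then invoke \cref{lemma: sum-in-corr-func} with the coefficients absorbed into the operators, so that the summed logical operator collapses to $\ketbra{0_L}{0_L}$ and the bound stays independent of the logical dimension. The only difference is one of execution: you expand the quadratic form in the exactly orthonormal logical basis $\{\ket{i_L}\}$ and need a single application of the lemma, whereas the paper expands the maximizing vector in the only approximately orthonormal physical basis $\{\ket{i}\}$ and therefore needs a second application to control $\sum_{i,l}c^{i*}_\psi c^{l}_\psi\braket{i}{l}$, making your version slightly cleaner.
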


\begin{proof}
\begin{align*}
    \opnorm{V_t^\dagger V_t-\mathds{1}}&= \sum_{i,j}\braket{\psi}{i}\braket{i_L}{j_L}\braket{j}{\psi}-1
\end{align*}
for the $\ket{\psi}$ which maximizes this expression. Clearly $\ket{\psi}$ can be chosen to be in the span of $\{\ket{i}\}$. Writing $\ket{\psi}=\sum_i c^i_\psi \ket{i}$ we have
\begin{align*}
    \opnorm{V_t^\dagger V_t-\mathds{1}}&= \sum_{i,j,k,l}c^{k*}_\psi c^{l}_\psi\braket{k}{i}\braket{i_L}{j_L}\braket{j}{l}-1\\
    &=\sum_{i,k,l}c^{k*}_\psi c^{l}_\psi\braket{k}{i}\braket{i}{l}-1\\
    &=  \sum_{i,k,l}c^{k*}_\psi c^{l}_\psi\bra{0}f^\dagger_t(\ketbra{k_L}{0_L})f_t(\ketbra{i_L}{0_L})\ket{0}\braket{i}{l}-1\\
    &\leq \left|\sum_{i,k,l}c^{k*}_\psi c^{l}_\psi\braket{0_L}{0_L}\braket{k_L}{i_L}\braket{0_L}{0_L}\braket{i}{l}\right|
    +\eta 
    ||\sum_{i,k,l} c^{k*}_\psi \braket{i}{l} c^{l}_\psi \ket{0_L}\braket{k_L}{i_L}\bra{0_L}||
    -1\\
    &= \left|\sum_{i,l}c^{i*}_\psi c^{l}_\psi\braket{i}{l}\right|+
    \eta ||\sum_{i,l} c^{l}_\psi c^{i*}_\psi \braket{i}{l}  \ket{0_L}\bra{0_L}||-1\\
    &=\left|\sum_{i,l}c^{i*}_\psi c^{l}_\psi\braket{i}{l}\right|(1+\eta)-1
\end{align*}
But
\begin{align*}
    \left|\sum_{i,l}c^{i*}_\psi c^{l}_\psi\braket{i}{l}\right| &=\left|\sum_{i,l}c^{i*}_\psi c^{l}_\psi\bra{0}f^\dagger_t(\ketbra{i_L}{0_L})f_t(\ketbra{l_L}{0_L})\ket{0}\right|\\
    &\leq \left|\sum_{i,l}c^{i*}_\psi c^{l}_\psi\braket{0_L}{0_L}\braket{i_L}{l_L}\braket{0_L}{0_L}\right|
    +\eta 
    ||\sum_{i,l} c^{i*}_\psi c^{l}_\psi \ket{0_L}\braket{i_L}{l_L}\bra{0_L}||\\
    &=1+O(\eta).
\end{align*}
so the result follows.
\end{proof}

As the approximation improves, we are guaranteed that $V_t^\dagger V_t$ is positive semidefinite. Thus we can define the exact isometry $\tilde{V}_t:=V_t(V_t^\dagger V_t)^{-\frac{1}{2}}$. With the previous lemma we can prove a number of useful properties of $V_t$ and $\tilde{V}_t$:

\begin{align*}
    \opnorm{V_t^\dagger V_t}\leq \opnorm{V_t^\dagger V_t-\mathds{1}}+\opnorm{\mathds{1}}=1+O(\eta)\\
    \opnorm{V_t}=\sqrt{\opnorm{V_t^\dagger V_t}}=1+O(\eta)\\
    \opnorm{(V_t^\dagger V_t)^{-\frac{1}{2}}}=\opnorm{V_t^\dagger V_t}^{-\frac{1}{2}}=1+O(\eta)\\
    \opnorm{V_t-\tilde{V}_t}\leq \opnorm{V_t}\times\opnorm{1-(V_t^\dagger V_t)^{-\frac{1}{2}}}=O(\eta)
\end{align*}

the last of which guarantees that all the previous hold when we replace $V_t$ with $\tilde{V}_t$.

\begin{lemma}
 For any operator $O_L$ acting on the logical system, and a physical operator $O_P$ confined to $P$ such that $\alpha'_t(O_P)=O_L$, we have that
 \begin{align*}
     \opnorm{V_tO_L-O_PV_t}&=O(\sqrt{\eta})\opnorm{O_L}\text{, and}\\
     \opnorm{[O_P,V_t V^\dagger_t]}&=O(\sqrt{\eta})\opnorm{O_L}.
 \end{align*}
 Furthermore the same holds if we replace $V_t$ with $\tilde{V_t}$.
\end{lemma}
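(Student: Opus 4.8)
The plan is to prove the intertwining bound $\opnorm{V_tO_L-O_PV_t}=O(\sqrt{\eta})\opnorm{O_L}$ first, and then read the commutator bound off of it. I would bound the operator norm by fixing an arbitrary normalized $\ket{\chi}$ in the logical space and estimating $\opnorm{(V_tO_L-O_PV_t)\ket{\chi}}^2 = \opnorm{V_tO_L\ket{\chi}}^2 + \opnorm{O_PV_t\ket{\chi}}^2 - 2\,\mathrm{Re}\,\bra{\chi}O_L^\dagger V_t^\dagger O_P V_t\ket{\chi}$. The strategy is to show that each of these three scalars equals $\opnorm{O_L\ket{\chi}}^2$ up to an additive error $O(\eta)\opnorm{O_L}^2$, so that the leading terms cancel as $1+1-2=0$ and the whole squared norm is $O(\eta)\opnorm{O_L}^2$. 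Taking the square root is precisely where the advertised $\sqrt{\eta}$ appears, so the bound is genuinely weaker than the naive $O(\eta)$ one might expect.

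For the first term I would use the preceding lemma $\opnorm{V_t^\dagger V_t-\mathds{1}}=O(\eta)$, which immediately gives $\opnorm{V_tO_L\ket{\chi}}^2=\opnorm{O_L\ket{\chi}}^2+O(\eta)\opnorm{O_L}^2$. The other two terms are genuine correlation functions in the physical state $\ket{0}$: writing $V_t\ket{l_L}=U_t f_t(\ketbra{l_L}{0_L})\ket{0}$ and pulling $U_t$ through $O_P$ to form the Heisenberg operator $O_P(t)=U_t^\dagger O_P U_t$, the cross term becomes a sum of three-point functions of $f_t(\cdot)^\dagger$, $O_P(t)$, $f_t(\cdot)$, while $\opnorm{O_PV_t\ket{\chi}}^2$ becomes a sum of four-point functions of $f_t(\cdot)^\dagger$, $O_P(t)^\dagger$, $O_P(t)$, $f_t(\cdot)$ at time $t$; this four-operator structure is exactly why $m=4$ was required. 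I would then apply the summed correlation-function estimate \cref{lemma: sum-in-corr-func}, absorbing the coefficients of $\ket{\chi}$ and of $O_L\ket{\chi}$ into the operators filling each slot, to replace every physical operator by its logical image ($f_t(\ketbra{i_L}{0_L})\mapsto\ketbra{i_L}{0_L}$, $O_P(t)\mapsto O_L$). A short collapse of the resulting logical correlators turns both the cross term and the fourth term into $\bra{\chi}O_L^\dagger O_L\ket{\chi}=\opnorm{O_L\ket{\chi}}^2$. The crucial point is that the error in \cref{lemma: sum-in-corr-func} is controlled by the norm of the \emph{full} product of summed operators, which here is rank one with norm $\opnorm{O_L\ket{\chi}}^2\le\opnorm{O_L}^2$; using the summed lemma rather than an entrywise estimate is what avoids a dimension-dependent factor $N$.

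Combining the three estimates gives $\opnorm{(V_tO_L-O_PV_t)\ket{\chi}}^2=O(\eta)\opnorm{O_L}^2$ uniformly in $\ket{\chi}$, which is the first claim. For the commutator I would apply the first claim both to $O_P\leftrightarrow O_L$ and to $O_P^\dagger\leftrightarrow O_L^\dagger$ (legitimate since $\alpha'_t$ is hermiticity preserving and $\opnorm{O_L^\dagger}=\opnorm{O_L}$), writing $O_PV_t=V_tO_L+\Delta_1$ and $O_P^\dagger V_t=V_tO_L^\dagger+\Delta_2$ with $\opnorm{\Delta_1},\opnorm{\Delta_2}=O(\sqrt{\eta})\opnorm{O_L}$. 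Then $O_PV_tV_t^\dagger=V_tO_LV_t^\dagger+\Delta_1V_t^\dagger$ and, taking the adjoint of the second relation, $V_tV_t^\dagger O_P=V_tO_LV_t^\dagger+V_t\Delta_2^\dagger$, so that $[O_P,V_tV_t^\dagger]=\Delta_1V_t^\dagger-V_t\Delta_2^\dagger$, whose norm is $O(\sqrt{\eta})\opnorm{O_L}$ once we use $\opnorm{V_t}=1+O(\eta)$.

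To pass to $\tilde{V}_t=V_t(V_t^\dagger V_t)^{-1/2}$ I would use $\opnorm{V_t-\tilde{V}_t}=O(\eta)$ together with $\opnorm{\tilde{V}_t}=1+O(\eta)$: substituting $\tilde{V}_t$ for $V_t$ perturbs each expression by $O(\eta)$, which is subleading to $O(\sqrt{\eta})\opnorm{O_L}$. The only place needing care is the term $\opnorm{O_P(V_t-\tilde{V}_t)}$, which carries $\opnorm{O_P}$ rather than $\opnorm{O_L}$; this is harmless provided the reconstruction satisfies $\opnorm{O_P}=O(\opnorm{O_L})$, which holds for the unitary reconstructions of \cref{lemma: unitary-reconstruction} and hence, by linearity, for general operators. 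I expect the main obstacle to be exactly the bookkeeping of the second paragraph: organizing the cross and fourth terms as \emph{summed} correlation functions so that \cref{lemma: sum-in-corr-func} delivers a dimension-independent $O(\eta)$ error, and then checking that the three leading contributions cancel precisely, leaving only the $O(\eta)$ remainder whose square root gives the $O(\sqrt{\eta})$ scaling.
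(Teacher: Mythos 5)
Your treatment of the two bounds for $V_t$ is essentially the paper's own proof: expand $\opnorm{V_tO_L-O_PV_t}^2$ into four terms, handle $\bra{\chi}O_L^\dagger V_t^\dagger V_tO_L\ket{\chi}$ with the approximate-isometry lemma, rewrite the cross and fourth terms as correlation functions of the $f_t(\ketbra{i_L}{0_L})$ operators and $O_P$ in the state $\ket{0}$, apply the summed estimate of \cref{lemma: sum-in-corr-func} (including your rank-one-norm observation, which is exactly how the paper avoids dimension factors), and let the leading terms cancel before taking the square root. Your commutator argument, $[O_P,V_tV_t^\dagger]=\Delta_1V_t^\dagger-V_t\Delta_2^\dagger$ using hermiticity preservation of $\alpha'_t$, is precisely the argument the paper leaves implicit when it says the commutator bound ``follows easily.''

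The one genuine flaw is in the $\tilde{V}_t$ step, at exactly the point you flagged. The bound $\opnorm{O_P}=O(\opnorm{O_L})$ cannot be obtained ``by linearity'' from \cref{lemma: unitary-reconstruction}: that lemma only asserts the \emph{existence} of some well-behaved unitary reconstruction for each unitary logical operator, whereas the present lemma quantifies over \emph{every} preimage $O_P$ with $\alpha'_t(O_P)=O_L$. If $\alpha'_t$ has a nontrivial kernel, one may add to $O_P$ an arbitrarily large kernel element without changing $O_L$, so no such norm bound is available. Fortunately the bound is also unnecessary: route $O_P$ through the intertwining relation you already proved, $O_PV_t=V_tO_L+\Delta_1$ with $\opnorm{\Delta_1}=O(\sqrt{\eta})\opnorm{O_L}$, and use $\tilde{V}_t=V_t(V_t^\dagger V_t)^{-1/2}$:
\begin{align*}
    O_P\tilde{V}_t=(V_tO_L+\Delta_1)(V_t^\dagger V_t)^{-1/2}
    =\tilde{V}_tO_L+V_t\left[O_L,\,(V_t^\dagger V_t)^{-1/2}-\mathds{1}\right]+\Delta_1(V_t^\dagger V_t)^{-1/2},
\end{align*}
so that
\begin{align*}
    \opnorm{\tilde{V}_tO_L-O_P\tilde{V}_t}\leq 2\opnorm{V_t}\,\opnorm{(V_t^\dagger V_t)^{-1/2}-\mathds{1}}\,\opnorm{O_L}+\opnorm{\Delta_1}\,\opnorm{(V_t^\dagger V_t)^{-1/2}}=O(\sqrt{\eta})\opnorm{O_L},
\end{align*}
with $\opnorm{O_P}$ never appearing; the commutator bound for $\tilde{V}_t\tilde{V}_t^\dagger$ then follows from these $\tilde{V}_t$ intertwining relations (for $O_P$ and $O_P^\dagger$) by the same algebra as before. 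Note the paper's own remark that the replacement ``follows easily from $\opnorm{V_t-\tilde{V}_t}=O(\eta)$'' glosses over the same issue, since the naive triangle inequality produces the term $\opnorm{O_P(V_t-\tilde{V}_t)}$; you identified a real subtlety but patched it with an assumption that is not available.
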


\begin{proof}
\begin{align*}
    \opnorm{V_t O_L-O_P V_t}^2&=\bra{\psi_L}(O_L^\dagger V_t^\dagger-V^\dagger_t O^\dagger_P)(V_t O_L-O_P V_t)\ket{\psi_L}
\end{align*}
for a $\ket{\psi_L}$ which maximizes the expression. The first term is
\begin{align*}
    \bra{\psi_L}O_L^\dagger V_t^\dagger V_t O_L\ket{\psi_L}= \bra{\psi_L}O_L^\dagger  O_L\ket{\psi_L}+O(\eta)\opnorm{O_L}^2.
\end{align*}
The last term is 
\begin{align*}
    \bra{\psi_L}V^\dagger_t O^\dagger_P O_P V_t \ket{\psi_L}&=\sum c^*_i c_j \bra{i}O^\dagger_P O_P\ket{j}\\
    &= \sum c^*_i c_j \bra{0}f^\dagger_t(\ketbra{i}{0})O^\dagger_P O_Pf_t(\ketbra{j}{0})\ket{0}\\
    &\leq \sum c^*_i c_j \bra{i_L}O^\dagger_L O_L\ket{j_L}+O(\eta)\opnorm{\sum c^*_i c_j \ket{0}\braket{i_L}{j_L}\bra{0}}\times
    \opnorm{O^\dagger_LO_L}\\
    &= \bra{\psi_L}O_L^\dagger O_L\ket{\psi_L}+O(\eta)\opnorm{O_L}^2.
\end{align*}
The negative of the third term is 
\begin{align*}
    \bra{\psi_L}V^\dagger_t O^\dagger_P V_t O_L \ket{\psi_L}
    &=\sum_{i,j}c_i^*d_j \bra{i}O_P^\dagger\ket{j},
\end{align*}
where $\sum_j d_j\ket{j_L}=O_L\ket{\psi_L}$.
\begin{align*}
    &= \sum_{i,j}c_i^*d_j \bra{0}f^\dagger_t(\ketbra{i}{0})O_P^\dagger f_t(\ketbra{j}{0})\ket{0}\\
    &\leq \bra{\psi_L}O_L^\dagger O_L\ket{\psi_L}+O(\eta)\opnorm{O_L}^2.
\end{align*}
putting these together we have that 
\begin{align*}
    \opnorm{V_t O_L-O_P V_t}^2=O(\eta)\opnorm{O_L}^2\\
    \rightarrow \opnorm{V_t O_L-O_P V_t}=O(\sqrt{\eta})\opnorm{O_L}.
\end{align*}
That $\opnorm{[O_P,V_t V^\dagger_t]}=O(\sqrt{\eta})\opnorm{O_L}$ follows easily from this expression and the hermicity preserving property of $\alpha'_t$. Furthermore, that $V_t$ can be replaced with $\tilde{V}_t$ in both of these expressions follows easily from the fact that $\opnorm{V_t-\tilde{V}_t}=O(\eta)$.
\end{proof}
\begin{lemma}\label{lemma: aux-approx-reconstruction}
 For any operator $O_{LA}$ acting on the logical system and an auxiliary system, and an operator $O_{PA}$ acting on the $P$ region of the physical system together with the same auxiliary system, such that $\alpha'_t\otimes I_A(O_{PA})=O_{LA}$, then
 \begin{align*}
     \opnorm{V_t\otimes\mathds{1}_AO_{LA}-O_{PA}V_t\otimes\mathds{1}_A}&=O(\sqrt{\eta})\text{, and}\\
     \opnorm{[O_P,V_t^\dagger V_t]}&=O(\sqrt{\eta}).
 \end{align*}
 Furthermore the same holds if we replace $V_t$ with $\tilde{V_t}$.
\end{lemma}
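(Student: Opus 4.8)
The plan is to mirror the proof of the preceding (auxiliary-free) lemma essentially verbatim, carrying the auxiliary system $A$ along as a spectator, so that the only genuinely new ingredient is an auxiliary-robust version of the approximate correlation function bound \cref{eq: approx-corr-functions}. First I would record that the scalar operator bounds already established for $V_t$ survive tensoring with $\mathds{1}_A$: from $\opnorm{V_t^\dagger V_t-\mathds{1}}=O(\eta)$ and $\opnorm{V_t-\tilde V_t}=O(\eta)$ we get $\opnorm{(V_t^\dagger V_t)\otimes\mathds{1}_A-\mathds{1}}=O(\eta)$ and $\opnorm{(V_t-\tilde V_t)\otimes\mathds{1}_A}=O(\eta)$, which is all that is needed to reduce the $\tilde V_t$ statement to the $V_t$ statement at the end.

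For the main estimate I would expand $\opnorm{(V_t\otimes\mathds{1}_A)O_{LA}-O_{PA}(V_t\otimes\mathds{1}_A)}^2$ on a maximizing state $\ket{\Psi}$ of $L\otimes A$ into the same four terms as before: the diagonal terms $\bra{\Psi}O_{LA}^\dagger(V_t^\dagger V_t\otimes\mathds{1}_A)O_{LA}\ket{\Psi}$ and $\bra{\Psi}(V_t^\dagger\otimes\mathds{1}_A)O_{PA}^\dagger O_{PA}(V_t\otimes\mathds{1}_A)\ket{\Psi}$, together with the two cross terms. Writing $(V_t\otimes\mathds{1}_A)\ket{\Psi}=\sum_{i,a}c_{ia}\,\ket{i}\otimes\ket{a}$ with $\ket{i}=f_t(\ketbra{i_L}{0_L})\ket{0}$ turns each term into a weighted sum of four-operator correlation functions of $P$-confined operators, with weights built from the $c_{ia}$ and from auxiliary matrix elements of $O_{PA}$. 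The goal is to replace every physical correlation function by its logical counterpart up to $O(\eta)$, after which the four terms combine to $\bra{\Psi}O_{LA}^\dagger O_{LA}\ket{\Psi}(1+1-1-1)+O(\eta)\opnorm{O_{LA}}^2=O(\eta)\opnorm{O_{LA}}^2$, exactly as in the preceding lemma; taking a square root gives the claimed $O(\sqrt{\eta})\opnorm{O_{LA}}$.

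The crux is therefore the auxiliary-extended correlation function bound: for operators on $P\otimes A$ with logical counterparts under $\alpha'_t\otimes I_A$, the difference of physical and logical correlation functions should be $O(\eta)$ times the product of the operator norms \emph{on $L\otimes A$}, with a constant independent of $\dim A$. I expect this to be the main obstacle. Decomposing $O_{PA}=\sum_\mu O_P^\mu\otimes M_A^\mu$ and inserting resolutions of the identity on $A$ does reduce each term to \cref{lemma: sum-in-corr-func}, but the auxiliary indices couple the operator slots, and pulling the auxiliary sum outside by the triangle inequality would cost a factor growing with $\dim A$. The resolution I would pursue is to recognize that \cref{eq: approx-corr-functions} descends from the approximate recovery of $L$ on $P$, and that such recovery estimates hold in diamond norm and are hence automatically robust under tensoring with an arbitrary reference $A$. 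Concretely, I would re-derive the needed inequality by running the recovery-map argument behind \cref{eq: approx-corr-functions} with $\mathcal{N}\otimes I_A$ and $\mathcal{R}\otimes I_A$ in place of $\mathcal{N}$ and $\mathcal{R}$, so that $A$ never enters the error and the $\dim A$-independent constant is preserved.

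With this extended bound in hand the four-term cancellation goes through unchanged. Finally, the commutator estimate follows from the main estimate together with the hermiticity-preserving property of $\alpha'_t$, exactly as the corresponding bound was deduced in the preceding lemma, and the replacement $V_t\to\tilde V_t$ in both statements is then immediate from $\opnorm{(V_t-\tilde V_t)\otimes\mathds{1}_A}=O(\eta)$.
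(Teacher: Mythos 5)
Your skeleton coincides with the paper's proof: the paper likewise expands $\opnorm{(V_t\otimes\mathds{1}_A)O_{LA}-O_{PA}(V_t\otimes\mathds{1}_A)}^2$ on a maximizing state, reduces each of the four terms to weighted sums of four-point functions of $P$-confined operators, cancels them as $1+1-1-1$ up to $O(\eta)\opnorm{O_{LA}}^2$, and obtains the commutator bound and the $\tilde{V}_t$ variant exactly as you describe. The divergence is at the crux, and your diagnosis of where the crux lies is accurate. The paper's own resolution is: take an operator Schmidt decomposition $O_{LA}=\sum_i O_L^i\otimes O_A^i$, use linearity of $\alpha'_t\otimes I_A$ to write $O_{PA}=\sum_i O_P^i\otimes O_A^i$ with the \emph{same} auxiliary factors, treat $c_i^*c_j\bra{i_A}O_A^{k\dagger}O_A^{l}\ket{j_A}$ as scalar weights, and invoke the correlation hypothesis in weighted-sum form, bounding the error by $\eta$ times the norm of the full logical combination $\sum_{ijkl}c_i^*c_j\bra{i_A}O_A^{k\dagger}O_A^{l}\ket{j_A}\,\ketbra{0_L}{i_L}O_L^{k\dagger}O_L^{l}\ketbra{j_L}{0_L}$, which collapses to $\bra{\psi_{LA}}O_{LA}^\dagger O_{LA}\ket{\psi_{LA}}\ketbra{0_L}{0_L}$ and hence has norm at most $\opnorm{O_{LA}}^2$. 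As you observe, these weights couple all four slots, so \cref{lemma: sum-in-corr-func} as actually proven (each slot summed independently and absorbed by linearity into a single operator per slot) does not literally license this step; the paper is implicitly using a coupled-sum strengthening of \cref{eq: approx-corr-functions}, which does not follow from the product form by multilinearity and the triangle inequality alone. One refinement to your accounting: the naive triangle-inequality overhead is governed by the operator Schmidt rank across the $L$:$A$ cut, hence by $(\dim\mathcal{H}_L)^2$ rather than by $\dim\mathcal{H}_A$; but since the logical dimension grows with $n$ in the intended application, this does not rescue the naive bound, and your objection stands.

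Your proposed repair is therefore genuinely different from what the paper does, and it is sound in spirit where a recovery map exists: the $\sqrt{G_N}$ contribution to $\eta$ does descend from a recovery channel $\mathcal{R}$, and a reference-robust recovery statement (combined with, e.g., the Kadison--Schwarz inequality for the unital completely positive map $\mathcal{R}^\dagger\otimes I_A$, needed because products such as $O_{PA}^\dagger O_{PA}$ are not themselves images under $\mathcal{R}^\dagger\otimes I_A$) yields exactly the auxiliary-robust, $\dim\mathcal{H}_A$-independent bound you want. Two caveats, the second of which is a genuine gap. First, the paper only states the recovery bound as $||\rho_L-\mathcal{R}\circ\mathcal{N}(\rho_L)||_1\leq O(\sqrt{G_N})$ with no reference system, so the diamond-norm form is an additional input you must justify rather than something already available. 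Second, and more importantly, the lemma is applied with $\eta\sim\sqrt{G_N}\,\mathrm{poly}(m)+\delta$, where the $\delta$ part of \cref{eq: approx-corr-functions} comes from the simulation-accuracy assumption \cref{eq: CFT-sim-correlation-functions-relationship}; that is a bare hypothesis about correlation functions of products of operators in $\vecspace'$, with no recovery map behind it to tensor with $I_A$. For that contribution your strategy amounts to strengthening the simulation assumption to a completely bounded form, i.e.\ changing the hypotheses of the lemma rather than deriving the statement from them. So your plan correctly isolates the real difficulty -- one the paper's own appeal to \cref{lemma: sum-in-corr-func} arguably glosses over -- and repairs it for the holographic part of the error, but as written it leaves the simulation part of the statement unproven.
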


\begin{proof}
Let $O_{LA}=\sum_iO_L^i\otimes O_A^i$ be a Schmidt decomposition up to normalization. By linear, we must have $O_{LA}=\sum_iO_P^i\otimes O_A^i$ such that $\alpha'_t(O_P^i)=O_L^i$, and where the $O_P^i$ need not be orthogonal. We then see that
\begin{align*}
 &\opnorm{V_t\otimes\mathds{1}_AO_{LA}-O_{PA}V_t\otimes\mathds{1}_A}^2\\ 
 &=\bra{\psi_{LA}}(V_t\otimes\mathds{1}_AO_{LA}-O_{PA}V_t\otimes\mathds{1}_A)^\dagger (V_t\otimes\mathds{1}_AO_{LA}-O_{PA}V_t\otimes\mathds{1}_A)\ket{\psi}_{LA}
\end{align*}
for the $\ket{\psi_{LA}}$ which maximizes the expression. Letting $\ket{\psi_{LA}}=\sum_i c_i \ket{i_L}\ket{i_A}$, we then have that the last term is 
\begin{align*}
    &\sum_{i,j}c_i^*c_j\bra{i_L}\bra{i_A}V_t^\dagger O_{PA}^\dagger O_{PA}V_t\ket{j_L}\ket{j_A}\\
    &=\sum_{i,j,k,l}c_i^*c_j\bra{i_P} O_{P}^{k\dagger} O^l_{P}\ket{j_P}
    \bra{i_A} O_{A}^{k\dagger} O^l_{A}\ket{j_A}\\
    &\leq\bra{\psi_{LA}}O_{LA}^\dagger  O_{LA}\ket{\psi_{LA}} + O(\eta)\opnorm{\sum_{i,j,k,l}c_i^*c_j\ket{0_L}\bra{i_L}O_L^{k\dagger} O^l_L\ket{j_L}\bra{0_L}\bra{i_A}O_A^{k\dagger}O^l_A\ket{i_A}}\\
    &= \bra{\psi_{LA}}O_{LA}^\dagger  O_{LA}\ket{\psi_{LA}} + O(\eta)\opnorm{O_{LA}}^2.
\end{align*}
The other terms follow likewise, giving the result
\end{proof}

\begin{lemma}\label{lemma: unitary-reconstruction}
For any unitary operator $U_{LA}$ acting on the logical system and an auxiliary system, there is a unitary operator $U_{PA}$ acting on the $P$ region of the physical system together with the same auxiliary system, such that
\begin{align*}
    \opnorm{V_t\otimes\mathds{1}_AU_{LA}-U_{PA}V_t\otimes\mathds{1}_A}&=O(\sqrt{\eta}).
\end{align*}
\end{lemma}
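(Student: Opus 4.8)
The plan is to obtain $U_{PA}$ as the unitary (polar) factor of the non-unitary reconstruction already supplied by \cref{lemma: aux-approx-reconstruction}, and to show that this ``polarization'' costs only $O(\sqrt{\eta})$ once restricted to the code space. Throughout I write $\hat{V}:=V_t\otimes\mathds{1}_A$, recalling from the earlier lemmas that $\hat{V}^\dagger\hat{V}=\mathds{1}+O(\eta)$ (and exactly $\mathds{1}$ if one works with $\tilde{V}_t$).

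First I would invoke \cref{lemma: aux-approx-reconstruction} twice. Applied to $U_{LA}$ it produces a physical operator $\tilde{U}_{PA}$ supported on $P$ together with the auxiliary system, with $\alpha'_t\otimes I_A(\tilde{U}_{PA})=U_{LA}$ and $\opnorm{\hat{V}U_{LA}-\tilde{U}_{PA}\hat{V}}=O(\sqrt{\eta})$. Since $\alpha'_t$ is linear and hermiticity preserving, $\alpha'_t\otimes I_A(\tilde{U}_{PA}^\dagger)=U_{LA}^\dagger$, so the same lemma applied to $U_{LA}^\dagger$ with reconstruction $\tilde{U}_{PA}^\dagger$ gives $\opnorm{\hat{V}U_{LA}^\dagger-\tilde{U}_{PA}^\dagger\hat{V}}=O(\sqrt{\eta})$. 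I would additionally take the reconstruction to be the channel-adjoint recovery of \cref{apx: ecc-from-correlation-functions}, a unital completely positive map, so that we may assume $\opnorm{\tilde{U}_{PA}}\le\opnorm{U_{LA}}=1$; this boundedness is exactly what will make the final estimate quantitative.

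Second, I would polar-decompose $\tilde{U}_{PA}=U_{PA}R$ with $R=(\tilde{U}_{PA}^\dagger\tilde{U}_{PA})^{1/2}\ge 0$ and $U_{PA}$ the polar unitary, completing the partial isometry on $\ker\tilde{U}_{PA}$ to a genuine unitary (possible since the operator is square and finite-dimensional). Because $\tilde{U}_{PA}$ is supported on $P$ and $A$, both $R$ and $U_{PA}$ are as well, so $U_{PA}$ is of the required form. Then $\opnorm{U_{PA}\hat{V}-\hat{V}U_{LA}}\le\opnorm{(U_{PA}-\tilde{U}_{PA})\hat{V}}+O(\sqrt{\eta})$, and since $\opnorm{(U_{PA}-\tilde{U}_{PA})\hat{V}}=\opnorm{U_{PA}(\mathds{1}-R)\hat{V}}=\opnorm{(\mathds{1}-R)\hat{V}}$, the whole statement reduces to showing $\opnorm{(\mathds{1}-R)\hat{V}}=O(\sqrt{\eta})$.

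This last bound is the step I expect to be the main obstacle, precisely because $R$ can behave wildly on the orthogonal complement of the code space, where locality forbids simply projecting $\tilde{U}_{PA}$ onto the code. I would circumvent this by first controlling $R^2$ on the code space and then descending to $R$ via functional calculus. Writing $\tilde{U}_{PA}\hat{V}=\hat{V}U_{LA}+E_1$ and $\tilde{U}_{PA}^\dagger\hat{V}=\hat{V}U_{LA}^\dagger+E_2$ with $\opnorm{E_1},\opnorm{E_2}=O(\sqrt{\eta})$, one finds $(R^2-\mathds{1})\hat{V}=\tilde{U}_{PA}^\dagger\tilde{U}_{PA}\hat{V}-\hat{V}=E_2U_{LA}+\tilde{U}_{PA}^\dagger E_1$, whose norm is $O(\sqrt{\eta})$ \emph{only because} $\opnorm{\tilde{U}_{PA}}\le 1$ tames the $\tilde{U}_{PA}^\dagger E_1$ term. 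Finally, since $R\ge 0$ the commuting identity $\mathds{1}-R=(\mathds{1}+R)^{-1}(\mathds{1}-R^2)$ with $\opnorm{(\mathds{1}+R)^{-1}}\le 1$ gives $\opnorm{(\mathds{1}-R)\hat{V}}\le\opnorm{(\mathds{1}-R^2)\hat{V}}=O(\sqrt{\eta})$, closing the argument. The delicate point to flag is that the entire estimate hinges on the boundedness of the chosen reconstruction; without it the $\tilde{U}_{PA}^\dagger E_1$ term is uncontrolled, which is why I would fix the reconstruction to be the norm-contractive channel adjoint at the outset.
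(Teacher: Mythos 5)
Your proposal is correct and follows essentially the same route as the paper: both take an arbitrary reconstruction $O_{PA}$ of $U_{LA}$, define $U_{PA}$ as its polar (SVD-phase) unitary, control $\opnorm{(O_{PA}^\dagger O_{PA}-\mathds{1})V_t}$ by applying \cref{lemma: aux-approx-reconstruction} twice (once to $U_{LA}$, once to $U_{LA}^\dagger$), and then pass from $\mathds{1}-R^2$ to $\mathds{1}-R$ via the pointwise inequality $|1-\lambda|\leq|1-\lambda^2|$ for $\lambda\geq 0$, which is exactly your $(\mathds{1}+R)^{-1}$ contraction step. The one place you go beyond the paper is in explicitly flagging that the term $\tilde{U}_{PA}^\dagger E_1$ requires $\opnorm{\tilde{U}_{PA}}=O(1)$ and fixing this by taking the norm-contractive channel-adjoint reconstruction; the paper's phrase ``by using \cref{lemma: aux-approx-reconstruction} twice'' silently relies on the same boundedness, so your version is the more careful rendering of the identical argument.
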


\begin{proof}
Let $O_{PA}$ be an operator such that $\alpha'_t\otimes I_A(O_{PA})=U_{LA}$ 
Let $O_{PA}=\sum_i \lambda_i \ketbra{i'}{i}$ be a singular value decomposition. Define $U_{PA}:=\sum_i \tilde{\lambda}_i\ketbra{i'}{i}$ where
\begin{equation}
\tilde{\lambda}_i =
    \begin{cases}
        1 & \text{if } \lambda_i=0\\
        {\lambda_i}/{|\lambda_i|} & \text{else.}
    \end{cases}
\end{equation}
It suffices to show that
\begin{align*}
    \opnorm{(U_{PA}-O_{PA})V_t}=O(\sqrt{\eta}).
\end{align*}
First note that
\begin{align*}
    \opnorm{(U_{PA}-O_{PA})V_t}^2
    &=\max_\psi \bra{\psi}V_t^\dagger (\sum_i \ketbra{i}{i}|1-\left|\lambda_i|\right|^2)V_t\ket{\psi}
\end{align*}
We also have that
\begin{align*}
    \opnorm{(O_{PA}^\dagger O_{PA}-\mathds{1})V_t}=O(\sqrt{\eta})
\end{align*}
by using \cref{lemma: aux-approx-reconstruction} twice, and thus
\begin{align*}
    \opnorm{(O_{PA}^\dagger O_{PA}-\mathds{1})V_t}^2&=\opnorm{V_t^\dagger(O_{PA}^\dagger O_{PA}-\mathds{1})^\dagger(O_{PA}^\dagger O_{PA}-\mathds{1})V_t}\\
    &= \opnorm{V_t^\dagger (\sum_i \ketbra{i}{i}|1-\left|\lambda_i|^2\right|^2)V_t}\\
    &\geq \bra{\psi}V_t^\dagger (\sum_i \ketbra{i}{i}|1-\left|\lambda_i|^2\right|^2)V_t\ket{\psi}\\
    &\geq \bra{\psi}V_t^\dagger (\sum_i \ketbra{i}{i}|1-\left|\lambda_i|\right|^2)V_t\ket{\psi}\\
    &= \opnorm{(U_{PA}-O_{PA})V_t}^2
\end{align*}
 and thus 
\begin{align*}
    \opnorm{(U_{PA}-O_{PA})V_t}&\leq O(\sqrt{\eta})\\
    \rightarrow  \opnorm{(U_{PA}-O_{PA})V_t} 
    &= O(\sqrt{\eta}).
\end{align*}
\end{proof}

\subsection{Encoding error}
We would like to upper bound

\begin{align*}
    \dnorm{\mathcal{E}_0-\mathcal{C}_0},
\end{align*}

where $\mathcal{E}_0$ is the encoding action of Alice and Bob, and $\mathcal{C}_0$ is an encoding isometry. Specifically, 
$\mathcal{C}_t:=\tilde{V}_t\cdot\tilde{V}_t^\dagger$. The encoding action $\mathcal{E}_0$ is not simply the application of $\tilde{V}_0$, because Alice and Bob each hold only half of the circular lattice, namely $W$ and $E$ respectively, and must perform local encodings.
Formally, the encoding will be a CPTP map from registers $AB$ to $WE$, taking the form
\begin{align*}
    \mathcal{E}_0(\rho_{AB})=\Tr_{AB}\tilde{\Sigma}_{WA}\otimes\tilde{\Sigma}_{EB}(\rho^R_{WE}\otimes \rho_{AB}),
\end{align*}
where $\tilde{\Sigma}_{WA}$ and $\tilde{\Sigma}_{EB}$ are the channels version\footnote{By the channel version of a unitary $U$ we mean the CPTP map $U\cdot U^\dagger$.} of the promised unitary operators from \cref{lemma: unitary-reconstruction} which approximately implement $\Sigma_{AA_L}$ and $\Sigma_{BB_L}$, with $\Sigma$ the channel version of the swap operator, and $\rho^R_{WE}$ is defined as.
\begin{align*}
    \rho^R_{WE}=\frac{V_0\ketbra{00}{00}_{A_LB_L}V_0^\dagger}{\Tr[V_0\ketbra{00}{00}_{A_LB_L}V_0^\dagger]}.
\end{align*}
To calculate $\opnorm{\mathcal{E}_0-\mathcal{C}_0}$ it will be convenient to rewrite $\mathcal{C}_0$ as
\begin{align*}
    \mathcal{C}_0(\rho)=\Tr_{A_LB_L}[\mathcal{C}_0\Sigma_{A_LA}\otimes\Sigma_{B_LB}(\ketbra{00}{00}_{A_LB_L}\otimes\rho_{AB})]
\end{align*}
We then have
\begin{align*}
    \dnorm{\mathcal{E}_0-\mathcal{C}_0}&\leq
    \dnorm{\tilde{\Sigma}_{WA}\otimes\tilde{\Sigma}_{EB}(\rho^R_{WE}\otimes \cdot)-\mathcal{C}_0\circ\Sigma_{A_LA}\otimes\Sigma_{B_LB}(\ketbra{00}{00}_{A_LB_L}\otimes\cdot)}\\
    &\leq
    \dnorm{\tilde{\Sigma}_{WA}\otimes\tilde{\Sigma}_{EB}\circ\mathcal{C}_0/\Tr[\mathcal{C}_0(\ketbra{00}_{A_LB_L})]-\mathcal{C}_0\circ\Sigma_{A_LA}\otimes\Sigma_{B_LB}}.
\end{align*}
But \begin{align*}
    |\Tr[\mathcal{C}_0(\ketbra{00}_{A_LB_L})]|&= |\Tr[V_0^\dagger V_0\ketbra{00}_{A_LB_L}]|\\
    &=1+O(\eta).
\end{align*}
and thus the expression reduces to 
\begin{align*}
    \dnorm{\mathcal{E}_0-\mathcal{C}_0}&\leq \dnorm{\tilde{\Sigma}_{WA}\otimes\tilde{\Sigma}_{EB}\circ\mathcal{C}_0-\mathcal{C}_0\circ\Sigma_{A_LA}\otimes\Sigma_{B_LB}}+O(\eta).
\end{align*}

Using the readily verifiable fact that for two isometries $V$ and $W$ we have 
\begin{align}\label{eq: diamond-for-isometries}
    \dnorm{V\cdot V^\dagger-W\cdot W^\dagger}\leq 2\opnorm{V-W} 
\end{align} and \cref{lemma: unitary-reconstruction} we finally get that
\begin{align*}
    \dnorm{\mathcal{E}_0-\mathcal{C}_0}=O(\sqrt{\eta}).
\end{align*}

\subsection{Decoding error}
We would also like to upperbound
\begin{align*}
    \dnorm{\mathcal{R}_\tau\circ\mathcal{C}_\tau-I_{\tilde{A}\tilde{B}\rightarrow \tilde{A}\tilde{B}}}
\end{align*}

where $\mathcal{R}_\tau$ is the decoding action of Alice and Bob. Once again, the encoding action $\mathcal{R}_\tau$ is not simply the undoing of the isometry $\tilde{V}_\tau$, because Alice and Bob each hold only half of the circular lattice, this time $N$ and and $S$ respectively, and thus perform local decodings.
Formally, the decoding map will be a CPTP map from registers $NS$ to $AB$, taking the form, taking the from

\begin{align*}
    \mathcal{R}_\tau(\rho_{NS})=\Tr_{NS}\tilde{\Sigma}_{NA}\otimes\tilde{\Sigma}_{SB}(\rho_{NS}\otimes \ketbra{00}_{AB}).
\end{align*}

But since 
\begin{align*}
    \dnorm{\tilde{\Sigma}_{NA}\otimes\tilde{\Sigma}_{SB}\circ \mathcal{C}_0-\mathcal{C}_0\circ\Sigma_{A_LA}\otimes\Sigma_{B_LB}}=O(\sqrt{\eta})
\end{align*}

We find 
\begin{align*}
    \dnorm{\mathcal{R}_\tau\circ\mathcal{C}_\tau-I_{\tilde{A}\tilde{B}\rightarrow \tilde{A}\tilde{B}}}=O(\sqrt{\eta}).
\end{align*}

\subsection{Dynamical duality error}
We would like to bound
\begin{align*}
    \dnorm{\mathcal{U}_{sim}\circ\mathcal{C}_0-\mathcal{C}_\tau\circ\mathcal{U}_L}.
\end{align*}

Recalling that $\mathcal{U}_{sim}=U_{sim}\cdot U_{sim}^\dagger$ where $U_{sim}=e^{-iH_{sim}t}$, that $\mathcal{U}_L=\Gamma\cdot\Gamma^\dagger$, and \cref{eq: diamond-for-isometries}, we can upper bound the left hand side by

\begin{align*}
    LHS&\leq \opnorm{U_{sim}\tilde{V}_0-V_\tau\Gamma}\\
    &=2\max_{\ket{\psi_L}}\sqrt{\bra{\psi_L}(U_{sim}\tilde{V}_0-V_\tau\Gamma)^\dagger(U_{sim}\tilde{V}_0-V_\tau\Gamma)\ket{\psi_L}}.
\end{align*}
All but the cross terms are $1$, and it is sufficient to analyze one of the cross terms, namely

\begin{align*}
    \bra{\psi_L}\tilde{V}_0^\dagger U_{sim}V_\tau\Gamma\ket{\psi_L}
    &=\bra{0_L}(\ketbra{0_L}{\psi_L})\tilde{V}_0^\dagger U^\dagger_{sim}V_\tau(\Gamma\ketbra{\psi_L}{0_L}\Gamma^\dagger)\Gamma\ket{0_L}\\
    &= \bra{0_L}\tilde{V}_0^\dagger f_0^\dagger(\ketbra{0_L}{\psi_L}) U^\dagger_{sim} f_t(\Gamma\ketbra{\psi_L}{0_L}\Gamma^\dagger)V_\tau\Gamma\ket{0_L}+O(\sqrt{\eta})\\
    &= \bra{0_{sim}} f_0^\dagger(\ketbra{0_L}{\psi_L}) U^\dagger_{sim} f_t(\Gamma\ketbra{\psi_L}{0_L}\Gamma^\dagger)U_{sim}\ket{0_{sim}}+O(\sqrt{\eta})
\end{align*}
where $f_0(\ketbra{0_L}{\psi_L})$ and $f_t(\Gamma\ketbra{\psi_L}{0_L}\Gamma^\dagger)$ are simulation operators which implement $\ketbra{0_L}{\psi_L}$ on $\mathcal{H}_{A_LB_L}$ and $\Gamma\ketbra{\psi_L}{0_L}\Gamma^\dagger$ on $\mathcal{H}_{\tilde{A}_L\tilde{B}_L}$ respectively. But $U^\dagger_{sim} f_t(\Gamma\ketbra{\psi_L}{0_L}\Gamma^\dagger)U_{sim}$ is the Heisenberg evolved version of $f_t(\Gamma\ketbra{\psi_L}{0_L}\Gamma^\dagger)$ and thus by \cref{eq:logical-to-CFT-corr-err}\footnote{this equation is in terms of CFT correlators, but can be converted to simulation correlators by replacing $G_N$ with $G_N+\delta$, which we just denote as $\eta$ here} this can be written as 
\begin{align*}
    \braket{0_L} \braket{\psi_L} 
    \braket{0_L}+O(\sqrt{\eta})=1+O(\sqrt{\eta})
\end{align*}
and thus we get

\begin{align*}
    \dnorm{\mathcal{U}_{sim}\circ\mathcal{C}_0-\mathcal{C}_\tau\circ\mathcal{U}_L}\leq O(\sqrt{\eta}).
\end{align*}

\section{Error from imperfect light-cone}
\label{appendix: err-from-lightcone}
\subsection{Unitary decomposition}
\label{app:spread}
Here we show that in order to approximately derive the decomposition \cref{eq: unitary-chopping}, it suffices for the approximate spread to be small; more concretely, that the approximate light-cone condition from \cref{sec: simulating-holographic-cfts} holds.

Recall the definition
$
    K_\phi=U_{S'}\Sigma_\phi U^\dagger_{S'}
$.
Our goal is to replace these with unitary approximations $\tilde{K}_\phi$ that are localized to individual subsystems from the set $\{\east,\west,\north,\south\}$, such that $\| K_\phi - \tilde{K}_\phi\|$ is small.
Then we can show that implementing the $\tilde K_\phi$s is sufficient for approximately applying $U_S \otimes U_{S'}^\dagger$.


For any choice of $\phi$, there is some region $R\in\{\east,\west,\north,\south\}$ such that the distance $d(\phi,R)$ is at least $2\pi/8$.
Let us collect each such set together into products, e.g.\ $K_N = \prod_\phi K_\phi$ for all $\phi$ at least $2\pi/8$ away from $\south$, and similarly for $K_S$, $K_E$ and $K_W$.
Then using the approximate light-cone condition of the simulation, we can use the following property of the Haar measure on unitaries on $R$,
\begin{align}
    \int \! \mathrm{d} U_R \ U_R K_{\bar R} U_R^\dagger  = K^{\bar R}_{\bar R} \defi \tr_R (K_{\bar R}) \otimes \Id_R,
\end{align}
along with the fact that $\int \! \mathrm{d} U_R \ U_R U_R^\dagger = 1$, to conclude that
\begin{align}
    \| K^{\bar R}_{\bar R} - K_{\bar R} \| &= \|  \int \! \mathrm{d} U_R \ [ U_R, K_{\bar R}] U_R^\dagger  \| \\
    &\leq  \int \! \mathrm{d} U_R \ \| [ U_R, K_{\bar R}] \| \\
    &\leq  c_1 \exp(-c_2 ( 2 \pi /8 - t))
\end{align}
However, we need $\tilde{K}_{{R}}$ to be unitary, which $K_ {\bar{R}}^{\bar{R}}$ may not be.
To resolve this, we can just do the following:
\begin{align}
    \tilde{K}_R \defi K_{\bar{R}}^{\bar{R}} (K_{\bar{R}}^{\bar{R}} K _{\bar{R}}^{\dagger {\bar{R}}})^{-1/2},
\end{align}
which clearly does not change the asymptotic behaviour of the error in $\tilde{K}_R$ of approximating $K_R$.

Now, the ordering of the $\tilde K_R $ can be chosen such that all the east and west  components are implemented first, and then the north and south components, so that the unitary implemented by Alice and Bob is $ V\defi  \prod_{\phi \in [0,2\pi)}\Sigma_\phi\prod_{R \in \{N,S,E,W\}}\tilde{K}_R $.
Then finally, we obtain
\begin{align}
     \| V - U_S \otimes U_{S'}^\dagger \| &\leq \sum_{R \in \{N,S,E,W\}} c_1 \exp(-c_2 ( 2 \pi /8 - t)) \\
     &\leq 4 c_1 \exp(-c_2 ( 2 \pi /8 - t)) =: \epsilon_{spread}.
\end{align}

\section{Proof that entanglement can be confined to $X$ regions}
\label{apx: entanglement-confined-to-X-regions}

We now show that the modified NLQC task described in section \cref{sec: relation-to-previous-work} can be accomplished with no entanglement between the $V_0'$ and $V_1'$ systems. 

We make heavy use of two types of quantum teleportation: normal teleportation \cite{normal-teleportation} and port-based teleportation \cite{port-based-teleportation}.

\begin{primitive} Normal teleportation

\begin{enumerate}
    \item \textbf{Setup:} Alice holds an $n_A$-qubit system $A$ which, together with some reference system $R$ is in a state $\ket{\psi}_{AR}$. Alice and Bob each hold one of a maximally entangled pair of $2^{n_A}$-dimensional systems $A'B'$. 
    \item Alice performs some joint measurement on $A$ and $A'$, receiving the identity of an $n_A$-qubit Pauli operator $P$ drawn uniformly from all such operators.
    \item Bob's half of the maximally entangled state, together with the reference system,
    is then in the state $P\ket{\psi}_{B'R}$, where $P$ acts on the $B'$ subsystem.
\end{enumerate}
We say that the system $A$ has been \emph{normal-teleported}.
\end{primitive}

\begin{primitive} Port-based teleportation

\begin{enumerate}
    \item \textbf{Setup:} Alice holds an $n_A$-qubit system $A$ which, together with some reference system $R$ is in a state $\ket{\psi}_{AR}$. Alice and Bob each hold one half of $N$ maximally entangled pairs of $2^{n_A}$-dimensional systems $\{A'_xB'_x\}$. 
    \item Alice performs some joint measurement on $A$ and $\{A'_x\}$, receiving a random number $x^*\in [N]$\footnote{We use the notation $[N]$ to denote the set $\{1,\ldots,N\}.$}.
    \item Bob's half of the $x^*$th entangled system, together with the reference system, is then close to the state $\ket{\psi}_{B_{x^*}R}$, with error going to zero as $N$ goes to infinity.
\end{enumerate}
We say that the system $A$ has been \emph{port-teleported}.

\end{primitive}

Consider three parties, Alice, Bob, and Charlie. Alice and Charlie share as many Bell pairs as needed for what follows, and similarly do Bob and Charlie. Alice and Bob share no entanglement. Suppose Alice and Bob hold systems $A$ and $B$ respectively, and that the three parties perform the following operators. Alice and Bob normal teleport $A$ and $B$ respectively to Charlie. Charlie then port teleports the combined system to Alice who applies the Pauli correction to the $A$ part of each of her $N$ ports. She then port teleports each of these ports to Charlie, who then teleports each of his $N^2$ ports to Ben. Bob then applies his Pauli encryption to the $B$ part of each of his $N^3$ ports. 

Because no communication occurs above, the three parties do not need to wait for each other to perform these operators. Suppose that only Charlie performs the above actions. This defines some tripartite state. Charlie's share consists entirely of the classical information he received from performing port teleportation. Furthermore, there remains no entanglement between the systems of Alice and Bob.

As a resource for the modified NLQC task, choose $V_0'$ and $V_1'$ to be Alice and Bob's part of the tripartite state, and $X_0'$ and $X_1'$ to be copies of Charlie's port teleportation results. Note that there is no entanglement between $V_0'$ and $V_1'$. The protocol proceeds by Alice and Bob performing the actions above with their input systems. Bob then applies the unitary to be implemented on all $N^3$ of his ports, and sends all of the $A$ parts of these to Alice using the one round of communication. Alice and Bob then receive the value of the correct port from the information in $X_0'$ and $X_1'$.

\bibliographystyle{unsrtnat}
\bibliography{biblio}

\end{document}